\documentclass[11pt,letterpaper]{article}

\usepackage[margin=1in]{geometry}\usepackage{graphicx} 
\usepackage{subcaption}
\usepackage{hyperref} 
\usepackage{amsmath}
\usepackage{amssymb}
\usepackage{amsfonts}
\usepackage{amsthm,thmtools}
\usepackage{fullpage}
\usepackage{enumitem}
\usepackage{algorithm}
\usepackage{xcolor}
\usepackage{booktabs}
\usepackage{cite}
\usepackage[capitalise,noabbrev]{cleveref}

\crefname{line}{Line}{Lines}
\Crefname{line}{Line}{Lines}
\usepackage[colorinlistoftodos]{todonotes}
\setuptodonotes{backgroundcolor=blue!10, bordercolor=blue!10, linecolor=blue!10, textcolor=blue!50}

\numberwithin{equation}{section}

\usepackage{algpseudocode}
\algrenewcommand\algorithmicrequire{\textbf{Input:}}
\algrenewcommand\algorithmicensure{\textbf{Output:}}

\declaretheorem[style=plain,numberwithin=section,name=Theorem]{theorem}
\declaretheorem[style=plain,sibling=theorem,name=Lemma]{lemma}
\declaretheorem[style=plain,sibling=theorem,name=Proposition]{proposition}
\declaretheorem[style=plain,sibling=theorem,name=Corollary]{corollary}

\declaretheorem[style=plain,sibling=theorem,name=Observation]{observation}

\declaretheorem[style=definition,sibling=theorem,name=Definition,qed=$\blacksquare$]{definition}
\declaretheorem[style=definition,sibling=theorem,name=Example,qed=$\blacksquare$]{example}

\declaretheorem[style=definition,sibling=theorem,name=Remark,qed=$\blacksquare$]{remark}

\crefname{theorem}{Theorem}{Theorems}
\crefname{proposition}{Proposition}{Propositions}
\crefname{lemma}{Lemma}{Lemmas}
\crefname{exmp}{Example}{Examples}
\crefname{corollary}{Corollary}{Corollaries}
\crefname{claim}{Claim}{Claims}
\crefname{remark}{Remark}{Remarks}
\crefname{section}{Section}{Sections}
\crefname{observation}{Observation}{Observations}
\crefname{definition}{Definition}{Definitions}

\DeclareMathOperator{\cl}{cl}
\newcommand{\M}{{\mathcal{M}}}
\newcommand{\I}{{\mathcal{I}}}
\newcommand{\B}{{\mathcal{B}}}
\newcommand{\Z}{{\mathbb{Z}}}

\DeclareMathOperator{\profit}{profit}
\DeclareMathOperator{\rank}{rank}

\title{A Matroid Generalization of Weighted Popular Matching: Characterization and Algorithm}

\author{
  Yuni Iwamasa\thanks{Kobe University, Japan, \href{mailto:iwamasa@math.kobe-u.ac.jp}{\texttt{iwamasa@math.kobe-u.ac.jp}}.}
  \and
  Yusuke Kobayashi\thanks{Kyoto University, Japan, \href{mailto:yusuke@kurims.kyoto-u.ac.jp}{\texttt{yusuke@kurims.kyoto-u.ac.jp}}.}
  \and
  Ryoga Mahara\thanks{The University of Tokyo, Japan, \href{mailto:mahara@mist.i.u-tokyo.ac.jp}{\texttt{mahara@mist.i.u-tokyo.ac.jp}}.}
  \and
  Kenjiro Takazawa\thanks{Hosei University, Japan, \href{mailto:takazawa@hosei.ac.jp}{\texttt{takazawa@hosei.ac.jp}}.}
}
\date{}

\begin{document}

\maketitle

\begin{abstract}
We study a problem that unifies two major generalizations of popular matchings in the one-sided preference model.
In a bipartite graph in which each vertex, or applicant, on one side has preferences over the adjacent vertices, 
a matching is called popular if it does not lose a head-to-head election against any other matching. 
Since the seminal work of Abraham, Irving, Kavitha, and Mehlhorn, 
who gave the first polynomial-time algorithm for finding a popular matching, 
numerous generalizations of the popular matching problem have been investigated.
Two prominent directions are weighted popular matchings, 
in which each applicant is assigned a voting weight, 
and popularity problems under matroid constraints. 
The former admits a polynomial-time algorithm due to Mestre, 
while the most general form of the latter is the popular common independent set problem,
for which a polynomial-time algorithm is given by Kavitha, Makino, Schlotter, and Yokoi.

In this paper, we introduce the weighted popular common independent set problem, 
which simultaneously generalizes weighted popular matchings and popular common independent sets. 
Our main result is the first polynomial-time algorithm for this problem. 
A key ingredient of the algorithm is a solution-independent characterization of weighted popular common independent sets,
which we establish in this paper.
As consequences of our algorithmic result, 
we obtain polynomial-time algorithms for several previously unresolved problems, 
including weighted popular matchings with matroid constraints and weighted popular branchings.
\end{abstract}

\thispagestyle{empty}

\newpage
\setcounter{page}{1}

\section{Introduction}

The study of \emph{matching problems with preferences} \cite{Man13} lies at the intersection of combinatorial optimization, algorithmic game theory, and market design. 
A classical solution concept in two-sided matching markets is \emph{stability}, which rules out local pairwise deviations in the form of blocking pairs. 
\emph{Popularity}, 
introduced by G\"{a}rdenfors~\cite{Gar75}, offers a different solution concept based on voting-theoretic optimality: a matching is popular if it does not lose a head-to-head election against any other matching.  
Therefore, 
a popular matching can be viewed as a matching that satisfies a Condorcet-type 
criterion and achieves a form of global optimality. 

A principal model of popular matching is the \emph{one-sided preference model}. 
In this model, 
we are given a bipartite graph $G=(V,U;E)$.
Each element $v\in V$ is called an {\em applicant} or a \emph{voter}, 
and each element $u \in U$ is called a \emph{post}. 
Each applicant $v\in V$ has a total order $\succ_v$ over the posts adjacent to $v$, 
or equivalently, over the edges incident to $v$, 
while the posts themselves have no preferences. 
Let $M,M'\subseteq E$ be matchings in $G$. 
We say that {\em $v \in V$ prefers $M$ to $M'$} if 
$M(v)\succ_v M'(v)$, 
where $M(v)$ (resp.\ $M'(v)$) denotes the edge incident to $v$ in $M$ (resp.\ $M'$), 
which may be empty. 
Let $\Delta(M, M')$ denote the difference in preferences between $M$ and $M'$, that is, 
\[
\Delta(M, M')
:= \left|\{v \in V\mid v \text{ prefers } M \text{ to } M' \}\right|
 - \left|\{v \in V\mid v \text{ prefers } M' \text{ to } M \}\right|.
\]
A matching $M$ is called \emph{popular} if 
$\Delta(M, M') \ge 0$ for all matchings $M'$ in $G$.




This model captures assignment problems such as the allocation of
houses, jobs, or public resources to applicants. 
In 2005, Abraham, Irving, Kavitha, and Mehlhorn initiated the algorithmic
study of popular matchings in this setting \cite{AIKM05,AIKM07}. 
They revealed a useful characterization of popular matchings in 
terms of the \emph{first} and \emph{second edges}, 
which led to the design of polynomial-time algorithms 
for finding a popular matching. 

These results stimulated 
the study of popular matchings for several generalized models and variants, 
including 
popular matchings under preferences with ties \cite{AIKM05,AIKM07}, 
capacitated popular matchings \cite{MS06,Pal14}, 
optimal popular matchings
\cite{BK19,FK22,FKPZ19,Kav14,Kav20,KN09,MI11}, 
and popular branchings \cite{KKMSS22b,NT23}.
The \emph{two-sided preference model} \cite{HK13} and the \emph{roommate model} \cite{BKK25,BIM10,Chu00} have also been studied, along with similar generalizations. 

In this paper, we focus on two extensions in the one-sided preferences setting: 
\emph{weighted popular matchings} \cite{Mes06,Mes14} and 
\emph{popular common independent sets}
\cite{CKTY25,Kam16,Kam17,Kam20,KMSY25}. 
Our main contribution is to provide the first polynomial-time algorithm for 
finding a \emph{weighted popular common independent set}, a model that simultaneously generalizes both of the above extensions.

\subsection{Our Result}


The weighted popular common independent set problem has two main features: 
(1) each voter has a voting power, represented by a weight; 
and 
(2) bipartite matching is generalized to matroid intersection. 
In this problem, 
we are given two matroids on the same ground set $E$; see Section~\ref{sec:matroid} for the definition of a matroid.
One of them is a partition matroid $\M_{\mathrm{part}} = (E, \mathcal{I}_{\mathrm{part}})$ defined by  
partition classes $E_v$ indexed by the elements $v$ of a finite set $V$. 
That is, $E = \bigsqcup_{v \in V} E_v$ and 
\[
\mathcal{I}_{\mathrm{part}} := \{ X\subseteq E \mid |X\cap E_v| \le 1 \text{ for all } v \in V \}.
\]
The other matroid is an arbitrary matroid $\M = (E, \mathcal{I})$, which is given by an independence oracle.
Each element $v\in V$ is called an {\em applicant} or a \emph{voter}. 
Each applicant $v\in V$ is assigned a positive weight $w(v) \in \mathbb{R}_{> 0}$, and has a total order $\succ_v$ over the elements of $E_v$. 
Note that this is in contrast to the model of optimal popular matchings, 
in which weights are assigned to the elements of $E$. 

Intuitively, 
$w(v)$ represents the voting power of $v$, 
and $\succ_v$ represents the preferences of $v$ over $E_v$. 
A popular common independent set is a common independent set of $\M_{\mathrm{part}}$ and $\M$ that does not lose to any other common independent set in a head-to-head election, 
which is formally defined as follows. 
For any pair of common independent sets $I$ and $I'$ of the matroids $\M_{\mathrm{part}}$ and $\M$, we say that {\em $v \in V$ prefers $I$ to $I'$} if either 
\begin{itemize}
    \item $I \cap E_v \neq \emptyset$ and $I' \cap E_v = \emptyset$, or
    \item $I \cap E_v = \{e\}$ and $I' \cap E_v = \{e'\}$ with $e \succ_v e'$.
\end{itemize}
For any pair of common independent sets $I$ and $I'$, let $\Delta_w(I, I')$ denote the weighted difference in preferences between $I$ and $I'$, that is, 
\[
\Delta_w(I, I')
:= \sum_{v \in V\colon v \text{ prefers } I \text{ to } I'} w(v)
 - \sum_{v \in V\colon v \text{ prefers } I' \text{ to } I} w(v) .
\]

\begin{definition}\label{def:popular}
    A common independent set $I$ of $\M_{\mathrm{part}}$ and $\M$ is \emph{popular} if $\Delta_w(I, I') \ge 0$ for all common independent sets $I'$.
\end{definition}

The \emph{weighted popular common independent set problem} is to find a popular common independent set, 
or determine that no such set exists.
The main contribution of this paper is to present the first polynomial-time algorithm for this problem. 


\begin{theorem}\label{thm:main}
 There exists a polynomial-time algorithm for the weighted popular common independent set problem. 
\end{theorem}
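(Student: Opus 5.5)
The plan is to follow the route that worked in the two special cases, Mestre's weighted popular matchings and the popular common independent sets of Kavitha, Makino, Schlotter, and Yokoi. First, derive a characterization of popularity that does not refer to the candidate solution. Then turn that characterization into a matroid-intersection feasibility question that can be decided in polynomial time.

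\textbf{Step 1: popularity via an exchange graph.} For a common independent set $I$, popularity is equivalent to the absence of an improving exchange. Consider the exchange graph $D(I)$ of $\M_{\mathrm{part}}$ and $\M$ with respect to $I$. Each arc $e\to f$ is labeled by the change in $v$'s vote when $v$ moves from $I\cap E_v$ to $f$, namely $+w(v)$, $0$ or $-w(v)$. The first claim to establish is: $I$ is popular if and only if $D(I)$ has no positive-weight alternating cycle and no positive-weight path from a free element. Use the standard decomposition of $I\triangle I'$ into exchange paths and cycles of the two matroids to show that $\Delta_w(I,I')$ is the sum of the path and cycle weights. Because weights are attached to voters rather than to elements, I expect the argument to be essentially the unweighted one with a weighted potential.

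\textbf{Step 2: a solution-independent characterization.} By LP duality or a potential argument, Step 1 should translate into a structural condition. For each $v$, define a first element $f(v)$, the top choice of $v$ among elements not spanned by the closure of better elements. Define a second element $s(v)$ in the same way relative to a weighted threshold. The weights enter by ranking voters by $w(v)$: heavier voters must be served by their first-class elements first, analogous to Mestre's layering by weight classes. Concretely, the plan is to process the distinct weight values $w_1>w_2>\dots$ in decreasing order. At each level, contract the span already forced by heavier voters, and compute an $f$/$s$ pair in the contracted matroid $\M/\cl(\cdot)$. The characterization would then read: $I$ is popular if and only if
\begin{itemize}
\item every voter's element in $I$ lies in its allowed set $\{f(v),s(v)\}$ (suitably generalized to sets at each level),
\item for every level $k$, $I$ restricted to the voters of weight at least $w_k$ spans the first-element set of that level, and
\item $I$ is a basis of $\M$ restricted to the allowed elements.
\end{itemize}
The main obstacle is proving the "only if" direction. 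One must show that any popular $I$ obeys this layered structure, i.e., that a heavy voter cannot be traded off against several light voters via a long exchange path. I would try first to show that in a popular $I$ every optimal dual potential is constant on weight classes. This would let an induction on the number of distinct weights reduce to the unweighted KMSY characterization at each level.

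\textbf{Step 3: the algorithm.} With the characterization in hand, a popular common independent set exists if and only if a certain matroid intersection exists. One side is the partition matroid restricted to allowed elements, with lower-bound/spanning requirements per level. The other side is $\M$ truncated to force spanning of the first-element sets. These requirements can be encoded as a weighted matroid intersection: give first elements large lexicographic rewards per level and check whether the optimum attains the required rank in every level. The $f$, $s$, and closure sets are computed greedily with the independence oracle. The weighted matroid intersection is solved in polynomial time, and its output is either returned or used to certify that no popular set exists. This proves \cref{thm:main}.
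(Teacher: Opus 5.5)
\textbf{There is a genuine gap: the ``if'' direction of your Step~2 characterization is false.} Your Steps~1 and~3 are in line with the paper. The cycle criterion is \Cref{lem:negativecycle}, and the lexicographic weighted matroid intersection is essentially \Cref{prop:algo:well-formed}. The direction you single out as the main obstacle (popular $\Rightarrow$ layered) is true. It is \Cref{prop:populartowellformed}, and it does need a careful exchange argument.

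The converse, which you take for granted, fails. With $f,s$ defined level by level as you describe, every well-formed common base satisfies your three bullets. Yet in the example of Figure~\ref{fig:1}, the well-formed base $B=\{a_1,b_2,c_2,d_1\}$ loses to $B'=\{L(a),b_1,c_1,d_1\}$ with $\Delta_w(B,B')=3-2-2=-1$. There the weight-$3$ voter $a$ is pushed down so that $b$ can move above $f(b)$ to $b_1\in\cl(F_{\le 1})$, and $c$ can move from $s(c)$ to $f(c)$.

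So in a layered solution a heavy voter \emph{can} be traded against several lighter ones. Your per-level conditions cannot detect this, and neither can a level-by-level reduction to the unweighted characterization via dual potentials constant on weight classes, because the improving exchange couples two different weight levels. This is also why Mestre's characterization is well-formedness \emph{plus} a pruning step, not well-formedness alone. As a result, your Step~3 may output a non-popular set, and the existence of a well-formed base does not imply that a popular one exists.

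\textbf{The missing idea} is a solution-independent description of the additional elements of $F\cup S$ that every popular set must avoid. The paper obtains it via promotion paths $(x_1,y_1,\dots,x_\ell,y_\ell)$ in $D(B)$ with $x_k\prec_{v_k}y_k$. Each has profit $\alpha_B(x_1)+\sum_k w(v_k)-\beta_B(y_\ell)$. Here $\alpha_B(x_1)$ is the largest weight of a voter who can be promoted from $s$ to $f$ into the slot freed by $x_1$. Likewise $\beta_B(y_\ell)$ is the smallest weight of a voter whose first element can be dropped to make room for $y_\ell$.

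A well-formed $B$ is popular if and only if no promotion path has positive profit. One may restrict to \emph{essential} paths, whose arcs $(y,x)$ satisfy $\alpha_B(x)<\beta_B(y)$. The crucial lemma is that $\alpha_B(x)$, $\beta_B(y)$, and the set of essential arcs out of $y$ are the same for all $x$-including (resp.\ $y$-excluding) well-formed bases. Hence the set $X_+$ of positive elements is intrinsic to the instance.

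Popular sets are then exactly the $X_+$-excluding well-formed common bases. $X_+$ is computed by longest paths in an acyclic auxiliary digraph. Your Step~3 works once $X_+$ is added to the excluded set. In the example, the one-step path $(b_2,b_1)$ has profit $\alpha_B(b_2)+w(b)-\beta_B(b_1)=2+2-3=1$, so $b_2\in X_+$. The popular base $\{a_1,b_3,c_1,d_1\}$ indeed avoids $b_2$.
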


Table~1 summarizes known polynomial-time algorithms for popular matchings and their extensions, including the results obtained in this paper.
 
In particular, our algorithm yields the first polynomial-time algorithms for 
\emph{weighted popular matchings with matroid constraints} and for \emph{weighted popular branchings}, 
the design of which has been left open in \cite{Kam18} and \cite{NT23}, respectively; 
we discuss the details in the following subsections.

\begin{table}
\centering
\caption{Summary of polynomial-time algorithms for popular matchings and their extensions}
\label{tab:results}
\begin{tabular}{lcc}
\toprule
          & Unweighted & Weighted \\
\midrule
popular matching  & \cite{AIKM05,AIKM07} & \cite{Mes06,Mes14} \\
popular capacitated matching & \cite{MS06} & \cite{SM10} \\
popular matching with matroid constraints & \cite{Kam16} & {\bf this paper} \\
 & &  (partial result in~\cite{Kam18}) \\
popular branching & \cite{KKMSS22b} & {\bf this paper}\\
 & &  (partial result in~\cite{NT23}) \\
popular common independent set & \cite{KMSY25} & {\bf this paper} \\
\bottomrule
\end{tabular}
\end{table}

\subsection{Previous Work}



The weighted popular common independent set problem includes the following previous models as special cases. 

\paragraph{Weighted Popular Matchings.}

Weighted popular matchings are obtained by assigning weights to applicants, and constitute one of the most fundamental generalizations of popular matchings. 
Mestre \cite{Mes06,Mes14} introduced this model in 2006.
In the weighted setting, a matching is popular if no other matching receives a larger total weight of votes against it. Mestre \cite{Mes06,Mes14} gave 
a characterization of weighted popular matchings and a polynomial-time algorithm for finding such a matching. 
This work was later extended by Sng and Manlove \cite{SM10} to the capacitated matching setting.

\paragraph{Popular Matchings and Matroids.}

Several generalizations of popular matchings to matroid intersection have been investigated. 
Kamiyama \cite{Kam16} first introduced a model of \emph{popular matchings with matroid constraints}, 
which generalizes capacitated popular matchings. 
In this model, 
each post $u$ is endowed with 
a matroid $\M_u$ on the edges incident to $u$. 
An edge set $M$ is a solution if 
each applicant has at most one adjacent edge in $M$, 
and 
the set of edges in $M$ incident to each post $u$ is independent in $\M_u$. 
For this model, 
Kamiyama \cite{Kam16} extended the characterization and polynomial-time algorithm for popular matchings by \cite{AIKM05,AIKM07}.

Later, the model of \emph{popular branchings} is introduced by Kavitha, Kir\'{a}ly, Matuschke, Schlotter, and Schmidt-Kraepelin~\cite{KKMSS22b}, 
in which popularity is defined for the branchings in a digraph. 
They provided a characterization 
of popular branchings in terms of \emph{dual certificates}, 
and a polynomial-time algorithm for finding a popular branching. 

A direct connection between popular matchings and popular branchings had not been described explicitly 
until Kavitha, Makino, Schlotter, and Yokoi~\cite{KMSY25} presented a model of \emph{popular common independent sets}. 
This model discusses popularity of common independent sets in matroid intersection, 
and amounts to a common generalization of the models of 
popular matchings with matroid constraints and popular branchings. 
The authors presented a characterization and a polynomial-time algorithm for popular common independent sets \cite{KMSY25}. 

\paragraph{Weighted Popular Matchings and Matroids.}

Weighted extensions of the above matroidal generalizations have also been studied. 
Nevertheless, a comprehensive understanding of such weighted settings is still missing, 
especially from an algorithmic viewpoint.

\emph{Weighted popular matchings with matroid constraints} were studied by Kamiyama \cite{Kam18}, 
who provided a characterization, while the design of an efficient algorithm 
for finding a weighted popular matching in this model was left open. 
\emph{Weighted popular branchings} have been studied in \cite{NT23}, 
in which a polynomial-time algorithm was obtained only for a restricted case. 

\bigskip

It is not difficult to see that each of the above models is 
a special case of that of weighted popular common independent sets. 
Thus, 
our algorithm for weighted popular common independent sets 
is applicable to all of the above models, 
and in particular 
resolves the algorithmic questions for 
weighted popular matchings with matroid constraints and for weighted popular branchings.

\subsection{Technical Challenges}

We here describe our analysis of why these algorithmic questions had remained open and where the main difficulty lies.

\paragraph{Establishing a solution-independent characterization.}
The characterizations for the models discussed above can be broadly classified into two categories. 
In one category, 
the conditions are formulated without explicitly involving the candidate solution $M$. 
Such characterizations can be exploited algorithmically; 
for instance, the characterization in \cite{AIKM05,AIKM07} directly yields an efficient algorithm.

In the other category, 
the popularity of a solution $M$ is characterized by 
a necessary and sufficient condition that explicitly involves the solution $M$ itself. 
Such characterizations typically state that a solution $M$ is popular 
if and only if it is a maximum-weight solution with respect to edge weights 
defined from $M$. 
While this type of characterization can be used to determine whether a given solution $M$ is popular, 
it does not immediately yield an efficient algorithm. 
Therefore, 
establishing a solution-independent characterization is important for designing algorithms. 
A connection between these two types of characterizations is analyzed in \cite{KT25}. 

For weighted popular matchings, 
the characterization in \cite{Mes06,Mes14} is solution-independent, 
while that of weighted popular matchings with matroid constraints \cite{Kam18} is solution-dependent. 
This appears to be one of the main reasons why the algorithm design for weighted popular matchings with matroid constraints has remained open.

\paragraph{Well-formedness, pruning algorithm, and promotion paths.}



As mentioned above, no extension of the solution-independent characterization for weighted popular matchings \cite{Mes06,Mes14} 
to the setting with matroid constraints has been obtained so far.
One possible reason is that the solution-independent characterization in \cite{Mes06,Mes14} is substantially more involved than its counterpart for the unweighted setting \cite{AIKM05,AIKM07}.

More specifically, 
this involvement can be explained as follows.
Mestre \cite{Mes06,Mes14} introduced the notion of \emph{well-formed} solutions. 
This notion corresponds to the characterization of popularity for many of the unweighted settings, 
while 
it is only 
a necessary condition 
for the weighted settings \cite{Kam18,Mes06,Mes14,SM10}. 
Mestre's characterization is described by this well-formedness and a \emph{pruning algorithm} applied to the input graph.
After the pruning process, a matching is popular if and only if it is well-formed.

To establish the correctness of this pruning algorithm, 
Mestre \cite{Mes06,Mes14} introduced the notion of \emph{promotion paths}, 
which are used to characterize improvements of a matching in terms of popularity.
Although the concept of promotion paths has been extended in subsequent work~\cite{Kam18,SM10}, 
the pruning algorithm itself has proved difficult to generalize to settings with matroid constraints.

The technical difficulty of the pruning algorithm may be one reason why, 
despite weighted popular matchings constituting a natural and fundamental generalization of popular matchings, 
subsequent research on the subject has remained relatively limited.


\bigskip

The above analysis highlights the importance and difficulty of a characterization that does not depend on a given solution. 
One of our main contributions, alongside the algorithmic results, is precisely such a characterization for weighted popular common independent sets. This characterization is derived from several new technical ingredients, which we now outline.

\paragraph{Generalization of the well-formedness.}
We first extend the notion of well-formed solutions. 
For the weighted popular common independent set problem, we prove that our well-formedness indeed offers a necessary condition for popularity, 
extending 
the corresponding results in \cite{Kam18,Mes06,Mes14,SM10}. 
We remark that establishing this fact for weighted popular common independent sets 
is no longer a straightforward extension of the previous arguments and requires a more involved analysis. 

A standard approach for proving the necessity of the well-formedness is 
to show that any solution violating well-formedness can be improved, 
in the sense of head-to-head elections, by exchanging several elements.
Both the existing literature and the present work adopt this approach.
In matching problems and their extensions with matroid constraints, 
the exchanges that preserve feasibility can be identified relatively easily.
In contrast, in our setting, determining which exchanges maintain feasibility is more delicate.
To overcome this difficulty, we develop a nontrivial matroid-theoretic analysis that characterizes feasible exchanges 
and plays a central role in our proofs.


\paragraph{Refinement of the promotion paths.}
As mentioned above, the well-formedness is only 
a necessary condition for popularity. 
In order to obtain a necessary and sufficient condition, 
we 
generalize the notion of promotion paths 
to the setting of weighted popular common independent sets. 
Using these generalized promotion paths, we obtain a necessary and sufficient condition for popularity. 
However, this characterization is not well suited for algorithm design, 
because it is solution-dependent. 

To obtain a solution-independent characterization, 
a natural approach would be to generalize Mestre's pruning algorithm for weighted popular matchings~\cite{Mes06,Mes14}.
In the setting of weighted popular common independent sets, however, 
there is no underlying input graph on which such a pruning procedure can be performed, 
making it difficult to generalize the pruning-based approach.

To overcome this difficulty, we propose a completely different approach that does not rely on a pruning algorithm.
More specifically, we introduce the notion of \emph{essential promotion paths}, a constrained variant of promotion paths.
The development and analysis of this notion constitute the most technically intricate part of our work.
A key feature of essential promotion paths is that their existence is independent of the choice of solution under a mild assumption,
whereas the existence of promotion paths generally depends on the choice of solution.
By exploiting essential promotion paths, we derive a solution-independent characterization of weighted popular common independent sets,
which forms the foundation of our polynomial-time algorithm.



\subsection{Organization of the Paper}
The remainder of this paper is organized as follows.
In Section~\ref{sec:preliminary}, we review basic notation and properties of matroids and modify the instance to simplify the presentation.
In Section~\ref{sec:wellformed}, we define the concept of well-formedness and prove that it is a necessary condition for popularity.
In Section~\ref{se:algorithm}, we present a solution-independent characterization of popular common independent sets and develop an algorithm for finding a popular common independent set, if one exists, thereby proving Theorem~\ref{thm:main}. 
Some proofs are deferred to Section~\ref{sec:proofofkeytheorems}.
In Section~\ref{sec:conclusion}, we conclude this paper with several remarks and open problems.

\section{Preliminaries}
\label{sec:preliminary}

For a positive integer $k$, let $[k]$ denote $\{1,2,\ldots ,k\}$.
For a set $X$ and an element $e$, 
we denote $X \cup \{e\}$ and $X \setminus \{e\}$ by $X + e$ and $X-e$, respectively. 

\subsection{Matroid}
\label{sec:matroid}
A {\em matroid} is a pair $\M = (E, \I)$ of a finite set $E$ and 
a nonempty family $\I \subseteq 2^E$ of subsets of $E$ such that 
\begin{itemize}
    \item if $X \subseteq Y$ and $Y \in \I$, then $X \in \I$; and 
    \item if $X\in \I$, $Y \in \I$, and $|Y| > |X|$, then there exists $e \in Y \setminus X$ such that $X + e \in \I$. 
\end{itemize}
The sets in $\I$ are referred to as {\em independent sets}. 

For a matroid $\M=(E, \I)$, we define the {\em rank function} ${\rm rank} \colon 2^E \to \mathbb{Z}_{\ge 0}$ 
and the {\em closure operator} $\cl \colon 2^E \to 2^E$ by 
\begin{align*}
{\rm rank} (X) &:= \max \{ |Y| \mid Y \subseteq X,\ Y \in \I\}, \\
\cl (X) &:= \{ e \in E \mid {\rm rank} (X +e) = {\rm rank} (X) \}
\end{align*}
for any $X \subseteq E$. 
The closure operator is sometimes called the {\em span function} in the literature. 

An inclusion-wise maximal independent set is called a {\em base} of $\M$.
We denote by $\B$ the family of bases of $\M$.
It is well known that the family of bases satisfies the {\em simultaneous exchange property}; see e.g.~\cite[Theorem~39.12]{Schrijver2003}: 
\begin{quote}
    for any two bases $B_1$ and $B_2$ of $\M$ and any element $x \in B_1 \setminus B_2$, 
there exists an element $y \in B_2 \setminus B_1$ 
such that both $B_1 - x + y$ and $B_2 + x - y$ are bases of $\M$.
\end{quote} 
The following results on bases are well known and will be used later in this paper.

\begin{lemma}[\mbox{see~\cite[Theorem 39.13]{Schrijver2003}}]
\label{lem:uniquematching}
    Let $B$ be a base of a matroid $\M=(E, \I)$, let $X \subseteq B$, and let $Y \subseteq E \setminus B$. 
    If the bipartite graph $G=(X, Y; E')$ defined by $E':= \{ \{x, y\} \mid x \in X,\ y \in Y,\ B - x + y \in \B \}$ 
    contains a unique perfect matching, then $(B \setminus X) \cup Y$ is also a base of $\M$. 
\end{lemma}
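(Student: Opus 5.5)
We prove Lemma~\ref{lem:uniquematching} by induction on $k := |X| = |Y|$; the case $k=0$ is trivial and $k=1$ is the definition of $E'$. The plan is to find an ordering of the unique perfect matching $N = \{\{x_1,y_1\},\dots,\{x_k,y_k\}\}$ such that the bipartite graph is ``triangular'': $\{x_i, y_j\} \notin E'$ whenever $i < j$. We then swap one pair at a time and show that the exchange graph for the new base restricted to the remaining pairs is still triangular with the same diagonal.

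\textbf{Step 1 (triangular ordering).} A bipartite graph with a unique perfect matching $N$ has a vertex of degree one. Suppose not. Orient the edges of $N$ from $Y$ to $X$ and the non-matching edges from $X$ to $Y$. If every vertex of $X$ had an incident non-matching edge, we could walk from vertex to vertex and eventually close an $N$-alternating cycle. Symmetric difference with this cycle would give a second perfect matching, a contradiction. So some $x \in X$ has degree one, and its unique neighbour is its mate in $N$. Call this pair $(x_1, y_1)$, remove both vertices, and iterate on the remaining graph, which still has a unique perfect matching. This yields an ordering in which $x_i$ is adjacent to no $y_j$ with $j > i$. Equivalently, for $j > i$ we have $B - x_i + y_j \notin \B$, i.e., $x_i$ does not lie on the fundamental circuit $C(B, y_j)$.

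\textbf{Step 2 (first exchange).} It is most convenient to exchange in the order $k, k-1, \dots, 1$. Set $B' := B - x_k + y_k$, which is a base since $\{x_k,y_k\} \in E'$. For $j < k$ we must show that $B' - x_i + y_j \in \B$ holds if and only if $B - x_i + y_j \in \B$, for all $i, j < k$. This would show that the restricted graph on $X - x_k$ and $Y - y_k$ relative to $B'$ coincides with the original restriction. That restriction still has the unique perfect matching $N$ minus the pair $\{x_k, y_k\}$, so induction applies and gives that $(B' \setminus (X - x_k)) \cup (Y - y_k) = (B\setminus X)\cup Y$ is a base.

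\textbf{Main obstacle: Step 2.} We use the fact that $x_k \notin C(B, y_j)$ for $j < k$. (In the triangular order, $x_i \notin C(B,y_j)$ for $j>i$; that is the wrong direction, so we reverse the ordering so that the last-removed $x$ satisfies the needed condition.) Concretely, we choose the ordering by repeatedly peeling a $y$ of degree one, which the symmetric argument in Step 1 also provides. This gives $y_1$ adjacent only to $x_1$, and so on, so that $x_i \notin C(B, y_j)$ for $i > j$.

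Then we exchange the pair $(x_1,y_1)$ first. Since $x_1 \notin C(B,y_j)$ is not guaranteed, we instead exchange the pair $(x_k,y_k)$ with $x_k \notin C(B,y_j)$ for all $j<k$. Then $C(B,y_j) \subseteq B - x_k \subseteq B' + y_j$, so $C(B', y_j) = C(B, y_j)$ by uniqueness of the fundamental circuit. Hence the adjacency of $y_j$ to any $x_i$ with $i<k$ is unchanged.

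The key point to verify carefully is this uniqueness-of-circuit argument, i.e., that $B' + y_j$ contains exactly one circuit and that $C(B,y_j)$ is one.
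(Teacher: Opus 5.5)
Your final argument is correct and is the standard argument behind the cited result; the paper gives no proof of its own and simply defers to Schrijver's Theorem~39.13. You exchange a pair whose $X$-endpoint has degree one, use uniqueness of the fundamental circuit in $B'+y_j$ (immediate, since $B'$ is a base and $y_j\notin B'$) to see that the rest of the exchange graph is unchanged, and then induct.

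For the write-up, your Step~1 ordering already works if you exchange $(x_1,y_1)$ first, since $x_1$ has degree one, so the initially wrong Step~2 and the detour through degree-one $y$'s can be dropped. Also, the containment should read $C(B,y_j)-y_j\subseteq B-x_k$, hence $C(B,y_j)\subseteq B'+y_j$.
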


An inclusion-wise minimal non-independent set is called a {\em circuit} of $\M$. 
For a base $B$ of $\M$ and an element $e \in E \setminus B$, 
the set $B + e$ contains a unique circuit, denoted by $C(B, e)$, which is called the \emph{fundamental circuit} of $B$ and $e$. 
Note that $e' \in B$ belongs to $C(B, e) - e$ if and only if $B + e - e' \in \I$. 
The following lemma is a straightforward property of fundamental circuits, but we state it explicitly because it will be used multiple times later.

\begin{lemma}\label{lem:clcircuit}
    Let $B$ be a base of a matroid $\M=(E, \I)$, $e \in E \setminus  B$, and $X\subseteq E$.
    Then, $e \in \cl(B\cap X)$ if and only if $C(B, e)- e \subseteq X$.
\end{lemma}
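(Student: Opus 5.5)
We prove both directions directly from the definitions of closure and fundamental circuit. Since $B$ is a base and $e \notin B$, the set $B+e$ is dependent, and $C(B,e)$ is the unique circuit contained in $B+e$. Set $Z := B\cap X$; this set is independent, so $\rank(Z) = |Z|$.

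\emph{($\Leftarrow$)} Suppose $C(B,e)-e \subseteq X$. Since $C(B,e)-e \subseteq B$, we get $C(B,e)-e \subseteq Z$, and hence $C(B,e) \subseteq Z+e$. Thus $Z+e$ contains a circuit and is dependent. Every element of $Z$ is in the independent set $Z$, so $\rank(Z+e) \le |Z+e|-1 = |Z| = \rank(Z)$. Combined with monotonicity of rank, this gives $\rank(Z+e)=\rank(Z)$, i.e., $e\in\cl(Z)$.

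\emph{($\Rightarrow$)} Suppose $e \in \cl(Z)$. Then $\rank(Z+e)=\rank(Z)=|Z| < |Z+e|$, so $Z+e$ is dependent and contains some circuit $C$. As $Z$ is independent, $C$ must contain $e$. Moreover $C \subseteq Z+e \subseteq B+e$, so by uniqueness of the circuit in $B+e$ we have $C = C(B,e)$. Therefore $C(B,e)-e \subseteq Z \subseteq X$.

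The only point needing care is the uniqueness of the circuit in $B+e$, which the paper takes as the definition of $C(B,e)$. If one wanted to justify it, one would apply the circuit elimination axiom to two distinct circuits in $B+e$, both containing $e$; this produces a circuit inside $B$, contradicting the independence of $B$. Everything else is routine rank bookkeeping.
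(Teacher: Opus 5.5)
Your proof is correct and follows essentially the same route as the paper: both directions rest on the fact that, because $B\cap X$ is independent, $e\in\cl(B\cap X)$ holds exactly when $(B\cap X)+e$ is dependent, combined with $C(B,e)$ being the unique circuit in $B+e$. The only difference is that you spell out the rank bookkeeping behind that equivalence, which the paper states without comment.
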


\begin{proof}
    For the ``only if'' part, 
    suppose that $e \in \cl(B\cap X)$, that is, $(B \cap X) + e \not\in \I$. 
    Since $C(B, e)$ is the unique minimal non-independent set in $B + e$, we obtain $C(B, e) \subseteq (B \cap X) + e$. 
    Therefore, $C(B, e) - e \subseteq X$ holds. 

    For the ``if'' part, 
    suppose that $C(B, e)- e \subseteq X$. 
    This together with $C(B, e)- e \subseteq B$ shows that $C(B, e) \subseteq (B \cap X) + e$, and hence $(B \cap X) + e \not\in \I$. 
    Therefore, $e \in \cl(B\cap X)$ holds. 
\end{proof}

For more details on matroids, we refer the reader to~\cite{oxley2011matroid,Schrijver2003}. 





\subsection{Instance with Last Resort}
\label{sec:lastresort}


To simplify the description, we modify the instance of the weighted popular common independent set problem so that  
every popular common independent set is a common base of the matroids. 
This is achieved by adding a new element $L(v)$ to $E_v$ for each $v \in V$, where $L(v)$ is the least preferred element of $v$ (called the \emph{last resort}), 
and extending $\M_{\mathrm{part}}$ and $\M$ to $\M'_{\mathrm{part}}=(E', \I'_{\mathrm{part}})$ and $\M'=(E', \I')$ as follows: 
\begin{align*}
E' &:= E \cup \{L(v) \mid v \in V\}, \\
\I'_{\mathrm{part}} &:= \{ X \subseteq E' \mid |X \cap (E_v + L(v))| \le 1 \text{ for all } v \in V \}, \\ 
\I' &:= \{X \subseteq E' \mid X \cap E \in \I,\ |X| \le |V|\}.
\end{align*}
It is not difficult to see that both $\M'_{\mathrm{part}}=(E', \I'_{\mathrm{part}})$ and $\M'=(E', \I')$ are matroids.
Moreover, this modification does not affect the problem, since a solution for $\M_{\mathrm{part}}$ and $\M$
corresponds to a solution for $\M'_{\mathrm{part}}$ and $\M'$, and vice versa, by adding or removing the last resorts. 

In what follows, the objects in the modified instance are simply denoted by $\M_{\mathrm{part}}$, $\M$, and $E_v$, etc.,  
and we suppose that each $E_v$ contains the last resort $L(v)$. 
The following lemma follows immediately from the definition of $L(v)$.


\begin{lemma}\label{ob:assumebase}
Assume that each $E_v$ contains the last resort $L(v)$.
\begin{enumerate}[label={\textup{(\arabic*)}}]
    \item Every popular common independent set of $\M_{\mathrm{part}}$ and $\M$ is a common base of $\M_{\mathrm{part}}$ and $\M$.
    \item A common base $B$ of $\M_{\mathrm{part}}$ and $\M$ is popular if and only if
    $\Delta_w(B, B') \geq 0$ for any common base $B'$ of $\M_{\mathrm{part}}$ and $\M$.
\end{enumerate}
\end{lemma}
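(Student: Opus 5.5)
For part (1), I would argue by contradiction: suppose $I$ is a popular common independent set in the modified instance that is not a common base of $\M_{\mathrm{part}}$ and $\M$. The first step is to note that in the modified instance both matroids have rank exactly $|V|$. For $\M_{\mathrm{part}}$ this is immediate, since $\{L(v) \mid v \in V\}$ is independent and every independent set meets each of the $|V|$ classes at most once. For $\M$, the same set $\{L(v)\mid v\in V\}$ is independent: its intersection with $E$ is empty, hence in $\I$, and it has size $|V|$; conversely the constraint $|X|\le |V|$ caps the rank. Consequently a common independent set is a common base if and only if $|I|=|V|$, that is, if and only if $I\cap E_v\neq\emptyset$ for every $v$.

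So if $I$ is not a common base, there is some $v$ with $I\cap E_v=\emptyset$. I would take $I' := I + L(v)$. Then $I'\in\I_{\mathrm{part}}$ trivially. Also $I'\cap E = I\cap E\in\I$ and $|I'| = |I|+1\le |V|$, so $I'\in\I$. Every voter other than $v$ receives the same element under $I$ and $I'$, while $v$ prefers $I'$ to $I$. Hence $\Delta_w(I,I') = -w(v) < 0$ because weights are positive, which contradicts popularity. This is the only place where the specific form of $\I'$ matters, and the check is routine.

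For part (2), the ``only if'' direction is trivial, since common bases are common independent sets. For the ``if'' direction, let $B$ be a common base with $\Delta_w(B,B')\ge 0$ for all common bases $B'$, and let $I'$ be an arbitrary common independent set. I would extend $I'$ to a common base $B'$ by adding $L(u)$ for every $u$ with $I'\cap E_u=\emptyset$. Feasibility follows as before: the intersection with $E$ is unchanged, and the size rises to exactly $|V|$.

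Next I would compare the votes voter by voter. If $u$ is covered by $I'$, then $u$ votes identically in $(B,I')$ and $(B,B')$. If $u$ is uncovered by $I'$, then $u$ prefers $B$ to $I'$, because $B$ covers everyone. In $(B,B')$, the voter $u$ is either tied, when $B\cap E_u=\{L(u)\}$, or prefers $B$, since $L(u)$ is least preferred. In either case $u$'s contribution to $\Delta_w(B,I')$ is at least its contribution to $\Delta_w(B,B')$. Hence $\Delta_w(B,I')\ge\Delta_w(B,B')\ge 0$, so $B$ is popular.

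I do not expect a real obstacle here. The only care needed is the rank computation for the modified $\M$ and the observation that the last resort is strictly below every other element of $E_v$.
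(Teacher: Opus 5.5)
Your proposal is correct and follows essentially the same argument as the paper. In (1) you pad the non-base with a last resort; you add a single $L(v)$ where the paper adds all the missing ones, which is immaterial. In (2) you extend $I'$ by last resorts to a common base and compare votes voter by voter, exactly as the paper does, and your explicit check that both modified matroids have rank $|V|$ merely makes precise a step the paper leaves implicit.
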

\begin{proof}
    For a common independent set $I$,
    let $V(I) := \{ v \in V \mid I \cap E_v \neq \emptyset \}$.

    (1)
    Assume to the contrary that there exists a popular common independent set $I$ that is not a common base, i.e., $|I| \le |V|-1$.
    Then, the set $V \setminus V(I)$ is nonempty.
    In this case,
    for the common base $B(I) := I \cup \{ L(v) \mid v \in V \setminus V(I) \}$,
    we have $\Delta_w(I, B(I)) = - \sum_{v \in V \setminus V(I)} w(v) < 0$,
    which contradicts the popularity of $I$. 

    (2)
    The only-if part is clear.
    For the if part, 
    suppose that $\Delta_w(B, B') \geq 0$ holds for any common base $B'$
    and take an arbitrary common independent set $I$.
    It suffices to see $\Delta_w(B, I) \geq 0$.

    As in the above argument, we construct the common base $B(I) = I \cup \{ L(v) \mid v \in V \setminus V(I) \}$.
    Then, we obtain
    \begin{align*}
        \Delta_w(B, I)
        &= \sum_{v \in V:\, v \text{ prefers } B \text{ to } I} w(v)
        - \sum_{v \in V:\, v \text{ prefers } I \text{ to } B} w(v)\\
        &= \sum_{v \in V(I):\, v \text{ prefers } B \text{ to } B(I)} w(v) + \sum_{v \in V \setminus V(I)} w(v) - \sum_{v \in V:\, v \text{ prefers } B(I) \text{ to } B} w(v)\\
        &\ge \Delta_w(B, B(I)) \ge 0,
    \end{align*}
    which completes the proof.
\end{proof}

With this lemma, we mainly focus on a common base of $\M_{\mathrm{part}}$ and $\M$.
For any common base $B$ of $\M_{\mathrm{part}}$ and $\M$, 
and any $v\in V$,    
the set $B \cap E_v$ contains exactly one element, which we denote by $B(v)$. 
That is, $B(v)$ is the element assigned to $v$ in $B$.



Without loss of generality, we may assume that $\{e\} \in \I$ for every $e \in E$, i.e., $\M$ has no loops, 
since otherwise we can remove $e$ without affecting the problem.

Throughout the paper, if no confusion may arise, a common base (resp. common independent set) of $\M_{\mathrm{part}}$ and $\M$ is simply called a common base (resp. common independent set). 
We use ${\rm rank}(X)$, $\cl(X)$, $C(B, e)$, and $\B$ to denote 
the rank, the closure, the fundamental circuit, and the family of bases with respect to the matroid $\M$, respectively.

\section{Well-formed Common Base}
\label{sec:wellformed}

The objective of this section is to introduce the concept of well-formedness and to show that it is a necessary condition for a common base to be popular; see Proposition~\ref{prop:populartowellformed} below.

Let $w_1 > w_2 > \cdots > w_p$ be the distinct weights assigned to applicants of $V$, i.e., 
\[
\{w(v) \mid v \in V\} = \{w_1, w_2, \ldots, w_p\}. 
\]
For each $i \in [p]$, denote 
\[
V_i = \{v \in V \mid w(v) = w_i\}. 
\]
We also denote $V_0 = V_{\le 0} = \emptyset$ and $V_{\le i} = V_1 \cup \cdots \cup V_i$ for each $i \in [p]$.

For each $v \in V$, we inductively define its \emph{first element} in $E_v$ in the following way.
Let $F_0 = F_{\le 0} = \emptyset$. 
For each $i \in [p]$ and each $v\in V_i$, assuming that $F_{\le i-1} \subseteq E$ has been defined, 
we define the first element $f(v) \in E_v$ to be the most preferred element with respect to $\succ_v$ in $E_v \setminus \cl(F_{\le i-1})$. 
We then set 
\begin{align*}
F_i := \{f(v) \mid v \in V_i\}, \qquad F_{\le i} := F_1 \cup \cdots \cup F_i, \qquad F := F_{\leq p} = \{ f(v) \mid v \in V \}.
\end{align*}
Note that the first element $f(v)$ is well defined for every $v \in V$, since $L(v) \in E_v \setminus \cl(F_{\le i-1})$ for each $i \in [p]$ and each $v \in V_i$, where $L(v)$ is the last resort of $v$.


Next, for each $i \in [p]$ and each $v \in V_i$, 
we define the \emph{second element} $s(v) \in E_v$ to be the most preferred element with respect to $\succ_v$ in $E_v \setminus \cl(F_{\le i})$.
If $E_v \setminus \cl(F_{\le i}) = \emptyset$, then $v$ has no second element, and we denote $s(v) = \bot$. 
Let $S$ be the set of second elements, i.e.,
\begin{align*}
    S := \{ s(v) \mid v \in V,\ s(v) \neq \bot \}.
\end{align*}


\begin{figure}[!t]
    \centering
    \begin{subfigure}{0.48\linewidth}
        \centering
        \includegraphics[width=0.5\linewidth]{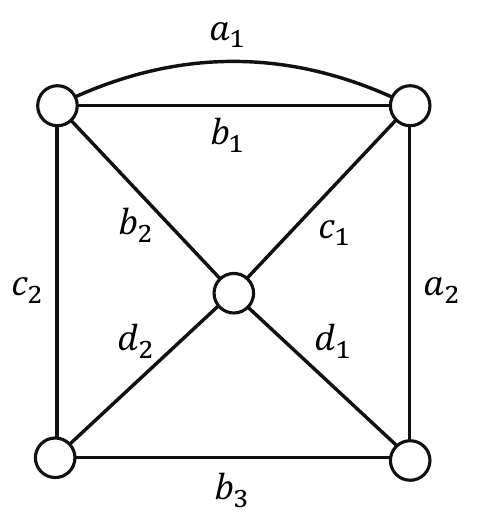}
        \caption{The graph defining a graphic matroid.}
        \label{fig:sub1}
    \end{subfigure}
    \hspace{0.01\linewidth}
    \begin{subfigure}{0.48\linewidth}
        \centering
        \includegraphics[width=0.5\linewidth]{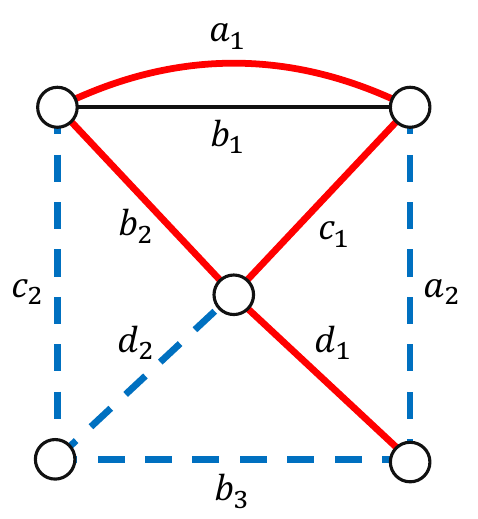}
        \caption{Thick red edges and blue dashed edges represent the first and second elements, respectively.}
        \label{fig:sub2}
    \end{subfigure}

    \vspace{0.1em}

    \begin{subfigure}{0.48\linewidth}
        \centering
        \includegraphics[width=0.5\linewidth]{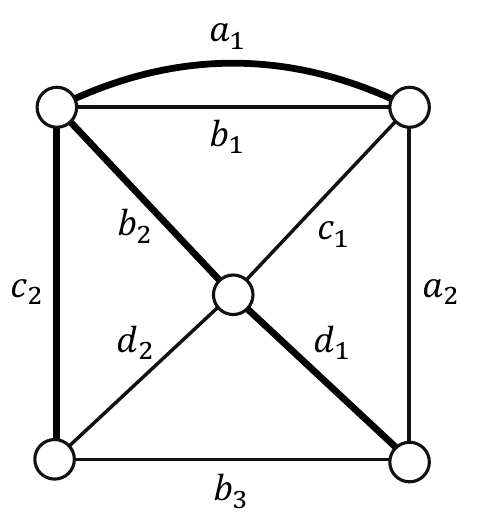}
        \caption{A well-formed common base that is not popular.}
        \label{fig:sub3}
    \end{subfigure}
    \hspace{0.01\linewidth}
    \begin{subfigure}{0.48\linewidth}
        \centering
        \includegraphics[width=0.5\linewidth]{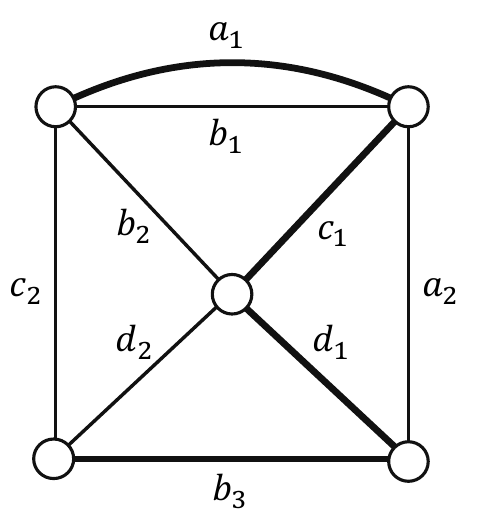}
        \caption{A popular common base.}
        \label{fig:sub4}
    \end{subfigure}

    \caption{An example illustrating well-formedness and popularity.}
    \label{fig:1}
\end{figure}

\begin{definition}\label{def:wellformed}
    A common base $B$ is \emph{well-formed} if the following conditions hold:
    \begin{itemize}
        \item $B \subseteq F \cup S$, i.e., $B(v) \in \{f(v), s(v)\}$ for all $v\in V$; and
        \item $F_{\le i} \subseteq \cl(B\cap F_{\le i})$ for all $i\in [p]$.
        \qedhere
    \end{itemize}
\end{definition}

\begin{remark}\label{rem:clequivalent}
For each $i \in [p]$, the condition $F_{\le i} \subseteq \cl(B\cap F_{\le i})$ is equivalent to $\cl(F_{\le i}) = \cl(B\cap F_{\le i})$. 
It is also equivalent to $|B \cap F_{\le i}| = {\rm rank} (F_{\le i})$; see Lemma~\ref{lem:remark42}. 
\end{remark}
The second condition of well-formedness can be interpreted as follows:
first, $B$ is an independent set of $\M$ that contains as many elements as possible from $F_1$;
next, subject to this condition, it contains as many elements as possible from $F_2$;
then, subject to these two conditions, it contains as many elements as possible from $F_3$;
and so on.
\begin{example}
Figure~\ref{fig:1} gives an example illustrating the notions of well-formedness and popularity.
There are four applicants $a,b,c,d$, and the partition classes are
$E_a=\{a_1,a_2,L(a)\}$, $E_b=\{b_1,b_2,b_3,L(b)\}$, $E_c=\{c_1,c_2,L(c)\}$, and $E_d=\{d_1,d_2,L(d)\}$.
The matroid $\M$ is obtained from the graphic matroid of the graph shown in Figure~\ref{fig:sub1} by adding the last-resort elements.
The weights of the applicants are $w(a)=3$, $w(b)=w(c)=2$, and $w(d)=1$.

The preferences of the applicants are given in the following table:
$$
\begin{array}{c|c}
a & a_1 \succ_a a_2 \succ_a L(a) \\
b & b_1 \succ_b b_2 \succ_b b_3 \succ_b L(b) \\
c & c_1 \succ_c c_2 \succ_c L(c) \\
d & d_1 \succ_d d_2 \succ_d L(d)
\end{array}
$$

Then, the first elements are $f(a)=a_1$, $f(b)=b_2$, $f(c)=c_1$, and $f(d)=d_1$, which are shown as thick red edges in Figure~\ref{fig:sub2}.
The second elements are $s(a)=a_2$, $s(b)=b_3$, $s(c)=c_2$, and $s(d)=d_2$, which are shown as blue dashed edges in Figure~\ref{fig:sub2}.

First, consider a common base $B=\{a_1,b_2,c_2,d_1\}$, which is represented by the thick black edges in Figure~\ref{fig:sub3}.
This set $B$ is well-formed.
However, $B$ is not popular.
Indeed, let $B'=\{L(a),b_1,c_1,d_1\}$.
Then $B'$ is also a common base, and comparing $B$ with $B'$ gives
$$
\Delta_w(B,B')=w(a)-w(b)-w(c)=3-2-2=-1<0.
$$
Thus, $B$ is not popular.
On the other hand, $B^*=\{a_1,b_3,c_1,d_1\}$ is a popular common base in this instance, and
it is represented by the thick black edges in Figure~\ref{fig:sub4}.
\end{example}


In what follows in this section, we prove the following proposition. 

\begin{proposition}\label{prop:populartowellformed}
Every popular common base is well-formed.
\end{proposition}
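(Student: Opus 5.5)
We argue by contraposition: given a common base $B$ that is not well-formed, we construct a common base $B'$ with $\Delta_w(B,B')<0$. Fix a popular common base $B$. We prove by induction on $i\in[p]$ the joint statement $P(i)$: $F_{\le i}\subseteq \cl(B\cap F_{\le i})$ and $B(v)\in\{f(v),s(v)\}$ for all $v\in V_{\le i}$. Assume $P(i-1)$. By Remark~\ref{rem:clequivalent}, this gives $\cl(B\cap F_{\le i-1})=\cl(F_{\le i-1})$. We first show $F_{\le i}\subseteq\cl(B\cap F_{\le i})$, and then, with this in hand, that $B(v)\in\{f(v),s(v)\}$ for $v\in V_i$.

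\textbf{Step 1: $F_{\le i}\subseteq\cl(B\cap F_{\le i})$.} Suppose this fails. Then $|B\cap F_{\le i}|<\rank(F_{\le i})$, so by augmentation there is $v\in V_i$ with $f(v)\notin B$ and $(B\cap F_{\le i})+f(v)\in\I$. Consider the fundamental circuit $C(B,f(v))$. Since $f(v)\notin \cl(B\cap F_{\le i})$, Lemma~\ref{lem:clcircuit} yields some $e\in C(B,f(v))-f(v)$ with $e\notin F_{\le i}$. Let $u$ be the applicant with $B(u)=e$, and form $B':=B-B(v)-e+f(v)+L(u)$ (if $u=v$, simply $B-B(v)+f(v)$).

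We must check three things.
\begin{enumerate}
\item $B'$ is a base of $\M$. The set $B-e+f(v)$ is a base, and $L(u)$ can replace $B(v)$ in it provided the last resorts are free, i.e.\ $L(u)$ lies outside the span of everything except the size bound. This holds by the definition of $\I'$.
\item Applicant $v$ strictly improves. Since $B(v)\neq f(v)$, either $B(v)\succ_v f(v)$, which forces $B(v)\in\cl(F_{\le i-1})$, or $f(v)\succ_v B(v)$. The first case must be excluded: $P(i-1)$ would give $B(v)\in\cl(B\cap F_{\le i-1})$, contradicting independence of $B$ (note $B(v)\notin F_{\le i-1}$ because $v\in V_i$).
\item The loss to $u$ is dominated. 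We need $w(u)<w(v)$, which requires choosing $e$ with $u\notin V_{\le i}$. This is the delicate point. The element $e$ could be some $B(u)$ with $u\in V_{\le i}$ and $B(u)=s(u)$, or, for $u\in V_i$, some $B(u)$ not yet controlled.
\end{enumerate}

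To handle item 3, I would run a chain argument. Repeatedly swap $f$-elements in along fundamental circuits, producing an alternating sequence $f(v_1),B(v_1'),\dots$ that ends at an applicant of weight less than $w_i$ who drops to a last resort. Each applicant in $V_{\le i}$ along the chain moves to its first element (an improvement), and one lower-weight applicant loses. The independence of the multi-exchange follows from Lemma~\ref{lem:uniquematching}, by ordering the exchanges so that the exchange graph has a unique perfect matching (a shortest-path-type argument). Elements of $B\cap \cl(F_{\le i-1})$ that are outside $F$ cannot occur, by $P(i-1)$, so the chain indeed escapes $\cl(F_{\le i})$ through an element outside $F_{\le i}$. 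Its owner either has weight less than $w_i$, which ends the chain, or lies in $V_{\le i}$ with a non-first element, in which case the chain continues by swapping in that owner's first element. Counting weights gives $\Delta_w(B,B')\le w(\text{loser})-w_i<0$. This contradicts popularity.

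\textbf{Step 2: $B(v)\in\{f(v),s(v)\}$ for $v\in V_i$.} Now suppose $B(v)\notin\{f(v),s(v)\}$. Since $B(v)\notin\cl(B\cap F_{\le i})=\cl(F_{\le i})$ and $B(v)\ne f(v)$, the element $B(v)$ lies in $E_v\setminus\cl(F_{\le i})$. By the definition of $s(v)$, we get $s(v)\succ_v B(v)$. Swap $s(v)$ in for $B(v)$. The fundamental circuit $C(B,s(v))$ contains an element $e\notin F_{\le i}$, because $s(v)\notin\cl(B\cap F_{\le i})$ by Lemma~\ref{lem:clcircuit}. The same chain argument as in Step 1 then produces $B'$ in which $v$ gains and only a lower-weight applicant loses, again a contradiction.

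I expect the chain construction in Step 1, item 3, to be the main obstacle. Guaranteeing that the simultaneous exchange yields a base requires the unique-matching condition of Lemma~\ref{lem:uniquematching}, obtained by choosing a shortest or minimal chain, together with the ``no-shortcut'' property. I would attempt this first, by taking a shortest path in the exchange graph restricted to $\cl(F_{\le i})$.
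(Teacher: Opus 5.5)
Your skeleton (induction on $i$, first the spanning condition, then $B(v)\in\{f(v),s(v)\}$) and your items 1 and 2 are fine. The chain argument meant to carry item 3 and Step 2, however, does not work.

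First, you claim that after swapping in $f(u)$ the chain ``escapes $\cl(F_{\le i})$ through an element outside $F_{\le i}$''. This fails whenever $f(u)\in\cl(B\cap F_{\le i})$. That always happens for $u\in V_{\le i-1}$, by $P(i-1)$. In Step 2 it happens for every $u\in V_{\le i}$, because Step 1 is then already established. In that situation Lemma~\ref{lem:clcircuit} gives $C(B,f(u))-f(u)\subseteq B\cap F_{\le i}$. So every element you can displace is a first element $f(u')$ held by its owner $u'$, who genuinely loses, and your description does not cover this case.

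Second, the termination you aim for, a loser of weight below $w_i$, is never available. If $e\succ_a B(a)$, popularity forces every owner of an element of $C(B,e)-e$ to have weight at least $w(a)$. This is Lemma~\ref{lem:circuit-level}, proved by exactly your one-step swap. Hence along your chain the weights only weakly increase from $w_i$, and the bound $\Delta_w(B,B')\le w(\mathrm{loser})-w_i$ never applies.

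The missing idea is that the contradiction comes from cancellation between equal weights, not from a lighter loser, and that two swaps suffice. Let $v'$ be the owner of some $B(v')\in (C(B,f(v))-f(v))\setminus F_{\le i}$, so $v'\neq v$, $w(v')\ge w(v)$, and $f(v')\succ_{v'}B(v')$.
\begin{itemize}
\item If $f(v')\in\cl(B\cap F_{\le i})$, choose $f(v'')\in C(B,f(v'))-f(v')$ of maximum level. The definition of $f(v')$ forces this level to be at least that of $v'$, and Lemma~\ref{lem:circuit-level} forces it to be at most. So $w(v'')=w(v')$, and $I=B-B(v)+f(v)-B(v')+f(v')-B(v'')$ gives $\Delta_w(B,I)=-w(v)-w(v')+w(v'')=-w(v)<0$.
\item If $f(v')\notin\cl(B\cap F_{\le i})$, then $v'\in V_i$, and one picks $B(v'')\in(C(B,f(v'))-f(v'))\setminus F_{\le i}$. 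If $v''=v$, this closes a cycle in which two applicants improve and nobody loses. Otherwise either you reduce to the previous case with $(v',v'')$ in place of $(v,v')$, or $v''\in V_i$ and the same three-applicant exchange works.
\end{itemize}
Step 2 is the analogous exchange $B-B(v)+s(v)-B(v')+f(v')-B(v'')$, with $v''$ at the level of $v'$.

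Finally, these exchanges are bases by Lemma~\ref{lem:uniquematching} precisely because popularity excludes the shortcut $B+f(v')-B(v')\in\I$. Your unspecified ``shortest-path-type argument'' would have to supply such no-shortcut facts for chains of arbitrary length. The paper avoids this by never going beyond two swaps.
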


\subsection{Basic Lemmas}

Before proving Proposition~\ref{prop:populartowellformed}, we prove the following two lemmas.

    

\begin{lemma}\label{lem:f(v)>B(v)}
    Let $i\in [p]$ and $v\in V_i$.
    Then the following statements hold for any common base $B$. 
    \begin{enumerate}[label={\textup{(\arabic*)}}]
         \item If $\cl(F_{\le {i-1}}) = \cl(B\cap F_{\le {i-1}})$ and $B(v)\neq f(v)$, then $f(v) \succ_v B(v)$. 
        \item If $\cl(F_{\le {i}}) = \cl(B\cap F_{\le {i}})$ and $B(v)\notin \{f(v), s(v)\}$, then $s(v) \succ_v B(v)$ (implying that $s(v) \neq \bot$).
    \end{enumerate}

\end{lemma}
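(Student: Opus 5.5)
Both parts follow from a single observation: $B(v)$ cannot lie in the closure of the relevant prefix of first elements. After that, the statements reduce to the definitions of $f(v)$ and $s(v)$ as the $\succ_v$-maximum elements of $E_v$ outside that closure.

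For (1), write $X := B\cap F_{\le i-1}$. Since $v\in V_i$, the set $F_{\le i-1}$ contains only first elements of applicants in $V_{\le i-1}$, and these applicants differ from $v$. Hence $B(v)\notin F_{\le i-1}$, and in particular $B(v)\notin X$. Moreover $X+B(v)\subseteq B$ is independent in $\M$, so $B(v)\notin \cl(X)$. By hypothesis $\cl(X)=\cl(F_{\le i-1})$, so $B(v)\in E_v\setminus \cl(F_{\le i-1})$. By definition, $f(v)$ is the $\succ_v$-most preferred element of this set. Since $B(v)\neq f(v)$, we get $f(v)\succ_v B(v)$.

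For (2), set $X := B\cap F_{\le i}$. The only element of $F_{\le i}$ belonging to $E_v$ is $f(v)$, and $B(v)\neq f(v)$ by hypothesis. Therefore $B(v)\notin X$, and again independence of $X+B(v)\subseteq B$ gives $B(v)\notin\cl(X)=\cl(F_{\le i})$. So $E_v\setminus\cl(F_{\le i})$ is nonempty, which means $s(v)\neq\bot$. Since $s(v)$ is the $\succ_v$-maximum of this set and $B(v)\neq s(v)$, we conclude $s(v)\succ_v B(v)$.

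The one point that needs care is checking $B(v)\notin X$. If $B(v)$ were in $X$, the independence argument would not show $B(v)\notin\cl(X)$. This check uses that the $E_u$ are disjoint, so the only element of $F$ lying in $E_v$ is $f(v)$, together with the hypothesis $B(v)\neq f(v)$. Everything else is immediate from the definitions.
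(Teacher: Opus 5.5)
Your proposal is correct and follows essentially the same route as the paper: show $B(v)\notin \cl(B\cap F_{\le i-1})$ (resp.\ $\cl(B\cap F_{\le i})$) using that $B(v)$ is an element of the independent set $B$ lying outside that prefix, then apply the definition of $f(v)$ (resp.\ $s(v)$). The only cosmetic difference is that the paper phrases the key step via $B(v)\notin\cl(B-B(v))$ and monotonicity of closure, whereas you use independence of $X+B(v)$ directly; these are the same fact.
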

\begin{proof}
First note that $B(v) \notin \cl(B-B(v))$ follows from the fact that $B$ is a base of $\M$ and $B(v)\in B$. 

(1) Suppose that $\cl(F_{\le {i-1}}) = \cl(B\cap F_{\le {i-1}})$ and $B(v)\neq f(v)$. 
Since $v\in V_i$, we have $B\cap F_{\le {i-1}} \subseteq B- B(v)$, 
and 
hence $\cl(B\cap F_{\le {i-1}}) \subseteq \cl(B- B(v))$. 
It then follows from $B(v) \notin \cl(B-B(v))$  that 
\[
B(v)\notin \cl(B\cap F_{\le {i-1}}) = \cl(F_{\le {i-1}}).
\]
Then $f(v) \succ_v B(v)$ follows from the definition of $f(v)$ and the assumption $B(v)\neq f(v)$. 

(2) Suppose that $\cl(F_{\le {i}}) = \cl(B\cap F_{\le {i}})$ and $B(v)\notin \{f(v), s(v)\}$. 
It follows from $B(v)\neq f(v)$ that $B(v)\in B \setminus F_{\le i}$, in particular $B\cap F_{\le {i}} \subseteq B - B(v)$, 
and hence $\cl(B\cap F_{\le {i}}) \subseteq \cl(B- B(v))$. 
From $B(v)\notin \cl(B- B(v))$, 
we obtain that 
\[
B(v)\notin \cl(B\cap F_{\le {i}}) = \cl(F_{\le {i}}).
\]
Then $s(v) \succ_v B(v)$ follows from the definition of $s(v)$ and the assumption $B(v)\neq s(v)$. 
\end{proof}

\begin{lemma}\label{lem:circuit-level}
    Suppose that $B$ is a popular common base, 
    and that $v \in V_i$.
    Let $e \in E_v$ with $e \succ_v B(v)$.
    Then,
    each applicant $u \in V$ with $B(u) \in C(B, e) - e$ satisfies
    $w(u) \geq w(v)$ (or equivalently, $u \in V_j$ with $j \leq i$)
    and
    $u \neq v$.
\end{lemma}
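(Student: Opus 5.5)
We argue by contradiction via a single exchange. Fix $e \in E_v$ with $e \succ_v B(v)$ and an applicant $u$ with $B(u) \in C(B,e) - e$. Since $B(u) \in C(B,e)-e$, the set $B' := B + e - B(u)$ is a base of $\M$ (this is the remark after the definition of fundamental circuits).

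First we show $u \neq v$. Suppose $u = v$, i.e., $B(v) \in C(B,e)-e$. Then $B' = B - B(v) + e$ is a base of $\M$. Since $e$ and $B(v)$ both lie in $E_v$, $B'$ meets each class $E_{v'}$ in exactly one element, so $B'$ is a common base. Comparing, only $v$ changes, and $v$ prefers $B'$ since $e \succ_v B(v)$. Hence $\Delta_w(B,B') = -w(v) < 0$, contradicting popularity.

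Now let $u \neq v$ and suppose $w(u) < w(v)$. The set $B + e - B(u)$ is a base of $\M$, but it contains two elements of $E_v$, namely $e$ and $B(v)$, and none of $E_u$. So it is not a common base. To repair it we add the last resort $L(u)$ and remove $B(v)$, setting
\[
B'' := B + e - B(u) - B(v) + L(u).
\]
Then $B''$ meets each partition class exactly once. We must check that $B''$ is a base of $\M$. In the modified matroid $\M'$, a set $X$ is independent iff $X\cap E \in \I$ and $|X|\le|V|$. Now $B''\cap E \subseteq (B+e-B(u))\cap E$, and the latter is independent because $B+e-B(u)$ is a base. Also $|B''| = |V|$. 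So $B''$ is independent of size $|V|$, i.e., a base, and hence a common base. This also holds when $e$ or $B(v)$ is itself a last resort; if $B(u)=L(u)$ then $B''$ simply contains $L(u)$ again.

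In the comparison of $B$ with $B''$, applicant $v$ prefers $B''$ (it gets $e \succ_v B(v)$), and applicant $u$ gets $L(u)$ instead of $B(u)$. So $u$ prefers $B$ or is indifferent, the latter exactly when $B(u)=L(u)$. Every other applicant is unchanged. Therefore $\Delta_w(B,B'') \le w(u) - w(v) < 0$, contradicting popularity. Hence $w(u)\ge w(v)$, i.e., $u\in V_j$ with $j\le i$.

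The main point to handle carefully is the feasibility of $B''$ in the modified instance. The last-resort construction makes this trivial, because last resorts are free in $\M'$ up to the cardinality bound.
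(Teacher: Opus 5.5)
Your proof is correct and follows the paper's argument: the same exchange $B - B(v) + e$ rules out $u = v$, and the same set $B + e - B(u) - B(v)$ gives the weight bound. The only difference is that you pad this set with $L(u)$ to make it a common base. That is valid but unnecessary, because popularity (Definition~\ref{def:popular}) is defined against all common independent sets, so the paper compares $B$ directly with $I = B - B(u) + e - B(v)$ and obtains $\Delta_w(B,I) = w(u) - w(v)$.
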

\begin{proof}
We note that $C(B, e)$ is well defined in the statement. 
Indeed, since $e$ and $B(v)$ are distinct elements in $E_v$, 
we have $e \notin B$. 

We first prove that $B(u) \in C(B, e) - e$ implies $u\neq v$. 
Suppose to the contrary that $B(v) \in C(B, e) - e$. 
Since both $B(v)$ and $e$ belong to $E_v$,
the set $B' := B - B(v) + e$ forms a common base.
By the assumption that $e \succ_v B(v)$,
we have $\Delta_w(B,B') = -w(v) < 0$,
which contradicts that $B$ is popular.

We next prove that $w(u) \geq w(v)$
for any $u \in V-v$ with $B(u) \in C(B, e)-e$.
We see that $I := B - B(u) + e - B(v)$ 
is an independent set of $\M_{\mathrm{part}}$, because 
$e$ and $B(v)$ belong to $E_v$. 
Furthermore, $B(u) \in C(B, e)-e$ implies that $B - B(u) + e$ is an independent set of $\M$, and hence, so is $I$.
Therefore, $I$ is a common independent set.
By the assumption that $e \succ_v B(v)$,
we have $\Delta_w(B,I) = -w(v) + w(u)$. 
Since $B$ is popular, this value is non-negative, implying that $w(u) \geq w(v)$. 
%
%
%
%
\end{proof}

\subsection{Proof of Proposition~\ref{prop:populartowellformed}}


To prove Proposition~\ref{prop:populartowellformed}, we show that every popular common base satisfies the two conditions in Definition~\ref{def:wellformed}.

\begin{lemma}\label{lem:span}
    If $B$ is a popular common base, then $F_{\le i} \subseteq \cl(B\cap F_{\le i})$ for all $i\in [p]$.
\end{lemma}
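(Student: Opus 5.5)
We argue by induction on $i$, taking the smallest index $i$ for which the claim fails. So assume $F_{\le i-1} \subseteq \cl(B\cap F_{\le i-1})$, which gives $\cl(F_{\le i-1}) = \cl(B \cap F_{\le i-1})$; this is trivial when $i=1$. Assume also that some $f(v)$ with $v \in V_i$ satisfies $f(v) \notin \cl(B\cap F_{\le i})$. Since $f(v) \notin B \cap F_{\le i}$, we have $B(v) \neq f(v)$. Lemma~\ref{lem:f(v)>B(v)}(1) then gives $f(v) \succ_v B(v)$. In particular $f(v)\notin B$, so the fundamental circuit $C(B,f(v))$ is defined.

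By Lemma~\ref{lem:clcircuit}, $f(v)\notin\cl(B\cap F_{\le i})$ means $C(B,f(v)) - f(v) \not\subseteq F_{\le i}$. Hence there is an applicant $u$ with $B(u) \in C(B,f(v)) - f(v)$ and $B(u) \notin F_{\le i}$. Lemma~\ref{lem:circuit-level} applies to the popular base $B$, the applicant $v$ and the element $e=f(v)$. It gives $u\neq v$ and $w(u)\ge w(v)$, so $u\in V_j$ with $j\le i$. Since $B(u)\notin F_{\le i}$, we get $B(u)\neq f(u)$.

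The goal is now a contradiction. Consider the common base $B' := B - B(u) + f(v) - B(v) + f(u)$, or more generally a chain of such swaps. Applicant $v$ improves, and $u$ weakly improves if $f(u) \succ_u B(u)$. That last preference holds by Lemma~\ref{lem:f(v)>B(v)}(1) applied to $u$ at level $j\le i$, using the induction hypothesis $\cl(F_{\le j-1})=\cl(B\cap F_{\le j-1})$. If this swap were feasible, then $\Delta_w(B,B') = -w(v)-w(u) < 0$, a contradiction.

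The main obstacle is that $B - B(u) + f(v) - B(v) + f(u)$ need not be independent in $\M$: adding $f(u)$ back may create a circuit. I would instead run an augmenting-path style argument. Start with $e_1=f(v)$ and $u_1=u$. If $f(u_1)$ can be added to $B - B(u_1) + f(v) - B(v)$, stop. Otherwise $f(u_1)$ lies in the closure of that set. Here we use $f(u_1)\notin\cl(F_{\le j-1})$ together with $\cl(F_{\le j-1}) = \cl(B\cap F_{\le j-1})$ to find a further applicant $u_2$ at level $\le i$ with $B(u_2)\notin F_{\le i}$ and $B(u_2)$ in the relevant circuit. We then continue exchanging $B(u_t)$ for $f(u_t)$. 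Every applicant along the chain weakly improves, and $v$ strictly improves. The chain must terminate: one option is to choose $v$ and $u$ extremally, e.g. so that the number of elements of $B\setminus F_{\le i}$ spanned is minimized, and another is to do a rank-counting argument. Counting rank gives $|B\cap F_{\le i}| < \rank(F_{\le i})$, so matroid augmentation in $\M$ restricted to $F_{\le i}\cup B$ yields an exchange sequence.

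To make the final set a common base, I would invoke Lemma~\ref{lem:uniquematching}. Set $X=\{B(u_1),\dots,B(u_k)\}$ and $Y=\{f(v), f(u_1),\dots\}$, with $B(v)$ dropped or paired appropriately. Choosing a shortest augmenting sequence ensures that the exchange graph has a unique perfect matching. The resulting $B'$ is a base of $\M$, and it is a common base of $\M_{\mathrm{part}}$ because each applicant keeps exactly one element. Then $\Delta_w(B,B')<0$ contradicts popularity. Alternatively, one can drop $B(v)$ and compare with a common independent set, as in the proof of Lemma~\ref{lem:circuit-level}. Either way, the whole difficulty lies in guaranteeing feasibility of this exchange sequence, and the shortest-path and unique-matching argument is what I would try first.
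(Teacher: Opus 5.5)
Your setup through the choice of $u$ (the paper's $v'$) matches the paper, but the continuation has a genuine gap. The step ``use $f(u_1)\notin\cl(F_{\le j-1})=\cl(B\cap F_{\le j-1})$ to find $u_2$ with $B(u_2)\notin F_{\le i}$'' does not follow. That hypothesis only guarantees an element of the circuit outside $F_{\le j-1}$. This element may be a first element $B(u_2)=f(u_2)\in F_j$ that $u_2$ already holds, so $u_2$ cannot improve. When $f(u_1)\in\cl(B\cap F_{\le i})$, one even has $C(B,f(u_1))-f(u_1)\subseteq B\cap F_{\le i}$, so no element of $B\setminus F_{\le i}$ is available at all.

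No better choice of chain fixes this. Your plan ends with a common base in which no applicant is worse off, and such a base need not exist. Take applicants $v'',v'$ of weight $2$ and $v$ of weight $1$, with $E_{v''}=\{a,L(v'')\}$, $E_{v'}=\{b,c,L(v')\}$ where $b\succ_{v'}c$, and $E_v=\{d,L(v)\}$. Let $\M$ on $\{a,b,c,d\}$ be the rank-$2$ matroid with parallel classes $\{a,b\}$ and $\{c,d\}$. Then $f(v'')=a$, $f(v')=b$, $f(v)=d$. The base $B=\{a,c,L(v)\}$ satisfies the condition for $i=1$ but not for $i=2$, since $d\notin\cl(\{a\})$. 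In a common base where nobody is worse off than in $B$, $v''$ holds $a$, $v'$ holds $b$ or $c$, and $v$ holds $d$ or $L(v)$. Since $a,b$ and $c,d$ are parallel, the only such base is $B$ itself. So $B$ can only be beaten by dropping someone, e.g.\ $I=\{b,d\}$ with $\Delta_w(B,I)=2-2-1=-1$. Your fallback (compare with a common independent set) does not say who is dropped or why the loss is outweighed. Dropping $u$ as in Lemma~\ref{lem:circuit-level} gives $w(u)-w(v)\ge 0$, which is no contradiction. Your termination and unique-matching claims are also only asserted.

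The missing idea is to drop one applicant whose weight exactly cancels the gain of $v'$, which keeps the exchange to at most three applicants.

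Case $f(v')\in\cl(B\cap F_{\le i})$. Choose $v''$ with $B(v'')\in C(B,f(v'))-f(v')$ of maximum level $i''$.
\begin{itemize}
\item $B(v'')=f(v'')$ by Lemma~\ref{lem:clcircuit}.
\item $i''\ge i'$, since otherwise $f(v')\in\cl(F_{\le i'-1})$.
\item $i''\le i'$ by Lemma~\ref{lem:circuit-level} applied to $f(v')\succ_{v'}B(v')$.
\end{itemize}
Hence $w(v'')=w(v')$. By popularity, $B+f(v')-B(v')\notin\I$. Lemma~\ref{lem:uniquematching} then shows that $I=B-B(v)+f(v)-B(v')+f(v')-B(v'')$ is a common independent set. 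It satisfies $\Delta_w(B,I)=-w(v)-w(v')+w(v'')=-w(v)<0$.

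Case $f(v')\notin\cl(B\cap F_{\le i})$. Then $i'=i$, and there is $v''$ with $B(v'')\in(C(B,f(v'))-f(v'))\setminus F_{\le i}$.
\begin{itemize}
\item If $v''=v$, your two-applicant swap is a common base by Lemma~\ref{lem:uniquematching}.
\item If $f(v'')\in\cl(B\cap F_{\le i})$, the first case applies to the pair $(v',v'')$.
\item Otherwise $i''=i$, and dropping $v''$ again yields $\Delta_w(B,I)=-w(v)$.
\end{itemize}
So no augmenting-chain or termination argument is needed.
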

\begin{proof}
Let $B$ be a popular common base. 
We prove the claim by induction on $i$. 
For the base case $i=0$, since $F_{\le 0} = \emptyset$, the claim obviously holds. 
For the induction step, let $i\in [p]$ and suppose that $F_{\le j} \subseteq \cl(B\cap F_{\le j})$ holds for all $j<i$. 
As mentioned in Remark~\ref{rem:clequivalent}, this is equivalent to
\begin{equation}
    \cl(F_{\le j}) = \cl(B\cap F_{\le j}) \text{ for all } j<i. \label{eq:sameclosure} 
\end{equation} 
We prove that the claim holds for $i$.

Suppose, for contradiction, that $F_{\le i} \nsubseteq \cl(B\cap F_{\le i})$, i.e., $F_{\le i} \setminus \cl(B\cap F_{\le i}) \neq \emptyset$.
Our aim in the following is to construct a common independent set $I$
satisfying $\Delta_w(B,I) < 0$;
this contradicts the popularity of $B$
and completes the proof.

Let $\hat{V}_i$ be the nonempty subset of $V$ defined as
\begin{align*}
    \hat{V}_i :=&\; \{ v \in V \mid f(v) \in F_{\leq i} \setminus \cl(B\cap F_{\le i}) \}\\
    =&\; \{ v \in V \mid f(v) \in F_{i} \setminus \cl(B\cap F_{\le i}) \},
\end{align*}
where the second equality follows from $F_{\le i} = F_{\le i-1} \cup F_i$ and $F_{\le i-1}\subseteq \cl(B\cap F_{\le i-1})\subseteq \cl(B\cap F_{\le i})$ (by the induction hypothesis). 
Note that this equality implies $\hat{V}_i \subseteq V_i$.
%
%
%
Let $v \in \hat{V}_i$. 
Since $v \in \hat{V}_i$ implies $f(v) \not\in B$, i.e., $B(v) \neq f(v)$, 
we have that 
\begin{equation}
    f(v) \succ_v B(v) \label{eq:preferf(v)}
\end{equation} 
by (\ref{eq:sameclosure}) and Lemma~\ref{lem:f(v)>B(v)} (1). 

By Lemma~\ref{lem:clcircuit} with $e = f(v)$ and $X = F_{\le i}$, 
we obtain from $f(v) \not\in \cl(B \cap F_{\le i})$ that 
$(C(B, f(v))-f(v))\setminus F_{\le i} \neq \emptyset$. 
Since $(C(B, f(v))-f(v))\setminus F_{\le i} \subseteq B$, 
there exists an applicant $v' \in V_{i'}$ such that 
\begin{equation}
    B(v') \in (C(B, f(v))-f(v))\setminus F_{\le i}. \label{eq:property1v'}
\end{equation}
Then, we obtain $v' \neq v$ and $i' \leq i$ from (\ref{eq:preferf(v)}), (\ref{eq:property1v'}), and Lemma~\ref{lem:circuit-level}.
Moreover, since $B(v') \notin F_{\leq i}$, particularly, $B(v') \notin F_{i'}$,
we obtain $B(v') \neq f(v')$.
Thus, 
\begin{equation}
f(v') \succ_{v'} B(v') \label{eq:preferf(v')}
\end{equation}
holds by (\ref{eq:sameclosure}) and Lemma~\ref{lem:f(v)>B(v)} (1).


We now divide the analysis into two cases, depending on whether
\begin{equation}
    f(v') \in \cl(B \cap F_{\leq i}) \label{eq:property2v'}
\end{equation}
holds. 


\medskip

\noindent
\textbf{Case 1.} 
%
%
%
%
%
Suppose that (\ref{eq:property2v'}) holds. 
Since $f(v') \not\in B$ and $\mathcal{M}$ is assumed to have no loops, $C(B, f(v')) - f(v')$ is a well-defined nonempty subset of $B$. 
Take an applicant $v'' \in V_{i''}$ such that 
\begin{equation}
    B(v'') \in C(B, f(v')) - f(v'), \label{eq:f(v'')}
\end{equation} 
where $i''$ is chosen to be maximized. 
By Lemma~\ref{lem:clcircuit} with $e = f(v')$ and $X = F_{\le i}$, we obtain from (\ref{eq:property2v'}) that $C(B, f(v'))-f(v')\subseteq F_{\le i}$, and hence 
$B(v'') \in F_{\le i}$, i.e., $B(v'') = f(v'')$.


We now claim that $i'' \ge i'$; 
otherwise, 
we obtain $C(B, f(v'))-f(v')\subseteq F_{\le i'-1}$ by the maximality of $i''$, 
implying that $f(v') \in \cl(C(B, f(v'))-f(v'))\subseteq \cl(F_{\le i'-1})$; this contradicts the definition of $f(v')$.
Furthermore, we obtain $v'' \neq v'$ and $i'' \leq i'$ from (\ref{eq:preferf(v')}), (\ref{eq:f(v'')}), and Lemma~\ref{lem:circuit-level}. 
Therefore, $i'' = i'$ holds, that is, $w(v')=w(v'')$.
Since $B(v'') = f(v'')$ and $B(v) \neq f(v)$,
we obtain $v'' \neq v$.
Thus, $v, v'$, and $v''$ are distinct.

Let $I := B - B(v) + f(v) - B(v') + f(v') - B(v'')$.
We claim that $I$ is a common independent set.
Indeed, it is clear that $I \in \I_{\mathrm{part}}$.
In addition, since 
\begin{itemize}
    \item $B + f(v) - B(v') \in \I$ by (\ref{eq:property1v'}), 
    \item $B + f(v') - B(v'') \in \I$ by (\ref{eq:f(v'')}), and
    \item $B + f(v') - B(v') \not\in \I$ by (\ref{eq:preferf(v')}) and the popularity of $B$,  
\end{itemize}
the set $B' := B + f(v) - B(v') + f(v') - B(v'')$ is a base of $\M$ by Lemma~\ref{lem:uniquematching}.
Thus, $I = B' - B(v)$ is an independent set of $\M$. 

By (\ref{eq:preferf(v)}), (\ref{eq:preferf(v')}), and $w(v')=w(v'')$, 
we have that 
\[
\Delta_w(B,I) = -w(v) -w(v')+ w(v'') = -w(v) < 0, 
\]
which contradicts the popularity of $B$.


\medskip

\noindent
\textbf{Case 2.} 
Suppose that (\ref{eq:property2v'}) does not hold. 
In this case, $f(v') \not\in \cl(B \cap F_{\le i})$ implies that 
$f(v') \not\in \cl(B \cap F_{\le i-1}) = \cl(F_{\le i-1})$ by (\ref{eq:sameclosure}), 
and hence $i' \ge i$. 
This together with $i' \le i$ shows that $i' = i$. 

Since (\ref{eq:property2v'}) does not hold, by applying Lemma~\ref{lem:clcircuit} with $e = f(v')$ and $X = F_{\le i}$, 
we have that $(C(B, f(v'))-f(v'))\setminus F_{\le i} \neq \emptyset$. 
Let $v'' \in V_{i''}$ be an applicant such that 
\begin{equation}
    B(v'') \in (C(B, f(v'))-f(v'))\setminus F_{\le i}. \label{eq:property1v''}
\end{equation}
Then, we obtain $v'' \neq v'$ and $i'' \leq i'$ from (\ref{eq:preferf(v')}), (\ref{eq:property1v''}), and Lemma~\ref{lem:circuit-level}.
Moreover, since $B(v'') \notin F_{\leq i}$, particularly, $B(v'') \notin F_{i''}$,
we obtain $B(v'') \neq f(v'')$.

If $v'' = v$, then $B':= B - B(v) + f(v) - B(v') + f(v')$ is a common base.
Indeed, it is clear that $B'$ is a base of $\M_{\mathrm{part}}$.
In addition, since 
\begin{itemize}
    \item $B + f(v) - B(v') \in \I$ by (\ref{eq:property1v'}), 
    \item $B + f(v') - B(v) \in \I$ by (\ref{eq:property1v''}) and $B(v) = B(v'')$, and
    \item $B + f(v') - B(v') \not\in \I$ by (\ref{eq:preferf(v')}) and the popularity of $B$,  
\end{itemize}
$B'$ is a base of $\M$ by Lemma~\ref{lem:uniquematching}.
Since $\Delta_w(B,B') = -w(v) -w(v') < 0$ by (\ref{eq:preferf(v)}) and (\ref{eq:preferf(v')}), 
this contradicts the popularity of $B$. 

Therefore, $v''\neq v$ holds, and hence, $v$, $v'$, and $v''$ are distinct applicants. 
Then we derive a contradiction by considering the following two cases separately. 
\begin{itemize}
    \item 
    Suppose that $f(v'') \in \cl(B \cap F_{\leq i})$. 
    In this case, since $v' \in \hat V_i$, we can derive a contradiction by applying the same argument as Case 1, 
    in which $v$ and $v'$ are replaced by $v'$ and $v''$, respectively. 
    \item 
    Suppose that $f(v'') \not\in \cl(B \cap F_{\leq i})$. 
    In this case, $f(v'') \not\in \cl(B \cap F_{\le i-1}) = \cl(F_{\le i-1})$ by (\ref{eq:sameclosure}), 
    and hence $i'' \ge i=i'$. 
    This together with $i'' \le i'$ shows that $i'' = i' = i$. 
    Let $I:= B - B(v) + f(v) - B(v') + f(v') - B(v'')$, which is a common independent set, 
    as shown in the same way as in Case 1.
    By (\ref{eq:preferf(v)}), (\ref{eq:preferf(v')}), and $i'' = i' = i$, we have 
    \[
    \Delta_w(B,I) = -w(v) -w(v')+ w(v'') = -w(v) < 0, 
    \]
    which contradicts the popularity of $B$.
\end{itemize}

This completes the proof. 
%
%
\end{proof}

As shown below, the first condition in Definition~\ref{def:wellformed} can also be proved by an argument similar to that used in Lemma~\ref{lem:span}.
\begin{lemma}\label{lem:fors}
    If $B$ is a popular common base, then $B(v) \in \{f(v), s(v)\}$ for all $v\in V$.
\end{lemma}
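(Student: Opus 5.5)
Fix a popular common base $B$, an index $i\in[p]$ and $v\in V_i$, and suppose for contradiction that $B(v)\notin\{f(v),s(v)\}$. By Lemma~\ref{lem:span} we already know $\cl(F_{\le j})=\cl(B\cap F_{\le j})$ for all $j$. Lemma~\ref{lem:f(v)>B(v)}~(2) then gives $s(v)\neq\bot$ and $s(v)\succ_v B(v)$. Applying Lemma~\ref{lem:circuit-level} with $e=s(v)$, every $u$ with $B(u)\in C(B,s(v))-s(v)$ satisfies $u\neq v$ and $w(u)\ge w(v)$. By the definition of $s(v)$ we have $s(v)\notin\cl(F_{\le i})=\cl(B\cap F_{\le i})$. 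Lemma~\ref{lem:clcircuit} therefore yields an applicant $v'\in V_{i'}$ with
\[
B(v')\in (C(B,s(v))-s(v))\setminus F_{\le i},
\]
and we have $i'\le i$ and $v'\neq v$. Since $B(v')\notin F_{\le i}\supseteq F_{i'}$, we get $B(v')\ne f(v')$, and Lemma~\ref{lem:f(v)>B(v)}~(1) gives $f(v')\succ_{v'}B(v')$.

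From here I would follow the structure of the proof of Lemma~\ref{lem:span}, with $s(v)$ playing the role of $f(v)$. Note that $f(v')\in F_{\le i}\subseteq\cl(B\cap F_{\le i})$, so only the analogue of Case~1 can occur. By Lemma~\ref{lem:clcircuit}, $C(B,f(v'))-f(v')\subseteq F_{\le i}$. Choose $v''\in V_{i''}$ with $B(v'')\in C(B,f(v'))-f(v')$ and $i''$ maximal, so that $B(v'')=f(v'')$. As in Case~1, the definition of $f(v')$ forces $i''\ge i'$, and Lemma~\ref{lem:circuit-level} forces $i''\le i'$ and $v''\ne v'$. Hence $w(v'')=w(v')$. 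Also $v''\neq v$, because $B(v'')=f(v'')$ while $B(v)\neq f(v)$.

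Now set $I:=B-B(v)+s(v)-B(v')+f(v')-B(v'')$. It lies in $\I_{\mathrm{part}}$. For independence in $\M$, we have $B+s(v)-B(v')\in\I$ and $B+f(v')-B(v'')\in\I$. We also have $B+f(v')-B(v')\notin\I$, since otherwise $v'$ alone could improve and $B$ would not be popular. Lemma~\ref{lem:uniquematching} then makes $B+s(v)-B(v')+f(v')-B(v'')$ a base, so $I$ is independent. Finally $\Delta_w(B,I)=-w(v)-w(v')+w(v'')=-w(v)<0$, contradicting popularity.

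The delicate step is verifying the unique-perfect-matching hypothesis. The non-edge $\{B(v'),f(v')\}$ suffices, as in Case~1, because the two required edges already form a matching and the only alternative perfect matching would need that missing edge. I also need the three applicants to be distinct: $v\ne v'$ comes from Lemma~\ref{lem:circuit-level}, $v'\ne v''$ likewise, and $v\ne v''$ was shown above. With that, the argument closes. If a subtlety arose, for instance $v'$ with $i'<i$ interacting with closures, I would fall back on the Case~2 style chaining, but I expect it is not needed.
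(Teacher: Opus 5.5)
Your proof is correct and follows the paper's argument. You use the fundamental circuit of $s(v)$ to find $v'$ with $B(v')\notin F_{\le i}$, then find a same-weight $v''$ in the circuit of $f(v')$, and apply Lemma~\ref{lem:uniquematching} to $I=B-B(v)+s(v)-B(v')+f(v')-B(v'')$. The only cosmetic difference is how $v''\in V_{i'}$ is obtained. The paper gets it directly from $\cl(F_{\le i'})=\cl(B\cap F_{\le i'})$, which gives $C(B,f(v'))-f(v')\subseteq F_{\le i'}$ but $\not\subseteq F_{\le i'-1}$. You instead reuse the Case~1 maximal-index argument with $F_{\le i}$ together with Lemma~\ref{lem:circuit-level}, and both routes are valid.
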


\begin{proof}
Let $B$ be a popular common base. 
Fix any $i \in [p]$.
Suppose, for contradiction, that $B(v) \notin \{f(v), s(v)\}$ for some $v \in V_i$.
It suffices to show that there is a common independent set $I$
satisfying $\Delta_w(B, I) < 0$,
which contradicts the popularity of $B$.

By Lemma~\ref{lem:span} and Remark~\ref{rem:clequivalent}, $\cl(F_{\le j}) = \cl(B\cap F_{\le j})$ holds for all $j\in [p]$.
Thus,
by Lemma~\ref{lem:f(v)>B(v)} (2), we have 
\begin{equation}
    s(v) \succ_v B(v). \label{eq:s(v)01}
\end{equation}
If $C(B, s(v))-s(v)\subseteq F_{\le i}$, then $s(v) \in \cl(C(B, s(v))-s(v))\subseteq \cl(F_{\le i})$, which contradicts the choice of $s(v)$.
Thus, $(C(B, s(v))-s(v)) \setminus F_{\le i} \neq \emptyset$ holds,
i.e.,
there exists $v'\in V_{i'}$ such that 
\begin{equation}
    B(v') \in (C(B, s(v))-s(v))\setminus F_{\le i}. \label{eq:s(v)02}
\end{equation}
By (\ref{eq:s(v)01}), (\ref{eq:s(v)02}), and Lemma~\ref{lem:circuit-level},
we obtain $v' \neq v$ and $i' \leq i$.
Moreover, since $B(v') \notin F_{\leq i}$, particularly, $B(v') \notin F_{i'}$,
we obtain $B(v') \neq f(v')$.
Thus, we obtain 
\begin{equation}
    f(v') \succ_{v'} B(v') \label{eq:s(v)03}
\end{equation} 
by Lemma~\ref{lem:f(v)>B(v)} (1).
Since $f(v') \in F_{\leq i'} \subseteq \cl(F_{\leq i'}) = \cl(B \cap F_{\leq i'})$,
we have $C(B, f(v'))-f(v') \subseteq F_{\le i'}$ by Lemma~\ref{lem:clcircuit}.
Furthermore, $C(B, f(v'))-f(v') \not\subseteq F_{\le i'-1}$; otherwise $f(v') \in \cl(C(B, f(v'))-f(v'))\subseteq \cl(F_{\le i'-1})$, which contradicts the choice of $f(v')$. 
Thus, there exists $v''\in V_{i'}$ such that 
\begin{equation}
    f(v'') \in C(B, f(v'))-f(v'). \label{eq:s(v)04}
\end{equation}
Note that this implies $B(v'') = f(v'')$, because $C(B, f(v'))-f(v') \subseteq B$. 
It follows from $B(v'') = f(v'')$ and $B(v) \neq f(v)$ that $v'' \neq v$.
We also obtain $v'' \neq v'$ from $f(v'') \neq f(v')$. 
Therefore, $v$, $v'$, and $v''$ are distinct applicants. 


Let $I := B - B(v) + s(v) - B(v') + f(v') - B(v'')$.
It is clear that $I \in \I_{\rm part}$. 
Since 
\begin{itemize}
    \item $B + s(v) - B(v') \in \I$ by (\ref{eq:s(v)02}), 
    \item $B + f(v') - B(v'') \in \I$ by (\ref{eq:s(v)04}) and $B(v'') = f(v'')$, and
    \item $B + f(v') - B(v') \not\in \I$ by (\ref{eq:s(v)03}) and the popularity of $B$,  
\end{itemize}
the set $B' := B + s(v) - B(v') + f(v') - B(v'')$ is a base of $\M$ by Lemma~\ref{lem:uniquematching}.
Thus, $I = B' - B(v)$ is an independent set of $\M$, 
implying that $I$ is a common independent set.
By (\ref{eq:s(v)01}), (\ref{eq:s(v)03}), and $w(v') = w_{i'} = w(v'')$, we obtain 
\[
\Delta_w(B,I) = -w(v) -w(v')+ w(v'') = -w(v) < 0, 
\]
which contradicts the popularity of $B$.
\end{proof}


Proposition \ref{prop:populartowellformed} follows from Lemmas~\ref{lem:span} and \ref{lem:fors}.

\section{Algorithm}
\label{se:algorithm}

We have already shown in Proposition~\ref{prop:populartowellformed} that every popular common base is well-formed.
In this section, we present a solution-independent characterization of when a well-formed common base is popular. 
Using this characterization, we develop an algorithm for finding a popular common base, if one exists, and thereby prove Theorem~\ref{thm:main}. 
Some proofs are deferred to Section~\ref{sec:proofofkeytheorems}.

For a subset $X \subseteq E$,
we say that a common base $B$ 
is \emph{$X$-including} (resp.~\emph{$X$-excluding})
if $X \subseteq B$ (resp.~$X \cap B = \emptyset$).
If $X = \{x\}$,
we simply write $x$-including and $x$-excluding instead of $\{x\}$-including and $\{x\}$-excluding,
respectively.

\subsection{Technical Preliminaries}

Let $B$ be a common base.
For $e \in E$, we define $w_B(e)$ as
\[
w_B(e) := 
\begin{cases}
0 & \mbox{if $e \in B$}, \\
w_i & \mbox{if $e \succ_v B(v)$ for some $v \in V_i$}, \\
-w_i & \mbox{if $e \prec_v B(v)$ for some $v \in V_i$}. 
\end{cases}
\]
Define an auxiliary bipartite digraph $D(B)$ on the vertex set $E$ with bipartition $(B, E \setminus B)$ as follows:  
for $x \in B$ and $y \in E \setminus B$, the digraph $D(B)$ contains an arc $(x, y)$ if $B - x + y \in \I_{\rm part}$,
i.e., $x$ and $y$ belong to the same partition class,
and it contains an arc $(y, x)$ if $B - x + y \in \I$,
i.e.,
$x \in C(B,y) - y$.
For a dicycle $C$ in $D(B)$, its weight is defined as $\sum_{e \in V(C)} w_B(e)$, 
where $V(C)$ denotes the set of vertices (elements) in $C$. 

\begin{lemma}\label{lem:negativecycle}
A common base $B$ is popular if and only if $D(B)$ contains no dicycle of positive weight with respect to $w_B$. 
\end{lemma}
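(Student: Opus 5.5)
The plan is to reduce popularity to a question about the exchange graph via Lemma~\ref{ob:assumebase}(2). That lemma says $B$ is popular if and only if $\Delta_w(B,B')\ge 0$ for every common base $B'$. The first observation is that, for a common base $B'$, the difference can be written as
\[
\Delta_w(B,B') = -\sum_{e\in B'\setminus B} w_B(e).
\]
To see this, take $v$ with $B'(v)\neq B(v)$. Then $e=B'(v)$ lies in $B'\setminus B$, and $w_B(e)=w(v)$ if $v$ prefers $B'$, while $w_B(e)=-w(v)$ if $v$ prefers $B$. Every other element of $B'$ lies in $B$ and has weight $0$. Since $w_B$ vanishes on $B$, this sum is also $-\sum_{e\in B\triangle B'} w_B(e)$. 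So $B$ is popular if and only if no common base $B'$ satisfies $w_B(B\triangle B')>0$.

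For the ``if'' direction I argue the contrapositive. Suppose $B'$ is a common base with $w_B(B'\setminus B)>0$. Decompose $B\triangle B'$ into arcs of $D(B)$ as follows. For the partition matroid, $B$ and $B'$ are both bases, so each $x\in B\setminus B'$ is paired with the unique $y\in B'\setminus B$ in the same class; these give arcs $(x,y)$. For $\M$, the standard fact (e.g.\ \cite[Corollary 39.12a]{Schrijver2003}) gives a perfect matching between $B\setminus B'$ and $B'\setminus B$ with $B-x+y\in\B$ for each matched pair; these give arcs $(y,x)$. The union of the two matchings is a disjoint union of dicycles in $D(B)$ covering $B\triangle B'$. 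Their total weight is $w_B(B\triangle B')>0$, so at least one of them has positive weight.

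For the ``only if'' direction, suppose $D(B)$ has a positive-weight dicycle. Among all such dicycles I choose $C$ with the minimum number of vertices, and set $B'=B\triangle V(C)$. Because each $\M_{\mathrm{part}}$-arc joins two elements of the same class, $B'$ is a base of $\M_{\mathrm{part}}$. The crux is to show $B'\in\B$ using Lemma~\ref{lem:uniquematching}, applied to the $\M$-arcs of $C$ as a perfect matching between $X=B\cap V(C)$ and $Y=V(C)\setminus B$.

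This is the main obstacle. The standard argument takes a shortcut: if the exchange graph on $X\cup Y$ had another perfect matching, one would split $C$ into shorter dicycles using chord arcs. Here, however, the chords have no cost and the partition arcs are forced, so it is not clear that a shorter dicycle retains positive weight. A more robust route is to perturb the weights: add a small $-\epsilon$ to every vertex so that strictly shorter cycles are favored. Then I need the classical matroid-intersection fact that, in the exchange graph, a weight-maximum dicycle of minimum length yields a common independent set. I would cite the standard theorem (e.g.\ the negative-cycle optimality criterion for weighted matroid intersection in \cite[Section 41.3]{Schrijver2003}) and adapt it by restricting to bases.

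If that theorem does not apply directly, the fallback is to prove it by hand. I would take a maximum-weight dicycle of minimum length and assume a second perfect matching exists. The symmetric difference with the cycle's $\M$-arcs, combined with the partition arcs, would then give a collection of strictly shorter dicycles whose weights sum to at least $w_B(C)$. One of these would have to be heavier, or equally heavy and shorter, contradicting the choice of $C$. With $B'$ shown to be a common base, the first paragraph gives $\Delta_w(B,B') = -w_B(V(C)) < 0$, contradicting popularity.
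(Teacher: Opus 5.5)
Your reduction matches the paper's: the identity $\Delta_w(B,B')=-w_B(B'\setminus B)$ together with \Cref{ob:assumebase}~(2) says that $B$ is popular if and only if it maximizes $w_B$ over common bases. The paper then simply invokes the negative-cycle optimality criterion \cite[Theorem 41.5]{Schrijver2003}. Your hand proof of the ``if'' direction is correct: you pair $B\triangle B'$ by partition classes and by a basis-exchange bijection, then decompose the union into dicycles. For the ``only if'' direction the citation applies verbatim. Common bases are exactly the common independent sets of cardinality $|V|$, and a dicycle of positive $w_B$-weight is a circuit of negative length for the length function that is $0$ on $B$ and $-w_B$ on $E\setminus B$. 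So neither the perturbation nor an intermediate fact about ``maximum-weight dicycles of minimum length'' is needed. Via the citation your argument is complete and is the paper's.

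The self-contained fallback, however, has a genuine gap. Let $P$ and $N$ be the partition arcs and the $\M$-arcs of $C$, and let $N'\neq N$ be another perfect matching. Then $N'\cup P$ covers $V(C)$ by disjoint dicycles, but nothing prevents it from being a single Hamiltonian dicycle on $V(C)$, so strictly shorter dicycles do not come for free. Moreover, even when you do get $t\ge 2$ shorter dicycles, weights summing to $w_B(C)$ need not include one of weight at least $w_B(C)$ (think $3+2=5$). Hence choosing $C$ of maximum weight, perturbed or not, gives no contradiction.

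The correct extremal choice is your original one: a positive dicycle with the fewest vertices, combined with a double-cover argument. Let $H$ consist of two copies of each arc of $P$ together with the arcs of $N$ and of $N'$. Every vertex of $V(C)$ then has in- and out-degree $2$ in $H$, and the total weight is $2w_B(C)>0$. Remove the dicycle $C_1$ formed by an arc $(y,x)\in N'\setminus N$ and the subpath of $C$ from $x$ to $y$; it misses the successor of $y$ on $C$, so it is shorter than $C$. Decompose $H-C_1$ into dicycles; all of them are dicycles of $D(B)$.
\begin{itemize}
\item If no piece is Hamiltonian on $V(C)$, some piece is positive and shorter than $C$.
\item If a piece $Q$ is Hamiltonian, then $C_1$ together with $H-C_1-Q$ covers $V(C)$ exactly once. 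This cover uses at least two dicycles, each shorter than $C$, of total weight $w_B(C)>0$, so one of them is positive.
\end{itemize}
Either way minimality is contradicted, so $N$ is the unique perfect matching, and \Cref{lem:uniquematching} yields $B\triangle V(C)\in\B$.
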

\begin{proof}
For a common base $B'$,
we see that 
\begin{align*}
    \Delta_w(B, B') = \sum_{e \in B} w_B(e) - \sum_{e \in B'} w_B(e). 
\end{align*}
Therefore, $B$ is popular if and only if $B$ is a maximum weight common base with respect to $w_B$ by \Cref{ob:assumebase}~(2).
Furthermore, it is well known that $B$ is a common base of maximum weight if and only if $D(B)$ contains no dicycle of positive weight; see e.g.,~\cite[Theorem 41.5]{Schrijver2003}. This completes the proof.  
\end{proof}

A well-formed common base $B$
has intriguing properties as follows, which will be used several times later. 
\begin{lemma}\label{lem:lastarc}
Let $y \in \cl(F)$ and let $i^* \in [p]$ be the minimum index such that $y \in \cl (F_{\le i^*})$.
For any $y$-excluding well-formed common base $B$, we have
$C(B, y) - y \subseteq F_{\leq i^*}$ and $(C(B, y) - y) \cap F_{i^*} \neq \emptyset$.
\end{lemma}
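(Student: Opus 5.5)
The plan is to apply Lemma~\ref{lem:clcircuit} twice: once with $X = F_{\le i^*}$ to get the containment, and once with $X = F_{\le i^*-1}$ to get the nonempty intersection. Since $B$ is $y$-excluding, $y \notin B$, so the fundamental circuit $C(B,y)$ is well defined. By well-formedness and Remark~\ref{rem:clequivalent}, we have $\cl(F_{\le j}) = \cl(B \cap F_{\le j})$ for every $j \in [p]$, and trivially also for $j=0$.

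For the first claim, the hypothesis $y \in \cl(F_{\le i^*})$ becomes $y \in \cl(B\cap F_{\le i^*})$. Lemma~\ref{lem:clcircuit} with $e=y$ and $X = F_{\le i^*}$ then gives $C(B,y)-y \subseteq F_{\le i^*}$ directly.

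For the second claim, suppose to the contrary that $(C(B,y)-y)\cap F_{i^*} = \emptyset$. Combined with the first claim and $F_{\le i^*} = F_{\le i^*-1}\cup F_{i^*}$, this yields $C(B,y)-y \subseteq F_{\le i^*-1}$. Applying Lemma~\ref{lem:clcircuit} with $X = F_{\le i^*-1}$ gives $y \in \cl(B\cap F_{\le i^*-1}) = \cl(F_{\le i^*-1})$.

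When $i^*\ge 2$, this contradicts the minimality of $i^*$. When $i^*=1$, we have $F_{\le 0}=\emptyset$, so the derived inclusion reads $C(B,y)-y=\emptyset$. That would make $\{y\}$ a circuit, i.e., $y$ a loop, contradicting the no-loop assumption. So the case $i^*=1$ needs a separate word but is immediate.

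No step is a real obstacle. The only care needed is to invoke the equality of closures from Remark~\ref{rem:clequivalent}, which is exactly what lets us pass between $\cl(F_{\le j})$ and $\cl(B\cap F_{\le j})$.
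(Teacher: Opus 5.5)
Your proposal is correct and follows essentially the same route as the paper: you rewrite $\cl(F_{\le j})$ as $\cl(B\cap F_{\le j})$ via Remark~\ref{rem:clequivalent}, then apply Lemma~\ref{lem:clcircuit} with $X=F_{\le i^*}$ and with $X=F_{\le i^*-1}$. Your separate treatment of $i^*=1$ via the no-loop assumption is a detail the paper leaves implicit when it asserts $y\notin\cl(F_{\le i^*-1})$ by minimality.
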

\begin{proof}
We have $y \in \cl (F_{\le i^*})$ and $y \not\in \cl (F_{\le i^*-1})$
by the minimality of $i^*$.
Recall that ${\rm cl}(F_{\le i^*}) = {\rm cl}(B \cap F_{\le i^*})$
and ${\rm cl}(F_{\le i^*-1}) = {\rm cl}(B \cap F_{\le i^*-1})$ as shown in Remark~\ref{rem:clequivalent}.

Since $y \in {\rm cl}(F_{\le i^*}) = {\rm cl}(B \cap F_{\le i^*})$, we obtain
$C(B, y)- y \subseteq F_{\le i^*}$ by Lemma~\ref{lem:clcircuit}, that is, the first property holds. 
Furthermore, since $y \not\in {\rm cl}(F_{\le i^*-1}) = {\rm cl}(B \cap F_{\le i^*-1})$, 
we obtain $C(B, y)- y \not\subseteq F_{\le i^*-1}$ by Lemma~\ref{lem:clcircuit}. 
This together with $C(B, y)- y \subseteq F_{\le i^*}$ shows that $(C(B, y)- y) \cap F_{i^*} \neq \emptyset$.
%
\end{proof}

As special cases of this lemma, we obtain the following corollary. 

\begin{corollary}\label{cor:<}
For any $v \in V_i$ and $y \in E_v$, we have the following. 
\begin{enumerate}[label={\textup{(\arabic*)}}]
        \item 
        If $s(v) \prec_v y$,
        then $y \in \cl(F_{\leq i})$ and $C(B,y) - y \subseteq F_{\leq i}$ for any $y$-excluding well-formed common base $B$.
        \item 
        If $f(v) \prec_v y$,
        then $y\in \cl(F_{\leq i-1})$ and $C(B,y) - y \subseteq F_{\leq i-1}$ for any well-formed common base $B$.
        \item 
        If $y = f(v)$,
        then, for any $y$-excluding well-formed common base $B$, we have $C(B,y) - y \subseteq F_{\leq i}$ and $(C(B,y) - y) \cap F_{i} \neq \emptyset$.
    \end{enumerate}
\end{corollary}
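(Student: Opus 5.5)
Each of the three parts reduces to Lemma~\ref{lem:lastarc}. In every case I determine the minimum index $i^*$ with $y \in \cl(F_{\le i^*})$, or at least bound it from above, and then apply the lemma to the given base $B$. The first inclusion of Lemma~\ref{lem:lastarc}, $C(B,y)-y \subseteq F_{\le i^*}$, gives the containment claims because $F_{\le i^*} \subseteq F_{\le j}$ whenever $i^* \le j$. The second conclusion of the lemma gives the nonemptiness claim in (3).

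For (1): $s(v)$ is defined as the most preferred element of $E_v \setminus \cl(F_{\le i})$. So any $y \in E_v$ with $y \succ_v s(v)$ must lie in $\cl(F_{\le i})$. If $s(v)=\bot$, then $E_v \subseteq \cl(F_{\le i})$ and the conclusion is immediate; I read the hypothesis $s(v) \prec_v y$ as vacuously satisfied in that case. Hence $i^* \le i$, and Lemma~\ref{lem:lastarc} applied to a $y$-excluding well-formed $B$ gives $C(B,y)-y \subseteq F_{\le i^*} \subseteq F_{\le i}$.

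For (2): $f(v)$ is the most preferred element of $E_v \setminus \cl(F_{\le i-1})$, so $y \succ_v f(v)$ forces $y \in \cl(F_{\le i-1})$. Then $i^* \le i-1$. Here $i \ge 2$ necessarily, since $\cl(F_{\le 0})=\cl(\emptyset)$ contains no elements because $\M$ has no loops, and so no $y \succ_v f(v)$ exists when $i=1$. Moreover $y \notin B$. Indeed, $B$ is well-formed, so $B(v) \in \{f(v),s(v)\}$, and both of these are $\preceq_v f(v) \prec_v y$. Therefore $B$ is $y$-excluding, and Lemma~\ref{lem:lastarc} yields $C(B,y)-y \subseteq F_{\le i-1}$. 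This observation that $B$ is automatically $y$-excluding is why no exclusion hypothesis is needed in (2).

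For (3): take $y=f(v)$. Then $y \in F_i \subseteq \cl(F_{\le i})$, while by definition $f(v) \notin \cl(F_{\le i-1})$. So $i^*=i$ exactly, and both conclusions of Lemma~\ref{lem:lastarc} apply directly.

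The main point needing care is establishing that $i^*$ is at most (respectively, equal to) the right index. Everything else is a direct invocation of Lemma~\ref{lem:lastarc}. The only nonobvious detail is the automatic $y$-exclusion in (2), which comes from the first condition of well-formedness.
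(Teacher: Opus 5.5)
Your proposal is correct and follows the paper's proof. In each part you use the definitions of $s(v)$ and $f(v)$ to place $y$ in $\cl(F_{\le i})$ or $\cl(F_{\le i-1})$, or to pin down $i^*=i$. In (2) you observe that well-formedness forces $B(v)\in\{f(v),s(v)\}$, so $B$ is automatically $y$-excluding, and you then invoke Lemma~\ref{lem:lastarc}. Your remarks on the case $s(v)=\bot$ and on $i\ge 2$ (via $\cl(\emptyset)=\emptyset$, since $\M$ is loopless) are harmless details that the paper leaves implicit.
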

\begin{proof}
If $s(v) \prec_v y$, then $y \in \cl(F_{\leq i})$ by the definition of $s(v)$. Therefore, the assertion~(1) follows from~\Cref{lem:lastarc}. 

If $f(v) \prec_v y$, then $y \in \cl(F_{\leq i-1})$ by the definition of $f(v)$.
Furthermore, since $y \notin \{s(v), f(v)\}$, any well-formed common base is $y$-excluding.
Therefore, the assertion~(2) follows from~\Cref{lem:lastarc}. 

If $y = f(v)$, then we have $y \in F_i \subseteq \cl(F_{\leq i})$ and $y \notin \cl(F_{\leq i-1})$.
Therefore, the assertion~(3) follows from~\Cref{lem:lastarc}.
%
\end{proof}


\subsection{(Essential) Promotion Path}
As shown in~\Cref{lem:negativecycle}, the popularity of a well-formed common base $B$ can be characterized by using a dicycle in $D(B)$. In this subsection, we characterize the popularity by using a dipath in $D(B)$ with some conditions. 

Let $B$ be a well-formed common base.
We first define a {\em promotion path} in $D(B)$ as follows. 
\begin{definition}
A dipath $P=(x_1, y_1, x_2, y_2, \dots , x_{\ell}, y_{\ell})$ in $D(B)$ is called a {\em promotion path (with respect to $B$)}
if $x_1 \in B$ and
$x_k \prec_{v_k} y_k$ for $k \in [\ell]$,
where $v_k \in V$ denotes the applicant with $x_k \in E_{v_k}$
for $k \in [\ell]$.
\end{definition}
Note that $x_k \in E_{v_k} \cap B$ and $y_k \in E_{v_k} \setminus B$ for $k \in [\ell]$, 
because $D(B)$ is bipartite between $B$ and $E\setminus B$ and $x_1\in B$.
In particular, we refer to
a promotion path that starts at $x$ as a \emph{promotion $x$-path}.

The following is an easy consequence of \Cref{cor:<}.
\begin{corollary}\label{cor:improving-dipath}
Let $P=(x_1, y_1, x_2, y_2, \dots , x_{\ell}, y_{\ell})$ be a promotion path with respect to $B$ with $\ell\ge2$,
in which $v_k \in V$ denotes the applicant with $x_k, y_k \in E_{v_k}$ for $k \in [\ell]$.
Then, we have $x_2, \dots, x_\ell \in F$
and $w(v_2) < \cdots < w(v_\ell)$.
In addition, $w(v_1) \leq w(v_2)$ if $x_1 = s(v_1)$,
and $w(v_1) < w(v_2)$ if $x_1 = f(v_1)$. 
\end{corollary}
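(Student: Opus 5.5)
The plan is to inspect each arc $(y_k, x_{k+1})$ of $P$ for $k \in [\ell-1]$ and apply \Cref{cor:<}. That arc means $x_{k+1} \in C(B, y_k) - y_k$, and $y_k \in E_{v_k} \setminus B$ with $x_k \prec_{v_k} y_k$, where $x_k = B(v_k)$. We know $B$ is well-formed, so $x_k \in \{f(v_k), s(v_k)\}$. Let $v_k \in V_{i_k}$. Then $y_k$ lies strictly above $x_k$ in $\succ_{v_k}$, which leaves three cases.

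\emph{Case (a): $x_k = f(v_k)$.} Then $f(v_k) \prec_{v_k} y_k$. By \Cref{cor:<}~(2), $C(B,y_k) - y_k \subseteq F_{\le i_k - 1}$, and hence $x_{k+1} \in F_{\le i_k-1}$. Every element of $F$ has the form $f(u)$, so $x_{k+1} = f(v_{k+1})$ with $v_{k+1} \in V_{\le i_k - 1}$. This gives $w(v_{k+1}) > w(v_k)$.

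\emph{Case (b): $x_k = s(v_k)$ and $y_k \ne f(v_k)$.} Then $s(v_k) \prec_{v_k} y_k$. Since $y_k \notin B$, \Cref{cor:<}~(1) gives $C(B,y_k)-y_k \subseteq F_{\le i_k}$. Hence $x_{k+1} \in F$ and $w(v_{k+1}) \ge w(v_k)$.

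\emph{Case (c): $x_k = s(v_k)$ and $y_k = f(v_k)$.} Here \Cref{cor:<}~(3) applies, because $B$ is $f(v_k)$-excluding. It gives $C(B,y_k)-y_k \subseteq F_{\le i_k}$, so again $x_{k+1} \in F$ and $w(v_{k+1}) \ge w(v_k)$.

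In all three cases $x_{k+1} \in F$, so $x_2, \dots, x_\ell \in F$. Applied with $k=1$, the cases give the two claims about $v_1$: if $x_1 = f(v_1)$, case (a) yields $w(v_1) < w(v_2)$; if $x_1 = s(v_1)$, cases (b) and (c) yield $w(v_1) \le w(v_2)$.

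For the strict chain $w(v_2) < \cdots < w(v_\ell)$, take $k \ge 2$. We have already shown $x_k \in F \cap E_{v_k}$, so $x_k = f(v_k)$, since $f(v_k)$ is the unique element of $F$ in $E_{v_k}$. Case (a) then applies and gives $w(v_k) < w(v_{k+1})$.

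The only point needing care is that $x_k \in F$ forces $x_k = f(v_k)$ rather than $s(v_k)$. This needs $f(v_k) \ne s(v_k)$, which holds because $s(v_k) \notin \cl(F_{\le i_k}) \ni f(v_k)$. It also needs that $F \cap E_{v_k} = \{f(v_k)\}$, which holds because $F$ picks exactly one element per applicant and the classes $E_v$ are disjoint. Beyond this check the argument is a direct case split, so I do not expect a serious obstacle.
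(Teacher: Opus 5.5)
Your proof is correct and follows the paper's argument: on each arc $(y_k,x_{k+1})$ you apply \Cref{cor:<}~(1) or~(2) according to whether $x_k=s(v_k)$ or $x_k=f(v_k)$, then use $x_k\in F\cap E_{v_k}=\{f(v_k)\}$ to force the strict case for $k\ge 2$. Your separate case~(c) is harmless but unnecessary, since \Cref{cor:<}~(1) already covers $y_k=f(v_k)$: it only needs $s(v_k)\prec_{v_k} y_k$ and that $B$ be $y_k$-excluding.
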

\begin{proof}
Suppose that $v_k \in V_{i_k}$ for each $k \in [\ell]$.
Note that $x_1 \in \{f(v_1), s(v_1)\}$ since $B$ is well-formed.
Since $x_1 \prec_{v_1} y_1$,
by \Cref{cor:<}~(1) and~(2), 
we have $x_2 \in C(B, y_1) - y_1 \subseteq F_{\leq i_1}$ if $x_1 = s(v_1)$,
and $x_2 \in C(B, y_1) - y_1 \subseteq F_{\leq i_1-1}$ if $x_1 = f(v_1)$.
Thus, we obtain $x_2 \in F$. Moreover, $i_2 \leq i_1$ (or equivalently, $w(v_1) \leq w(v_2)$) if $x_1 = s(v_1)$,
and $i_2 \leq i_1-1$ (or equivalently, $w(v_1) < w(v_2)$) if $x_1 = f(v_1)$.
By repeatedly applying the same argument to $x_2, \dots , x_{\ell -1}$, 
we obtain $x_3, \dots, x_\ell \in F$ and $w(v_2) < w(v_3) <\cdots < w(v_\ell)$.
\end{proof}

We introduce $\alpha_B$ and $\beta_B$ as follows.
For $x \in B$,
we define
\begin{align}
    A_B(x) &:= \{ y \in F \setminus B \mid x \in C(B, y) - y\text{, i.e., } B -x + y \in \B \},\label{eq:A_B}\\
    \alpha_B(x) &:=
     \begin{cases}
        \max\{w(v) \mid v \in V,\ f(v) \in A_B(x) \} & \text{if $A_B(x) \neq \emptyset$},\\
        0 & \text{if $A_B(x) = \emptyset$}.
     \end{cases}\label{eq:alpha_B}
\end{align}
It is obvious that $\alpha_B(x) \geq 0$. 
Note also that if $\alpha_B(x) > 0$, i.e., if $A_B(x)$ contains some element $y$,
then $x \in F$, because $x \in C(B, y) - y \subseteq F$ by~\Cref{lem:lastarc}.
%
Similarly, for $y \in \cl(F) \setminus B$,
we define
\begin{align}\label{eq:beta_B}
    \beta_B(y) := \min \{ w(v) \mid v \in V,\ f(v) \in B,\ B + y - f(v) \in \B \}.
\end{align}
We note that $\beta_B(y) > 0$
for any $y \in \cl(F) \setminus B$, 
since
there exists $v \in V$ such that $f(v) \in C(B, y) - y$ by \Cref{lem:lastarc}.
Here, we use the fact that $C(B, y) - y \neq \emptyset$, which follows from our assumption that $\M$ has no loops. 


For a promotion path $P=(x_1, y_1, x_2, y_2, \dots , x_{\ell}, y_{\ell})$ with respect to a well-formed common base $B$,
the value $\alpha_B(x_k)$ is well defined for $k \in [\ell]$, because $x_k \in B$. 
Furthermore, $\beta_B(y_k)$ is well defined for $k \in [\ell]$, since 
$x_k \prec_{v_k} y_k$ implies $y_k \in \cl(F) \setminus B$ by \Cref{cor:<} (1) and (2). 
We define the {\em profit} of $P$ as 
\begin{align*}
    \profit_B(P) := \alpha_B(x_1) + \left( \sum_{k=1}^{\ell} w(v_k) \right) - \beta_B(y_\ell). 
\end{align*}
The following observation shows that
the profit of a promotion path $P$ can be viewed as the length of $P$, adjusted by the ``potentials'' $\alpha_B$ and $\beta_B$.
\begin{observation}\label{obs:potential}
    For a promotion path $P=(x_1, y_1, x_2, y_2, \dots , x_{\ell}, y_{\ell})$ with respect to a well-formed common base $B$,
    its profit $\profit_B(P)$ coincides with the length of $P$ in which the length of the arc $(x_k, y_k)$ ($k \in [\ell]$) is set to $\alpha_B(x_k) + w(v_k) - \beta_B(y_k)$
and that of the arc $(y_k, x_{k+1})$ ($k \in [\ell -1]$) is set to 
$\beta_B(y_{k}) - \alpha_B(x_{k+1})$.
\end{observation}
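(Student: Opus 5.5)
This is a telescoping identity, so I will expand the sum of the proposed arc lengths along $P$ and check that it collapses to $\profit_B(P)$. Before that, I will confirm that every quantity involved is defined. The text preceding the observation already shows that $\alpha_B(x_k)$ is defined for each $k \in [\ell]$, because $x_k \in B$. It also shows that $\beta_B(y_k)$ is defined for each $k \in [\ell]$: the relation $x_k \prec_{v_k} y_k$ gives $y_k \in \cl(F) \setminus B$ by \Cref{cor:<}~(1) and~(2). Hence every proposed arc length is a well-defined real number.

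Next I will write out the length of $P$. It consists of the $\ell$ arcs $(x_k, y_k)$ and the $\ell-1$ arcs $(y_k, x_{k+1})$:
\begin{align*}
\sum_{k=1}^{\ell} \bigl(\alpha_B(x_k) + w(v_k) - \beta_B(y_k)\bigr) + \sum_{k=1}^{\ell-1} \bigl(\beta_B(y_k) - \alpha_B(x_{k+1})\bigr).
\end{align*}
I will regroup this into three parts:
\begin{itemize}
\item the weight terms contribute $\sum_{k=1}^{\ell} w(v_k)$;
\item the $\alpha$ terms contribute $\sum_{k=1}^{\ell}\alpha_B(x_k) - \sum_{k=2}^{\ell}\alpha_B(x_k) = \alpha_B(x_1)$;
\item the $\beta$ terms contribute $-\sum_{k=1}^{\ell}\beta_B(y_k) + \sum_{k=1}^{\ell-1}\beta_B(y_k) = -\beta_B(y_\ell)$.
\end{itemize}
Adding the three parts gives exactly $\profit_B(P)$.

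The case $\ell = 1$ needs no separate treatment, since the second sum is then empty and the computation above still applies. I do not expect any step to be an obstacle; the only point requiring care is the well-definedness of the $\beta_B$ values, and that is covered by the earlier remark.
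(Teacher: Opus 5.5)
Your proof is correct and is the telescoping computation the paper intends; the paper states this as an observation and writes out no proof. The $\alpha_B$ and $\beta_B$ terms on consecutive arcs cancel, leaving $\alpha_B(x_1) + \sum_{k=1}^{\ell} w(v_k) - \beta_B(y_\ell) = \profit_B(P)$, and your well-definedness check via \Cref{cor:<} is the same remark the paper makes just before defining the profit.
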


An arc $(y,x)$ of $D(B)$ with $y \in \cl(F) \setminus B$ and $x \in B$ is said to be \emph{essential}
if $\alpha_B(x) < \beta_B(y)$.
From the perspective of \Cref{obs:potential},
the arc $(y,x)$ is essential if and only if
its length, adjusted by $\alpha_B$ and $\beta_B$, is positive.
We are now ready to define an \emph{essential} promotion path.
\begin{definition}
A promotion path $P=(x_1, y_1, x_2, y_2, \dots , x_{\ell}, y_{\ell})$ with respect to $B$ is said to be \emph{essential}
if $(y_k, x_{k+1})$ is essential for every $k \in [\ell-1]$.
\end{definition}

We now claim that a well-formed common base $B$ is popular if and only if
$D(B)$ has no essential promotion path with positive profit.
For ease of later use, we state this claim in terms of the initial element of a path.

We say that $x$ is \emph{positive with respect to $B$}
if there is an essential promotion $x$-path with respect to $B$ having positive profit.
The (non)existence of a positive element can characterize the popularity in the following sense.
The proof is given in~\Cref{sec:proof:popular-improving}. 

\begin{proposition}\label{prop:popular-improving}
A well-formed common base $B$ is popular if and only if
there are no positive elements with respect to $B$.
\end{proposition}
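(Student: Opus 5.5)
The plan is to reduce both directions to \Cref{lem:negativecycle}, which says that $B$ is popular if and only if $D(B)$ has no dicycle of positive $w_B$-weight. The potentials $\alpha_B$ and $\beta_B$ of \Cref{obs:potential} will be used in two ways. In the forward direction they close a promotion path into an improving exchange. In the backward direction they telescope along a positive dicycle.

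\emph{Popular $\Rightarrow$ no positive element.} I prove the contrapositive. Let $P=(x_1,y_1,\dots,x_\ell,y_\ell)$ be an essential promotion $x_1$-path with $\profit_B(P)>0$; I will build a common independent set $I$ with $\Delta_w(B,I)<0$.
\begin{itemize}
\item At the end of $P$, choose $v$ with $f(v)\in B$, $B+y_\ell-f(v)\in\B$ and $w(v)=\beta_B(y_\ell)$. Applicant $v$ will be dropped entirely, at cost $w(v)$.
\item At the start of $P$, if $\alpha_B(x_1)>0$, choose $u$ with $f(u)\in A_B(x_1)$ and $w(u)=\alpha_B(x_1)$. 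Since $B$ is well-formed and $f(u)\notin B$, we have $B(u)=s(u)$. Let $u$ move from $s(u)$ to $f(u)$, which enters $B$ in place of $x_1$; this gains $w(u)$. If $\alpha_B(x_1)=0$, skip this step.
\item Along $P$, each $v_k$ moves from $x_k$ to $y_k$, and $y_k$ enters $B$ in place of $x_{k+1}$ (or $f(v)$ when $k=\ell$). This gains $\sum_k w(v_k)$.
\end{itemize}
Let $I$ be the result. Then $\Delta_w(B,I)=-\profit_B(P)<0$.

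Membership $I\in\I_{\rm part}$ is clear. For $I\in\I$, I would show that the matroid exchange $B+f(u)-x_1+y_1-x_2+\cdots+y_\ell-f(v)$ is a base by \Cref{lem:uniquematching}. This requires that the exchange graph has a unique perfect matching, i.e.\ there is no shortcut arc $(y_j,x_k)$ with $k<j+1$ (likewise for $f(u)$ and $f(v)$).

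\emph{Main obstacle: excluding shortcuts.} I expect this to be the hardest step. I would handle it by choosing $P$ minimal: among essential promotion paths with positive profit, pick one with the fewest arcs. The tool is \Cref{cor:improving-dipath}, which says $w(v_2)<\cdots<w(v_\ell)$ and $x_2,\dots,x_\ell\in F$. Suppose a shortcut $(y_j,x_k)$ exists with $k\le j$. Then $\beta_B(y_j)\le w(v_k)$, since $x_k=f(v_k)$ lies in $C(B,y_j)$ and the $\beta_B$ minimum is at most this term. Using this bound, I expect either the shortened path to be still essential with profit at least as large, or some bypassed arc to fail essentiality. Either way this contradicts minimality.

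The endpoints are treated similarly. For $f(v)$, the choice of $v$ as a minimizer of $w$ in the definition of $\beta_B(y_\ell)$ should rule out shortcuts. For $f(u)$, essentiality of the arcs gives $\alpha_B(x_{k+1})<\beta_B(y_k)$, which should rule out shortcuts from the $f(u)$ end. Applicants $u$ and $v$ may coincide with some $v_k$; this also needs a check, by the same weight comparisons as in Case~1 of \Cref{lem:span}.

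\emph{No positive element $\Rightarrow$ popular.} I again prove the contrapositive. Suppose $B$ is not popular, and take a positive dicycle $C$ in $D(B)$ with the fewest vertices. Its arcs $(x,y)$ inside a partition class are of two kinds: promotions ($y\succ x$, weight $w(v)$) and demotions ($y\prec x$, weight $-w(v)$). Reweight each arc by the potentials as in \Cref{obs:potential}; around a cycle the potential terms telescope, so the total weight is unchanged.

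Now cut $C$ at every demotion arc and at every non-essential arc $(y,x)$, i.e.\ one with $\alpha_B(x)\ge\beta_B(y)$. The next steps are:
\begin{itemize}
\item Show that each demotion arc, together with its potential terms, has nonpositive adjusted contribution. This uses \Cref{cor:<} and the fact that demotions only go from $f(v)$ to $s(v)$ or worse.
\item Note that each non-essential arc has nonpositive adjusted length by definition.
\end{itemize}
Hence some maximal segment between consecutive cuts, which is an essential promotion path, has positive adjusted length. By \Cref{obs:potential}, that adjusted length equals its profit, so its start is a positive element. The delicate point in this direction is cycles with no cut at all, and the boundary cases where $\beta_B(y)$ may be undefined because $y\notin\cl(F)$. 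Minimality of $C$ together with \Cref{cor:<} should exclude both.
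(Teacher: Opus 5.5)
Your overall plan matches the paper's: the exchange of \Cref{cl:exchange}~(1) for one direction, and a positive dicycle from \Cref{lem:negativecycle} for the other. Your $I$ is exactly the paper's set. However, the two steps you leave as expectations are where the proof lives, and as sketched neither goes through.

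\textbf{Forward direction.} A backward arc $(y_j,x_k)$ with $k\le j$ does not yield a shorter $x_1$-path; it closes a cycle. So choosing $P$ with the fewest arcs gives no contradiction, and the ``shortened path'' argument does not apply. What is true, and what you need, is that such arcs never exist for \emph{any} promotion path. Since $x_j\prec_{v_j}y_j$, \Cref{cor:<}~(1),(2) give $C(B,y_j)-y_j\subseteq F$, so $x_k\in F$. Then $x_j\in F$ as well: trivially if $k=j$, and by \Cref{cor:improving-dipath} if $j\ge 2$. Hence \Cref{cor:<}~(2) puts $x_k\in F_{\le i_j-1}$, i.e.\ $w(v_k)>w(v_j)$, contradicting $w(v_k)\le w(v_j)$ from \Cref{cor:improving-dipath}. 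This one fact settles everything you deferred:
\begin{itemize}
\item The exchange graph is triangular; the extra edges at $f(u)$ and $f(v)$ are forward and harmless.
\item It gives $f(v)\notin\{x_1,\dots,x_\ell\}$, hence $v\neq v_k$.
\item $u\neq v_k$ follows because $\alpha_B(x_1)>0$ forces $x_1\in F$, so all $x_k\in F$, while $B(u)=s(u)\notin F$.
\end{itemize}
No minimality, essentiality, or weight comparison as in \Cref{lem:span} is needed.

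\textbf{Backward direction.} The potential argument breaks at the demotions. The potential $\beta_B$ is defined only on $\cl(F)\setminus B$, but a demotion target need not lie in $\cl(F)$. For example, if $F$ is dependent (equal-weight applicants whose first elements form a circuit), then $\rank(F)<|V|$ and no last resort outside $F$ lies in $\cl(F)$. Minimality of $C$ gives no control over this, since demotions are exactly what pay for the promotions (compare $a_1\to L(a)$ in the Example). Essentiality of an arc leaving such a $y$ is not even defined. There are two further problems:
\begin{itemize}
\item Cutting $C$ at a demotion arc $(x,y)$ leaves pieces ending at $x\in B$ and starting at $y\notin B$, which are not promotion paths.
\item Your claim that demotions go from $f(v)$ to $s(v)$ or worse is false. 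A demotion can start at $s(v)$, or go from $f(v)$ to an element strictly between $s(v)$ and $f(v)$.
\end{itemize}

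The paper never assigns a potential to a demotion target. It first notes that a cycle consisting only of promotions is impossible by \Cref{cor:<}~(2). It then splits $C$ right after each demotion target into pieces $P'=(x_1,y_1,\dots,x_\ell,y_\ell,x_{\ell+1},y_{\ell+1})$ and picks one with positive $w_B$-sum, so $\ell\ge1$. Since $x_\ell\prec_{v_\ell}y_\ell$, \Cref{cor:<} gives $x_{\ell+1}\in F$, hence $\beta_B(y_\ell)\le w(v_{\ell+1})$. Together with $\alpha_B\ge 0$ this yields $\profit_B(P)\ge\sum_{e\in V(P')}w_B(e)>0$ for $P=P'-x_{\ell+1}-y_{\ell+1}$. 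Pieces between two consecutive demotions have negative sum and are simply discarded. Essentiality then comes from \Cref{cl:exchange}~(2): a shortest positive-profit promotion path is essential, because splitting at a non-essential arc gives $\profit_B(P)\le\profit_B(P')+\profit_B(P'')$ by \Cref{obs:potential}. Your idea of also cutting at non-essential arcs is fine once the demotion blocks are handled this way.
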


\subsection{Characterization and Algorithm}

The following proposition shows that whether a vertex $x$ is positive does not depend on the choice of $B$, provided that $B$ contains $x$.
The proof is given in~\Cref{sec:proof:positivity}.

\begin{proposition}\label{prop:positivity}
    Let $x \in F \cup S$.
    For any $x$-including well-formed common bases $B$ and $B'$,
    the element $x$ is positive with respect to $B$
    if and only if it is positive with respect to $B'$.
\end{proposition}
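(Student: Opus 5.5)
Our plan is to reduce to the case where $B$ and $B'$ differ by a single exchange, and then transfer essential promotion paths across that exchange. First we show that the $x$-including well-formed common bases are connected under elementary moves. Both $B$ and $B'$ lie in $F\cup S$ and satisfy $|B\cap F_{\le i}|=|B'\cap F_{\le i}|=\rank(F_{\le i})$ for every $i$ (\Cref{rem:clequivalent}). So we may view the family of well-formed common bases as the set of common bases of $\M_{\mathrm{part}}$ restricted to $F\cup S$ and of a matroid obtained from $\M$ by successively contracting levels, i.e.\ the direct sum of $\M|F_{\le i}/F_{\le i-1}$ together with the top level. The condition of containing $x$ fixes one element and amounts to a contraction. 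Standard matroid-intersection exchange (via \Cref{lem:uniquematching}) then lets us move from $B$ to $B'$ through $x$-including well-formed common bases. Each move swaps, for one applicant $v$, the pair $f(v)$ and $s(v)$, compensated by a second applicant $u$ on an alternating cycle. Ideally we would use a shortest alternating cycle, so that every intermediate set is again a well-formed base, and we would induct on $|B\triangle B'|$.

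Second, for adjacent bases we prove that $\alpha$, $\beta$ and the essential arcs are essentially invariant. By \Cref{lem:lastarc} and \Cref{cor:<}, every arc $(y,x)$ used by a promotion path with $y\prec$-above has $C(B,y)-y\subseteq F_{\le i^*}$ and meets $F_{i^*}$. Hence $\beta_B(y)$ should equal $w_{i^*}$, where $i^*$ is the minimum level with $y\in\cl(F_{\le i^*})$, independent of $B$. We would verify this: the circuit meets $F_{i^*}$ in an element $f(v)$ with $w(v)=w_{i^*}$, and all other elements of the circuit lie in lower-indexed levels, which carry larger weights. Similarly, $\alpha_B(x)$ for $x\in F$ depends only on which $f(u)\notin B$ have $x$ in their fundamental circuit. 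Since all $f(u)\notin B$ with $u\in V_i$ lie in $\cl(B\cap F_{\le i})$, we would try to express $\alpha_B(x)$ as the largest $w_i$ such that $x$ lies in a circuit of $F_{\le i}$ containing some $f(u)$ with $u\in V_i$. This is a solution-independent quantity (for $x\in F$, it holds whenever $x\in B$), and we expect it to follow by circuit elimination within $\M|F_{\le i}$.

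Third, given an essential promotion $x$-path $P$ in $D(B)$, we construct one in $D(B')$ with profit at least as large. Essentiality gives $\alpha(x_{k+1})<\beta(y_k)$, and by \Cref{cor:improving-dipath} the levels strictly increase along $P$. When an arc $(y_k,x_{k+1})$ disappears in $D(B')$ (because $x_{k+1}\notin B'$ or because the circuit changed), we reroute. Since $y_k\in\cl(B'\cap F_{\le i^*})$, the circuit $C(B',y_k)$ contains some element of $F_{i^*}\cap B'$ or of a lower level. We then either shortcut to a later vertex of $P$, or replace $x_{k+1}$ by an element $x'$ of the same or a smaller weight class, which is still essential by the invariance of $\alpha$ and $\beta$ established in the second step.

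The main obstacle is this rerouting. It requires showing that the replacement keeps the $\prec$-condition ($x'\prec y'$ for an available $y'$) and does not decrease profit. Simply dropping a prefix of $P$ loses the term $\alpha(x_1)$ in the profit, so we must keep the path starting at $x$. We would first try to use the strict level monotonicity from \Cref{cor:improving-dipath}, truncating $P$ at the first broken arc and arguing that the truncated tail, with its final $\beta$ replaced, already has positive profit because $\alpha(x_{k+1})<\beta(y_k)$. If that fails, we would fall back on a potential argument based on \Cref{obs:potential}. The plan there is to show that positive essential paths correspond to the nonexistence of a certain potential on the solution-independent data, which would make the statement immediate.
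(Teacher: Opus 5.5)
Your second step is on target: $\beta_B(y)=w_{i^*}$ is exactly what the paper proves. Your guessed formula for $\alpha_B(x)$ also has the same value as the paper's $\alpha(x)=\max\{w_i \mid x\in F_{\le i},\ x\in\cl(F_{\le i}-x)\}$, which the paper establishes by a rank count.

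\textbf{The gap.} Your third step is where the proposition is actually decided, and you leave it open. You treat an essential arc $(y_k,x_{k+1})$ of $D(B)$ as something that may vanish in $D(B')$, and you propose an unspecified rerouting. The missing idea is that essential arcs are themselves solution-independent. For $y\in\cl(F)$, the set $C^*(B,y)=\{x\in C(B,y)-y \mid \alpha_B(x)<\beta_B(y)\}$ is the same for every $y$-excluding well-formed common base. To prove it:
\begin{enumerate}
\item Take $x\in C^*(B,y)$ and apply the simultaneous exchange property to $B'$ and $B-x+y$ at $y$. This gives $x'\in B'\setminus(B-x+y)$ with $B'+y-x'\in\B$ and $B-x+x'\in\B$.
\item By Lemma~\ref{lem:lastarc}, $x'=f(v')\in F$, so $w(v')\ge\beta_{B'}(y)=\beta_B(y)$.
\item If $x'\neq x$, then $x'\in A_B(x)$, so $w(v')\le\alpha_B(x)<\beta_B(y)$, a contradiction.
\item Hence $x'=x$, i.e.\ $x\in B'$ and $B'+y-x\in\B$. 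Invariance of $\alpha$ and $\beta$ then gives $x\in C^*(B',y)$.
\end{enumerate}
With this lemma, an essential promotion $x$-path of $D(B)$ is verbatim one of $D(B')$. Indeed, $x_1=x\in B'$ and $x_1\prec_{v_1}y_1$ force $y_1\notin B'$; then $x_2\in C^*(B',y_1)\subseteq B'$; and so on. The profit is unchanged, since it involves only $\alpha$, $\beta$ and $w$. In particular, the situation ``$x_{k+1}\notin B'$'' that you plan to repair never occurs.

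\textbf{Your first and fallback steps.} This lemma makes your first step unnecessary, and as you describe it that step is also wrong. Moves between $x$-including well-formed common bases cannot in general be two-applicant $f/s$ swaps. For example, take four applicants $a,b,c,d$ of equal weight with $E_v=\{f_v,s_v,L(v)\}$. Let $\M$ be the direct sum of two rank-$2$ matroids with parallel classes $\{f_a,f_d\},\{f_b,f_c\}$ and $\{s_a,s_c\},\{s_b,s_d\}$. The only well-formed common bases are $\{f_a,f_b,s_c,s_d\}$ and $\{f_c,f_d,s_a,s_b\}$. Adding a fifth applicant whose first element is a coloop of $\M$ supplies a common $x$. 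So an induction on $|B\triangle B'|$ would have to transfer paths across long exchange cycles, which your plan does not address.

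Your truncation fallback fails for a structural reason. By Observation~\ref{obs:potential}, $\profit_B(P)=\profit_B(P')+(\beta_B(y_k)-\alpha_B(x_{k+1}))+\profit_B(P'')$, and essentiality makes the middle term positive. So cutting at an essential arc can only lose profit, and neither piece need have positive profit. The potential-based alternative is not developed enough to evaluate.
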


We say that $x \in F \cup S$ is \emph{positive}
if it is positive with respect to some $x$-including well-formed common base. 
By~\Cref{prop:positivity}, this is equivalent to being positive with respect to every $x$-including well-formed common base, provided that at least one such well-formed common base exists.
%
Let $X_+$ denote the set of all positive elements.
Then,~\Cref{prop:popular-improving,prop:positivity} immediately imply the following corollary. 
\begin{corollary}\label{cor:populariffX+}
    A well-formed common base is popular if and only if it is $X_+$-excluding.
\end{corollary}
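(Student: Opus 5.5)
The corollary should follow by combining \Cref{prop:popular-improving} with \Cref{prop:positivity}, keeping careful track of which base each positivity statement refers to. Fix a well-formed common base $B$. By \Cref{prop:popular-improving}, $B$ is popular if and only if no element is positive with respect to $B$. So it is enough to show that the elements positive with respect to $B$ are exactly the elements of $X_+ \cap B$. In particular, $B$ has a positive element with respect to itself if and only if $B \cap X_+ \neq \emptyset$, which is the claimed equivalence.

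To make this identification, I first note where a positive element can lie. If $x$ is positive with respect to $B$, then by definition some promotion $x$-path exists, and this requires $x = x_1 \in B$. Since $B$ is well-formed, $B \subseteq F \cup S$, so $x \in B \subseteq F \cup S$. Hence $x$ is a candidate for membership in $X_+$, and $B$ is an $x$-including well-formed common base.

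Both inclusions then follow. For the first, if $x$ is positive with respect to $B$, then $B$ is an $x$-including well-formed common base witnessing positivity, so $x \in X_+$ by definition, and $x \in B$. For the converse, take $x \in X_+ \cap B$. By definition, $x$ is positive with respect to some $x$-including well-formed common base $B'$. Since $B$ is also $x$-including and well-formed, \Cref{prop:positivity} transfers positivity from $B'$ to $B$, so $x$ is positive with respect to $B$.

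Combining the two inclusions gives: $B$ is popular $\iff$ no element is positive with respect to $B$ $\iff$ $B \cap X_+ = \emptyset$ $\iff$ $B$ is $X_+$-excluding. The real work sits in \Cref{prop:popular-improving,prop:positivity}, which are assumed here. The one point that needs care is the observation that positivity with respect to $B$ forces $x \in B$. Without it, the definition of $X_+$ (positivity with respect to some $x$-including base) would not match positivity with respect to the particular base $B$, and \Cref{prop:positivity} could not be applied.
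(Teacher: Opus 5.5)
Your proposal is correct and matches the paper's argument: the paper derives this corollary immediately from \Cref{prop:popular-improving} and \Cref{prop:positivity}. Your identification of the elements positive with respect to $B$ with $X_+ \cap B$ (using that a promotion $x$-path forces $x \in B \subseteq F \cup S$) simply writes out that step.
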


This, together with \Cref{ob:assumebase} and \Cref{prop:populartowellformed}, yields the following characterization.

\begin{corollary}\label{cor:sol_ind_characterization}
    A common independent set is popular if and only if it is an $X_+$-excluding well-formed common base.
\end{corollary}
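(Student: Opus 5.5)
The plan is to chain together three earlier results: \Cref{ob:assumebase}, \Cref{prop:populartowellformed}, and \Cref{cor:populariffX+}. We prove the two directions separately.

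\emph{Only if.} Let $I$ be a popular common independent set.
\begin{itemize}
\item By \Cref{ob:assumebase}~(1), $I$ is a common base; write $B := I$.
\item By \Cref{prop:populartowellformed}, $B$ is well-formed.
\item Since $B$ is a popular well-formed common base, \Cref{cor:populariffX+} shows that $B$ is $X_+$-excluding.
\end{itemize}

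\emph{If.} Let $B$ be an $X_+$-excluding well-formed common base.
\begin{itemize}
\item By \Cref{cor:populariffX+}, $B$ is popular as a common base. That is, $\Delta_w(B,B')\ge 0$ for every common base $B'$.
\item By \Cref{ob:assumebase}~(2), this already implies $\Delta_w(B,I')\ge 0$ for every common independent set $I'$. Hence $B$ is a popular common independent set in the sense of \Cref{def:popular}.
\end{itemize}

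The only point needing care is the meaning of $X_+$ in \Cref{cor:populariffX+}. Positivity of $x$ is defined via \emph{some} $x$-including well-formed common base. \Cref{prop:positivity} makes this notion independent of the chosen base. Together with \Cref{prop:popular-improving}, it then gives: a well-formed common base $B$ has a positive element with respect to it exactly when $B\cap X_+\neq\emptyset$. Here a positive element with respect to $B$ necessarily lies in $B$, and any $x \in B \cap X_+$ is positive with respect to $B$ itself. Since \Cref{cor:populariffX+} is stated earlier, we may invoke it directly. No further obstacle arises: the corollary is a direct combination of the cited results, with \Cref{ob:assumebase} handling the passage between common independent sets and common bases.
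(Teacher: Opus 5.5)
Your proposal is correct and follows the paper's own argument: the corollary comes from combining \Cref{cor:populariffX+} with \Cref{ob:assumebase} and \Cref{prop:populartowellformed}. The extra appeal to \Cref{ob:assumebase}~(2) in the ``if'' direction is harmless but not needed, since popularity in \Cref{cor:populariffX+} is already in the sense of \Cref{def:popular} (through \Cref{lem:negativecycle}).
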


It is worth noting that this is a solution-independent characterization of a popular common independent set, since $X_+$ does not depend on the choice of a common base.
By this characterization, we obtain the following algorithm for finding a popular common independent set if one exists. 




\begin{algorithm}
\caption{\textsc{PopularCommonBase}}
\label{alg:popular-common-base}
\begin{algorithmic}[1]
\Require An instance with last resorts; see \Cref{sec:lastresort}.
\Ensure A popular common independent set, if one exists.
\State Compute the sets $F$ and $S$.
\State Compute the set $X_+$ of all positive elements.
\If{there exists an $X_+$-excluding well-formed common base}
    \State return it.
\Else
    \State conclude that no popular common independent set exists.
\EndIf
\end{algorithmic}
\end{algorithm}

To show that this algorithm runs in polynomial time, we use the following propositions, whose proofs are given in Sections~\ref{sec:proof:algo:well-formed} and~\ref{sec:proof:positive-algo}, respectively.  

\begin{proposition}\label{prop:algo:well-formed}
    For any disjoint $X, Y \subseteq F \cup S$,
    we can find an $X$-including and $Y$-excluding well-formed common base in polynomial time if one exists.
\end{proposition}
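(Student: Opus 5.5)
Our approach is to reduce the search to a single matroid intersection problem. The conditions in the statement are of three kinds: $B \subseteq F \cup S$ with $B(v)\in\{f(v),s(v)\}$; the nested rank conditions $|B\cap F_{\le i}| = \rank(F_{\le i})$ for all $i\in[p]$ (Remark~\ref{rem:clequivalent}); and the inclusion/exclusion constraints $X\subseteq B$, $B\cap Y=\emptyset$. We encode all three in a weighted common base problem.

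First, compute $F$ and $S$ greedily using the independence oracle (closure tests). Then restrict the ground set to $E^\circ := (F\cup S)\setminus Y$. If $X$ contains two elements of some $E_v$, or if $X\notin\I$, we report infeasibility immediately. Otherwise we handle $X$ by contraction. We contract $X$ in $\M$ and delete the remaining elements of $E_v$ for every $v$ with $X\cap E_v\neq\emptyset$. For these applicants the partition constraint becomes void, so we work with $\M_{\mathrm{part}}$ restricted to the other applicants. Since $E^\circ$ contains at most two elements per applicant, the resulting instance has polynomial size.

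Next, we turn the lexicographic rank conditions into a weight function. Give each $e\in F_i\cap E^\circ$ weight $N^{p-i+1}$ for a sufficiently large $N$ (e.g., $N=|V|+1$), and give each $e\in S$ weight $0$. We then compute a maximum-weight common base $B$ of the restricted matroids, using the weighted matroid intersection algorithm (e.g., \cite[Theorem 41.7]{Schrijver2003}), or report that no common base exists.

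The key claim is the following. An $X$-including, $Y$-excluding well-formed common base exists if and only if this optimum $B$ (with $X$ added back) satisfies $|B\cap F_{\le i}| = \rank(F_{\le i})$ for all $i$; in that case $B$ is such a base. One direction is clear: $B$ is a common base of the original matroids contained in $F\cup S$, and the rank conditions are exactly well-formedness. For the converse, let $B^*$ be a well-formed $X$-including, $Y$-excluding common base. Then $B^*$ is feasible for the restricted problem. Moreover, for each $i$ we have $|B^*\cap F_{\le i}| = \rank(F_{\le i})$, and this is the maximum possible for any independent set. The weights are chosen so that the lexicographic vector $(|B\cap F_1|,\dots,|B\cap F_p|)$ determines the ordering of weights, because $N$ exceeds every count. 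Hence any maximum-weight $B$ has $|B\cap F_{\le i}| \ge |B^*\cap F_{\le i}|$ for every $i$, by lexicographic comparison of prefix sums. Combined with the upper bound $\rank(F_{\le i})$, equality holds.

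The main obstacle is this lexicographic argument, and we would check it carefully. Maximizing the weight maximizes $|B\cap F_1|$ first, then $|B\cap F_2|$, and so on. Since $B^*$ attains $\rank(F_{\le i})$ for every prefix simultaneously, the lexicographic maximum must also attain all of them: inductively, $|B\cap F_{\le i-1}|=\rank(F_{\le i-1})$, and $B^*$ shows that $|B\cap F_i| = \rank(F_{\le i})-\rank(F_{\le i-1})$ is achievable. A second point to check is that the contraction correctly preserves the rank accounting. This holds because the condition is stated on the full $B$, and we verify it on $B$ after re-adding $X$. All steps are polynomial in $|E|$ and in the number of oracle calls.
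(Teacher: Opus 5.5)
Your reduction is correct in substance, but it takes a different route from the paper. The paper never shrinks the instance. It works with common bases of the original (last-resort) instance and uses small weights: $\theta(e)=p+1-i$ on $F_i\setminus Y$, $\theta(e)=-1$ on $Y\cup(E\setminus(F\cup S))$, $\theta(e)=0$ otherwise, plus a bonus $\varphi=1$ on $X$. The identity $\sum_i(p+1-i)|B\cap F_i|=\sum_i|B\cap F_{\le i}|$, together with the termwise bounds $|B\cap F_{\le i}|\le\rank(F_{\le i})$, shows that the optimum equals $\sum_i\rank(F_{\le i})+|X|$ exactly when all conditions hold. So the paper needs no lexicographic induction and no exponentially scaled weights, and it handles $X$ and $Y$ as penalties rather than by contraction and deletion. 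Your version instead enforces inclusion and exclusion as hard constraints and encodes the priority of $F_1,F_2,\dots$ through the weights $N^{p-i+1}$. Your induction is right: $B^*$ attains every prefix bound, hence so does the lexicographic optimum. The weights have polynomially many bits, so this is also a valid polynomial-time reduction.

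There is one small hole, in the direction you call clear. After you delete $Y$ and everything outside $F\cup S$, an applicant may lose all its elements. The restricted matroids can then have a common base of size less than $|V|-|X|$ that still passes your rank test. For example, take two applicants $u,z$ of equal weight whose first elements $a=f(u)$ and $c=f(z)$ are parallel. Then $\rank(F_{\le 1})=1$, $s(u)=L(u)$ and $s(z)=L(z)$. With $X=\emptyset$ and $Y=\{c,L(u),L(z)\}$, the restricted ground set is just $\{a\}$. Its common base $\{a\}$ satisfies $|\{a\}\cap F_{\le 1}|=\rank(F_{\le 1})$, yet $z$ is unassigned, and in fact no $Y$-excluding common base exists. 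The fix is to require $|B|=|V|-|X|$, i.e., to compute a maximum-weight common independent set of exactly that cardinality, which weighted matroid intersection provides. In the converse direction, $B^*\setminus X$ shows that this cardinality is attainable, so the rest of your argument is unaffected. The paper avoids the issue because, thanks to the last resorts, every common base of the unrestricted instance has size $|V|$.
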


\begin{proposition}\label{prop:positive-algo}
    We can compute the set $X_+$ of all positive elements in polynomial time.
\end{proposition}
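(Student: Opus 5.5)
The plan is to compute $X_+$ element by element: for each $x \in F \cup S$, decide in polynomial time whether $x$ is positive. By \Cref{prop:positivity}, it suffices to take one $x$-including well-formed common base $B$ and test positivity of $x$ with respect to $B$. We find such a $B$ using \Cref{prop:algo:well-formed} with $X=\{x\}$ and $Y=\emptyset$. If no $x$-including well-formed common base exists, then by definition $x$ is not positive and we put $x \notin X_+$. Since $|F \cup S| \le 2|V|$, it remains to show that, for a fixed well-formed $B$ and $x \in B$, the existence of an essential promotion $x$-path of positive profit can be decided in polynomial time.

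First I would construct $D(B)$ with polynomially many independence-oracle calls. I would then compute $\alpha_B(x')$ for every $x' \in B$ and $\beta_B(y)$ for every $y \in \cl(F)\setminus B$; both are just maxima or minima over explicitly computable sets of fundamental-circuit memberships. Next I would form the subdigraph $H$ of $D(B)$ that keeps:
\begin{itemize}
\item each arc $(x_k,y_k)$ with $x_k \prec_{v_k} y_k$, where $x_k,y_k$ lie in the same class $E_{v_k}$;
\item each arc $(y,x')$ with $y \in \cl(F)\setminus B$ that is essential, i.e.\ $\alpha_B(x')<\beta_B(y)$.
\end{itemize}
Essential promotion $x$-paths are then exactly the directed paths in $H$ that start at $x$ and end at a vertex of $E\setminus B$. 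All tail vertices $y_k$ automatically lie in $\cl(F)$ by \Cref{cor:<}.

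The core step is maximizing profit over such paths. By \Cref{obs:potential}, profit equals path length with arc lengths $\alpha_B(x_k)+w(v_k)-\beta_B(y_k)$ and $\beta_B(y_k)-\alpha_B(x_{k+1})$, and a general longest-path problem is hard. Here, however, $H$ restricted to paths from $x$ is acyclic in the relevant sense. By \Cref{cor:improving-dipath}, along any promotion path the applicants satisfy $w(v_2)<\cdots<w(v_\ell)$, with $w(v_1)\le w(v_2)$. Hence each applicant appears at most once, except that $v_1$ might coincide with $v_2$ in weight but not in identity.

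More concretely, every arc $(y_k,x_{k+1})$ with $k\ge1$ leads from an element of class $v_k$ to an $x_{k+1}\in F$ of strictly larger weight level after the first step. So I would layer the vertices by the weight index $i$ of their applicant. I would then run a dynamic program over levels in increasing weight order: for each $x'\in B\cap F$, compute the maximum length of a path in $H$ that starts from $x'$ as a non-initial vertex. Then I would combine with the first arc from $x$, taking $\alpha_B(x)$ into account, and check whether the best value is positive.

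The main obstacle I expect is verifying that the path-length DP never needs to prevent vertex repetition beyond what the strict weight monotonicity already enforces. The possible exception is the first step when $x_1=s(v_1)$, where $w(v_1)=w(v_2)$ is allowed. I would handle this by treating $x_1$ separately: it is the unique vertex at its position, and every later $x_k$ belongs to a strictly heavier level. This makes all walks enumerated by the DP genuine dipaths.

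With this, each test is a longest-path computation in a DAG, hence polynomial. Over all $x$, together with the polynomial-time calls to \Cref{prop:algo:well-formed}, this computes $X_+$ in polynomial time.
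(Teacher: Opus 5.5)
Your proposal is correct and follows the paper's proof essentially step by step. The paper also takes one $x$-including well-formed base from \Cref{prop:algo:well-formed} and builds the subdigraph of promotion arcs and essential arcs with the potential-adjusted lengths of \Cref{obs:potential}. It then gets acyclicity from the weight monotonicity of \Cref{cor:<} and finishes with a longest-path computation in a DAG.

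The one difference is that the paper proves directly that the whole subdigraph is a DAG: on a directed cycle every $x_k$ lies in $F$, so \Cref{cor:<}~(2) forces $w(v_1)<\cdots<w(v_1)$. This makes your separate treatment of $x_1=s(v_1)$ and the level-by-level DP unnecessary, though harmless. If you do keep the DP, note that computing best paths \emph{from} each vertex requires processing heavier levels first, not in increasing weight order as you wrote.
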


By the above argument, we can prove \Cref{thm:main}. 

\begin{proof}[Proof of~\Cref{thm:main}]
As shown in \Cref{sec:lastresort}, we may assume that each $E_v$ contains the last resort.  
We prove that \textsc{PopularCommonBase} correctly finds a popular common base in polynomial time, if one exists.  
The correctness of this algorithm follows from 
\Cref{cor:sol_ind_characterization}. 

In the first line of the algorithm, we can easily compute $F$ and $S$ in polynomial time in a straightforward way. 
In the second line, we can compute $X_+$ in polynomial time by \Cref{prop:positive-algo}. 
In the third line, we can compute an $X_+$-excluding well-formed common base in polynomial time by \Cref{prop:algo:well-formed} if one exists. 
Therefore, the algorithm runs in polynomial time. 
\end{proof}

\section{Proofs of the Key Propositions}
\label{sec:proofofkeytheorems}

\subsection{Proof of~\Cref{prop:popular-improving}}
\label{sec:proof:popular-improving}

For a promotion $x$-path $P$,
let $\profit_B'(P) := \profit_B(P) - \alpha_B(x)$.
The following lemma says that 
$-\profit_B$ and $-\profit_B'$ represent
the weighted differences in preferences with $B$ and some common independent sets,
and that the existence of a promotion path with positive profit implies that of an essential one.
\begin{lemma}\label{cl:exchange}
\begin{enumerate}[label={\textup{(\arabic*)}}]
    \item For any promotion path $P$ with respect to $B$,
    there are common independent sets $I$ and $I'$ such that
$\Delta_w(B, I) = -\profit_B(P)$ and $\Delta_w(B, I') = -\profit_B'(P)$.
    \item If there is a promotion path with respect to $B$ having positive profit,
    then there is an essential promotion path with respect to $B$ having positive profit.
\end{enumerate}
\end{lemma}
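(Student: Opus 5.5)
For (1), I would build $I'$ and $I$ directly from $P=(x_1,y_1,\dots,x_\ell,y_\ell)$ and then verify independence. For $I'$, let $B_P := B - \{x_1,\dots,x_\ell\} + \{y_1,\dots,y_\ell\}$ and remove from it the element $z\in B$ attaining $\beta_B(y_\ell)$, i.e., $z=f(u)$ with $B+y_\ell-f(u)\in\B$ and $w(u)=\beta_B(y_\ell)$. Then set $I' := B - x_1 + y_1 - x_2 + y_2 - \cdots - x_\ell + y_\ell - z$. Independence in $\M_{\rm part}$ is immediate: every $v_k$ swaps $x_k$ for $y_k$, $u$ loses its element, and the $v_k$ are distinct by \Cref{cor:improving-dipath}. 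If $u=v_k$ for some $k$, I would instead stop the path at that point; this is the degenerate case, to be checked separately. Counting votes gives $\Delta_w(B,I') = -\sum_k w(v_k) + w(u) = -\profit'_B(P)$. For $I$, I would additionally prepend an element $y_0\in A_B(x_1)$ with $y_0=f(u_0)$ and $w(u_0)=\alpha_B(x_1)$, which is possible when $\alpha_B(x_1)>0$. I would add $y_0$ and delete $B(u_0)$, which is not a first element because $y_0=f(u_0)\notin B$. The vote of $u_0$ then contributes $-w(u_0)$, since $f(u_0)\succ_{u_0} B(u_0)$ by \Cref{lem:f(v)>B(v)}(1) and well-formedness. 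Matroid-wise, the exchanges are $(y_0\to x_1),(y_1\to x_2),\dots,(y_{\ell-1}\to x_\ell),(y_\ell\to z)$, and the set $B(u_0)$ removed for partition purposes is simply dropped, giving an independent subset.

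The key step is matroid independence of $B+\{y_0,y_1,\dots,y_\ell\}-\{x_1,\dots,x_\ell,z\}$, which I would get from \Cref{lem:uniquematching}. It suffices that the exchange graph has a unique perfect matching. I would show that, after ordering the $y$'s and $x$'s by weight level, the exchange graph is triangular. \Cref{cor:improving-dipath} gives $w(v_2)<\cdots<w(v_\ell)$, and \Cref{cor:<} places $C(B,y_k)-y_k$ inside $F_{\le i_k}$ or $F_{\le i_k-1}$. Hence $y_k$ cannot exchange with $x_j$ for $j>k+1$ when $x_j$ has a strictly lower level. For the backward pairs $j\le k$, I would simply take $P$ to be a shortest such path, or a path without shortcuts, so that no chord $(y_k,x_j)$ with $j<k+1$ exists; otherwise we shortcut and keep the profit at least as large. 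I expect this part to be the main obstacle. Shortcutting changes the profit, and weights increase along the path, so removing middle vertices may lower the sum. The argument must instead use the weight monotonicity to rule out the bad chords rather than just delete them. The same care is needed for chords from $y_\ell$ to $x_j$ and from $y_0$ to $x_j$, which I would handle using the extremal choices of $z$ and $y_0$ (minimum and maximum weight).

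For (2), take a positive-profit promotion path $P$ with the fewest arcs that violate essentiality. If some arc $(y_k,x_{k+1})$ has $\alpha_B(x_{k+1})\ge\beta_B(y_k)$, split $P$ there into $P_1$, ending at $y_k$, and $P_2$, starting at $x_{k+1}$. By \Cref{obs:potential}, $\profit_B(P)=\profit_B(P_1)+\profit_B(P_2)+\beta_B(y_k)-\alpha_B(x_{k+1})\le \profit_B(P_1)+\profit_B(P_2)$. So one of $P_1,P_2$ is a promotion path with positive profit and fewer non-essential arcs. $P_1$ begins at $x_1\in B$ and $P_2$ begins at $x_{k+1}\in B$, so both are promotion paths, which contradicts minimality. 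Hence some essential promotion path has positive profit.
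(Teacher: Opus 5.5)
Your part (2) is correct and matches the paper in substance. Minimizing the number of non-essential arcs works as well as the paper's minimization of $\ell$, because whether $(y_k,x_{k+1})$ is essential depends only on $\alpha_B(x_{k+1})$ and $\beta_B(y_k)$, so subpaths inherit it.

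Part (1), however, has a genuine gap, exactly at the step you flag as the main obstacle. Uniqueness of the perfect matching $\{y_0,x_1\},\{y_1,x_2\},\dots,\{y_\ell,z\}$ needs only the absence of \emph{backward} arcs $(y_k,x_{k'})$ with $k'\le k$. Forward chords $(y_k,x_j)$ with $j>k+1$, arcs out of $y_0$, and arcs from earlier $y_k$ into $z$ all keep the graph triangular and are harmless.

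Your level argument targets the wrong chords. For $j>k+1$ we have $w(v_j)>w(v_{k+1})$, so $x_j\in F_{\le i_k-1}$ and \Cref{cor:<} does not forbid that chord. For the backward chords you propose taking $P$ shortest or shortcut-free, and this is not available for two reasons. First, (1) is claimed for \emph{every} promotion path $P$, with the exact value $-\profit_B(P)$. Second, a backward chord closes a cycle rather than giving a shortcut, so minimal length would not exclude it anyway. The same objection applies to your ``degenerate case'' $u=v_k$: stopping the path there changes the profit.

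The missing claim has a short proof. Suppose $(y_k,x_{k'})$ is an arc of $D(B)$ with $k'\le k$.
\begin{enumerate}
\item By \Cref{cor:<}~(1),(2), $x_{k'}\in C(B,y_k)-y_k\subseteq F$.
\item Hence $x_k\in F$: by \Cref{cor:improving-dipath} if $k\ge 2$, and trivially if $k=k'=1$.
\item So $x_k=f(v_k)\prec_{v_k}y_k$, and \Cref{cor:<}~(2) gives $x_{k'}\in F_{\le i_k-1}$, i.e., $w(v_{k'})>w(v_k)$.
\item This contradicts $w(v_{k'})\le w(v_k)$ from \Cref{cor:improving-dipath}.
\end{enumerate}
This single claim gives the triangular structure needed for \Cref{lem:uniquematching}. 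It also shows $z\neq x_k$ for all $k\in[\ell]$, so $u\notin\{v_1,\dots,v_\ell\}$ and your degenerate case never occurs.

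You should also check distinctness for $I$. We have $u_0\neq v_k$ because $f(u_0)\notin B$ while $x_k=f(v_k)\in B$; for $k=1$ this uses $x_1\in F$ whenever $\alpha_B(x_1)>0$. And $u_0\neq u$ because $f(u)\in B$. The extremal choices of $z$ and $y_0$ play no role in the independence argument; they matter only for the vote count.
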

\begin{proof}
(1)
Let $P = (x_1, y_1, x_2, y_2, \dots, x_\ell, y_\ell)$
be an arbitrary promotion path.
We first show that there is a common independent set $I'$ satisfying $\Delta_w(B, I') = -\profit_B'(P)$.
Let $v_{\ell+1} \in V$ satisfying $x_{\ell+1} := f(v_{\ell+1}) \in B$, $B + y_{\ell} - x_{\ell+1} \in \B$, and $\beta_B(y_{\ell}) = w(v_{\ell+1})$;
such $v_{\ell+1}$ always exists by the definition of $\beta_B$.
We now claim that no arcs of the form $(y_k, x_{k'})$ with $1 \le k' \le k \le \ell$ exist in $D(B)$.
Indeed, suppose, to the contrary, that $(y_k, x_{k'})$ with $1 \leq k' \le k \le \ell$ exists in $D(B)$.
Since $x_{k'}$ belongs to $F$ by $x_k \prec_{v_k} y_k$ and \Cref{cor:<} (1) and (2),
we have $x_k \in F$ by $k' \leq k$ and \Cref{cor:improving-dipath}.
Thus,
we obtain $w(v_{k'}) > w(v_k)$ by \Cref{cor:<}~(2),
which contradicts $w(v_{k'}) \leq w(v_k)$ by \Cref{cor:improving-dipath}.
Note that, since $D(B)$ contains the arc $(y_\ell, x_{\ell+1})$, the element $x_{\ell+1}$ is distinct from $x_1,\ldots, x_\ell$ by 
this claim. 
Therefore, in the bipartite graph $G = (X, Y; E')$ defined by 
$X = \{x_2, \dots , x_{\ell+1}\}$, $Y = \{y_1, \dots , y_{\ell}\}$, and $E' = \{\{y, x\} \mid x \in X,\ y \in Y,\ B-x+y \in \B\}$, 
there exists a unique perfect matching $\{\{y_1, x_2\}, \{y_2, x_3\}, \dots , \{y_{\ell}, x_{\ell+1}\}\}$,
which implies $(B \setminus \{ x_2, \dots, x_{\ell}, x_{\ell+1} \}) \cup \{y_1, \dots, y_\ell \} \in \B$ by \Cref{lem:uniquematching}.

Define $I' := (B \setminus \{ x_1, \dots, x_\ell, x_{\ell+1} \}) \cup \{y_1, \dots, y_\ell \}$.
Then, $I' \in \I$ by the above argument, and it is obvious that $I' \in \I_{\rm part}$,
i.e.,
$I'$ is a common independent set.
Furthermore,
since $x_k \prec_{v_k} y_k$ for each $k \in [\ell]$,
we have
\begin{align*}
    \Delta_w(B, I') = -\sum_{k = 1}^\ell w(v_k) + w(v_{\ell+1}) =- \profit_B'(P),
\end{align*}
where the second equality follows from $\beta_B(y_{\ell}) = w(v_{\ell+1})$.

We then show that there is a common independent set $I$ satisfying $\Delta_w(B, I) = -\profit_B(P)$.
If $A_B(x_1) = \emptyset$, or equivalently, $\alpha_B(x_1) = 0$,
then we obtain $\profit_B(P) = \profit_B'(P)$.
Thus, $I := I'$ is a desired common independent set.
Suppose that $A_B(x_1) \neq \emptyset$.
We pick $v_0 \in V$ satisfying $y_0 := f(v_{0}) \in A_B(x_1)$ and $\alpha_B(x_1) = w(v_{0})$
(such $v_{0}$ exists by the definition of $\alpha_B$).
Then we have $x_0 := s(v_0) \in B$,
since $y_0 = f(v_0) \in A_B(x_1) \subseteq E \setminus B$
and $B$ is well-formed.
Note that $v_{0}$ is distinct from $v_1,\ldots, v_\ell$, since $f(v_k) = x_k \in B$ for $k \in [\ell]$ by \Cref{cor:<} (2), while $f(v_0) = y_0 \not\in B$.
In this case,
we define
$I := (B \setminus \{ x_0, x_1, \dots, x_\ell, x_{\ell+1} \}) \cup \{y_0, y_1, \dots, y_\ell \}$.
By the same argument as that for $I'$,
we can see that $I$ is a common independent set satisfying
\begin{align*}
    \Delta_w(B, I) = -w(v_0) - \sum_{k = 1}^\ell w(v_k) + w(v_{\ell+1}) = -\profit_B(P),
\end{align*}
in which we use $x_0 = s(v_0) \prec_{v_0} f(v_0) = y_0$
and
$\alpha_B(x_1) = w(v_0)$.

(2)
Suppose that there is a promotion path $P = (x_1, y_1, \dots, x_\ell, y_\ell)$ with respect to $B$ having positive profit.
Among such paths, choose $P$ so that $\ell$ is minimized.
We show that each arc $(y_k, x_{k+1})$ is essential for $k \in [\ell-1]$,
i.e., $P$ is essential.

Assume to the contrary that there exists an index $k \in [\ell-1]$ such that $\beta_B(y_{k}) \le \alpha_B(x_{k+1})$.
Let $P' := (x_1, y_1, \dots, x_k, y_k)$ and $P'':= (x_{k+1}, y_{k+1}, \dots, x_\ell, y_\ell)$,
which are promotion paths with respect to $B$.
Then, by \Cref{obs:potential},
we obtain
\begin{align*}
    0 < \profit_B(P) &= \profit_B(P') + (\beta_B(y_{k}) - \alpha_B(x_{k+1})) + \profit_B(P'')\\
    &\leq \profit_B(P') + \profit_B(P''),
\end{align*}
which implies that $\profit_B(P') > 0$ or $\profit_B(P'') > 0$.
This contradicts the minimality of $\ell$.
%
\end{proof}

We are now ready to prove \Cref{prop:popular-improving}. 
The only-if part immediately follows from \Cref{cl:exchange}~(1).
Indeed, if there is an (essential) promotion path $P$ with positive profit,
then there is a common independent set $I$ satisfying 
$\Delta_w(B, I) = -\profit_B(P) < 0$ by \Cref{cl:exchange}~(1),
which implies that $B$ is not popular.

In the following, we show the if part.
Suppose that $B$ is not popular.
By \Cref{cl:exchange}~(2),
it suffices to show the existence of a (not necessarily essential) promotion path with respect to $B$ having positive profit.

By \Cref{lem:negativecycle}, 
the auxiliary digraph $D(B)$ contains a dicycle $C$ of positive weight with respect to $w_B$.
We now claim that $C$ contains at least one element with negative weight. 
Indeed, if otherwise, i.e.,
all elements in
$C = (x_1, y_1, x_2, y_2, \dots, x_\ell)$ with $x_1 = x_\ell \in B$ have nonnegative weight,
then $w_B(y_k) > 0$ holds for each $k$.
Note that $w_B(y_k) > 0$ implies $x_k \prec_{v_k} y_k$ for some $v_k \in V$.
Thus, we have $x_k \in F$ for all $k$ by \Cref{cor:<}~(1) and~(2),
which implies
$w(v_1) < w(v_2) < w(v_3) < \dots < w(v_\ell)$ by \Cref{cor:<}~(2).
This contradicts $v_\ell =v_1$.


Thus,
$C$ contains at least one element with negative weight. 
Using the elements having negative weight, the dicycle $C$ can be decomposed into a collection $\mathcal{P}$ of vertex-disjoint dipaths such that, 
for each dipath $P' \in \mathcal{P}$, the last element has negative weight, while all other elements have nonnegative weight.
This implies that the initial and last elements of each dipath are in $B$ and in $E \setminus B$, respectively.
For $P' \in \mathcal{P}$, let $V(P')$ denote the set of vertices (elements) in $P'$. 
Since 
\[
 \sum_{P' \in \mathcal{P}} \sum_{e \in V(P')} w_B(e) = \sum_{e \in V(C)} w_B(e) > 0, 
\]
there exists $P' \in \mathcal{P}$ such that $\sum_{e \in V(P')} w_B(e) > 0$. 
Let $P' = (x_1, y_1, \dots, x_\ell, y_\ell, x_{\ell+1}, y_{\ell+1})$ be such a dipath,
where $w_B(y_k) > 0$ for $k \in [\ell]$ and $w_B(y_{\ell+1}) < 0$. 
Let $v_k \in V$ denote the applicant with $x_k, y_k \in E_{v_k}$ for $k \in [\ell+1]$.
Since $w_B(y_k) > 0$ implies $x_k \prec_{v_k} y_k$ for $k \in [\ell]$,
the subpath $P := (x_1, y_1, \dots, x_\ell, y_\ell) = P' - x_{\ell+1} - y_{\ell+1}$ is a promotion path.
Moreover,
since \Cref{cor:<} (1) and (2) imply $x_{\ell +1} \in F$, 
we have $\beta_B(y_\ell) \leq w(v_{\ell+1})$ by the definition of $\beta_B(y_\ell)$.
Thus, we obtain
\begin{align*}
    \profit_B(P) \geq \profit_B'(P) = \sum_{k = 1}^\ell w(v_k) - \beta_B(y_\ell) \geq \sum_{k = 1}^{\ell} w(v_k) - w(v_{\ell+1}) = \sum_{e \in V(P')} w_B(e) > 0,
\end{align*}
which implies that $P$ is a promotion path with positive profit.
This completes the proof of \Cref{prop:popular-improving}.

\subsection{Proof of \Cref{prop:positivity}}
\label{sec:proof:positivity}
For any $y \in \cl(F)$ and $y$-excluding well-formed common base $B$,
we define
\begin{align*}
    C^*(B, y) := \{ x \in C(B, y) - y \mid \alpha_B(x) < \beta_B(y) \},
\end{align*}
which is the set of the heads $x$ of essential arcs of the form $(y,x)$ with respect to $B$.
The following lemma says that $\alpha_B(x)$, $\beta_B(y)$, and $C^*(B, y)$ are independent of 
the choice of an $x$-including/$y$-excluding well-formed common base $B$.
\begin{lemma}\label{lem:xy}
\begin{enumerate}[label={\textup{(\arabic*)}}]
    \item Let $x \in F \cup S$.
    For any $x$-including well-formed bases $B$ and $B'$,
    we have $\alpha_B(x) = \alpha_{B'}(x)$.
    \item Let $y \in \cl(F)$.
    For any $y$-excluding well-formed bases $B$ and $B'$,
    we have $\beta_B(y) = \beta_{B'}(y)$ and $C^*(B, y) = C^*(B', y)$.
\end{enumerate}
\end{lemma}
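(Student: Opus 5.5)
The plan is to show that each of $\alpha_B(x)$, $\beta_B(y)$ and $C^*(B,y)$ equals an explicit quantity defined only from the nested sets $F_{\le 1}\subseteq \cdots \subseteq F_{\le p}$ and the rank function. Independence of $B$ then follows at once. Throughout I will use that, for a well-formed $B$ and every $k$, the set $B\cap F_{\le k}$ is a basis of $F_{\le k}$ (\Cref{rem:clequivalent}).

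\emph{$\beta_B(y)$.} Let $y\in\cl(F)\setminus B$, and let $i^*$ be minimal with $y\in\cl(F_{\le i^*})$. By \Cref{lem:lastarc}, $C(B,y)-y\subseteq F_{\le i^*}$ and $(C(B,y)-y)\cap F_{i^*}\ne\emptyset$. Every element of $C(B,y)-y$ lies in $B\cap F$, so it equals $f(v)$ for some $v$. Those $v$ have weight at least $w_{i^*}$, and at least one has weight exactly $w_{i^*}$. Hence $\beta_B(y)=w_{i^*}$, which depends only on $y$.

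\emph{$\alpha_B(x)$.} Fix $x\in F\cup S$; if $\alpha_B(x)>0$ then $x\in F$. For $k\in[p]$ with $x\in F_{\le k}$, compare $\rank(F_{\le k}-x)$ with $\rank(F_{\le k})$. I define $k(x)$ as the minimum $k$ with $x\in F_{\le k}$ and $\rank(F_{\le k}-x)=\rank(F_{\le k})$, and I claim $\alpha_B(x)=w_{k(x)}$, or $0$ if no such $k$ exists.

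Take $f(v)\in A_B(x)$ with $v\in V_i$. By \Cref{cor:<}(3), $x\in C(B,f(v))-f(v)\subseteq F_{\le i}$. Also $f(v)\notin\cl(B\cap F_{\le i}-x)$, while $f(v)\in F_{\le i}-x$. So $F_{\le i}-x$ is not spanned by $B\cap F_{\le i}-x$, which has size $\rank(F_{\le i})-1$. Therefore $\rank(F_{\le i}-x)=\rank(F_{\le i})$, giving $i\ge k(x)$.

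Conversely, suppose $\rank(F_{\le k}-x)=\rank(F_{\le k})$ with $k=k(x)$. Then some $z\in F_{\le k}-x$ lies outside $\cl(B\cap F_{\le k}-x)$. Such $z$ is not in $B$, and $x\in C(B,z)-z$, so $z=f(v)\in A_B(x)$ for some $v$ at a level $i\le k$, and by the previous paragraph $i\ge k$. So $A_B(x)$ contains an element $f(v)$ with $v\in V_{k(x)}$ and none at a smaller level, which gives the claimed formula for the maximum weight.

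\emph{$C^*(B,y)$.} By the two formulas, $x\in C^*(B,y)$ if and only if $x\in C(B,y)-y$ and either $k(x)$ is undefined or $k(x)>i^*$. For $x\in C(B,y)-y\subseteq F_{\le i^*}$ with $k(x)>i^*$ or undefined, we have $\rank(F_{\le i^*}-x)<\rank(F_{\le i^*})$, so $x$ is a coloop of $F_{\le i^*}$. Hence $x$ lies in every basis of $F_{\le i^*}$, in particular in $B$ and in $B'$.

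Moreover, $B\cap F_{\le i^*}-x$ is then a basis of $F_{\le i^*}-x$. Combining this with \Cref{lem:clcircuit}, $x\in C(B,y)$ if and only if $y\notin\cl(F_{\le i^*}-x)$. Thus
\[
C^*(B,y)=\{x\in F_{\le i^*}\mid x \text{ a coloop of } F_{\le i^*},\ k(x)>i^* \text{ or undefined},\ y\notin\cl(F_{\le i^*}-x)\},
\]
which is independent of $B$.

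The main obstacle is the $\alpha$ step. One must guess the right invariant $k(x)$ and verify both directions carefully, in particular that a witness found in $F_{\le k}$ automatically sits at level exactly $k$ and lies outside $B$.
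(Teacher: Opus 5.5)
Your proof is correct. For part (1) and for $\beta_B(y)$ you follow essentially the paper's route. Your $w_{k(x)}$ is exactly the paper's invariant $\alpha(x)=\max\{w_i \mid x\in F_{\le i},\ x\in\cl(F_{\le i}-x)\}$, and your two inequalities mirror the paper's. The only small difference is that you find the witness $z$ via \Cref{lem:clcircuit} instead of an augmentation argument, and you additionally pin it to level exactly $k(x)$, which is not needed.

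You genuinely diverge on $C^*(B,y)$.
\begin{itemize}
\item The paper proves the inclusion $C^*(B,y)\subseteq C^*(B',y)$ by a simultaneous exchange between $B'$ and $B-x+y$ with respect to $y$. This gives $x'$ with $B'+y-x'\in\B$ and $B-x+x'\in\B$. It then rules out $x'\neq x$ by the sandwich $w(v')\le\alpha_B(x)<\beta_B(y)=\beta_{B'}(y)\le w(v')$, and concludes by symmetry.
\item You instead compute $C^*(B,y)$ outright. The condition $\alpha_B(x)<\beta_B(y)$, together with $x\in F_{\le i^*}$, forces $x$ to be a coloop of $F_{\le i^*}$. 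Hence $x$ lies in every well-formed common base, and $B\cap F_{\le i^*}-x$ spans $F_{\le i^*}-x$. Membership in $C(B,y)$ then reduces, via \Cref{lem:clcircuit}, to the $B$-free condition $y\notin\cl(F_{\le i^*}-x)$.
\end{itemize}

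What each route buys:
\begin{itemize}
\item The paper's argument is shorter. It needs nothing about $\alpha_B$ beyond its definition and part (1), so no explicit formula.
\item Yours avoids the exchange axiom and uses only rank and closure. It also yields extra structure: an intrinsic description of $C^*(B,y)$, and the fact that the heads of essential arcs leaving $y$ belong to every well-formed common base. In your display the coloop condition is redundant, since it follows from $x\in F_{\le i^*}$ and the condition on $k(x)$.
\end{itemize}

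Two steps you state without justification are not gaps, since each is a one-line consequence of what you already have:
\begin{itemize}
\item $f(v)\notin\cl(B\cap F_{\le i}-x)$ follows from $B-x+f(v)\in\B$.
\item $x\in C(B,z)-z$ follows from \Cref{lem:clcircuit} applied with $X=F_{\le k}$ and with $X=F_{\le k}-x$.
\end{itemize}
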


\begin{proof}
(1)
It suffices to show that, for any $x$-including well-formed base $B$,
the value $\alpha_B(x)$ coincides with
\begin{align*}
    \alpha(x) :=
    \begin{cases}
         \max\{w_i \mid i \in [p],\ x \in F_{\le i},\ x \in \cl (F_{\le i} - x) \} & \text{if such $i$ exists},\\
         0 & \text{if such $i$ does not exist},
    \end{cases}
\end{align*}
which is determined independently of $x$-including well-formed bases.
Note that $\alpha(x) > 0$ only if $x \in F$.

We first prove that $\alpha_B(x) \ge \alpha(x)$. 
If $\alpha(x) = 0$, then the inequality is obvious. 
Suppose that $\alpha(x) > 0$ holds and let $i \in [p]$ be the index such that $\alpha(x) = w_i$, 
$x \in F_{\le i}$, and $x \in \cl (F_{\le i} - x)$. 
Note that $x \in F_{\le i}$ and $x \in \cl (F_{\le i} - x)$ imply ${\rm rank}(F_{\le i} - x) =  {\rm rank}(F_{\le i})$. 
By Remark~\ref{rem:clequivalent}, we obtain 
\[
{\rm rank}(F_{\le i} - x) =  {\rm rank}(F_{\le i}) = |B \cap F_{\le i}| > |(B \cap F_{\le i}) - x|, 
\]
which implies that $F_{\le i}-x$ contains an independent set of $\M$ that is larger than $(B \cap F_{\le i}) - x$. 
Therefore, there exists $y \in F_{\le i} \setminus B$ such that $(B \cap F_{\le i}) - x + y \in \I$. 
By the augmentation property of matroids, we can add elements of $B$ to $(B \cap F_{\le i}) - x + y$ 
to obtain an independent set of size $|B|$. 
However, since 
\[
|(B \cap F_{\le i}) - x + y| = |B \cap F_{\le i}| = {\rm rank}(F_{\le i}), 
\]
we cannot add elements from $F_{\le i}$. 
Therefore, we need to add all the elements in $B \setminus F_{\le i}$, that is, 
\[
((B \cap F_{\le i}) - x + y) \cup (B \setminus F_{\le i}) = B - x + y
\]
is an independent set (and hence also a base) of $\M$. 
Since $y \in F_{\le i} \setminus B$, there exists $v \in V_{\le i}$ such that $f(v) = y$ and $w(v) \ge w_i = \alpha(x)$. 
Therefore, $\alpha_B(x) \ge w(v) \ge \alpha(x)$ holds. 

We next prove that $\alpha(x) \ge \alpha_B(x)$. 
If $\alpha_B(x) = 0$, then the inequality is obvious. 
Suppose that $\alpha_B(x) > 0$ holds and let $v \in V$ be an applicant such that $\alpha_B(x) = w(v)$, $f(v) \not\in B$, and
$B - x + f(v) \in \I$. Let $i \in [p]$ be the index with $v \in V_i$. 
Since it holds that  
\begin{align*}
    {\rm rank}(F_{\le i}) &\ge {\rm rank}(F_{\le i}-x) \ge |(B - x + f(v)) \cap F_{\le i}| \\ 
                          &= |B \cap F_{\le i}| - |\{x\} \cap F_{\le i}| + 1 \ge |B \cap F_{\le i}| = {\rm rank}(F_{\le i})
\end{align*}
by Remark~\ref{rem:clequivalent}, 
all the inequalities are tight.  
In particular, ${\rm rank}(F_{\le i}) = {\rm rank}(F_{\le i}-x)$ and $|\{x\} \cap F_{\le i}| = 1$ hold, and hence $x \in F_{\le i}$ and $x \in \cl (F_{\le i} - x)$. 
Therefore, $\alpha(x) \ge w_i = w(v) = \alpha_B(x)$ holds. 

(2)
Take any $y$-excluding well-formed common bases $B$ and $B'$.
By the definition~\eqref{eq:beta_B} and \Cref{lem:lastarc},
we have $\beta_B(y) = \beta_{B'}(y) = w_{i^*}$,
where $i^*$ denotes the minimum index such that $y \in \cl(F_{\leq i^*})$.
This implies the former assertion of~(2).

We then show that $C^*(B, y) \subseteq C^*(B', y)$;
the reverse inclusion $C^*(B, y) \supseteq C^*(B', y)$ follows from symmetry,
which implies the latter assertion of~(2).

Take any $x \in C^*(B, y) \subseteq C(B, y) - y$,
i.e., $B - x + y \in \B$ and $\alpha_B(x) < \beta_B(y)$;
our aim is to prove that $x \in C^*(B', y)$.
By \Cref{lem:lastarc},
we have $x \in F$.
By the simultaneous exchange property of matroid bases for $B'$ and $B - x + y$ with respect to $y$, there exists $x' \in B' \setminus (B - x + y)$ such that
\begin{align}
    &B' + y - x' \in \B,\label{eq:B'}\\
    &(B - x + y) - y + x' = B - x + x' \in \B\label{eq:B''}.
\end{align}
Note that $x'$ belongs to $C(B', y) - y$ by~\eqref{eq:B'}, implying that $x' \in F$ by \Cref{lem:lastarc}.

We now claim that $x' = x$; otherwise, 
letting $v' \in V$ denote the applicant such that $x' = f(v')$,
we obtain
\begin{align*}
    w(v') \leq \alpha_B(x) < \beta_B(y) = \beta_{B'}(y) \leq w(v'),
\end{align*}
a contradiction.
Here, the first inequality follows from~\eqref{eq:B''}, 
$x' \in F$, and $x' \neq x$ (implying $x' \in A_B(x)$)
and the definition~\eqref{eq:alpha_B} of $\alpha_B$,
the second (strict) inequality follows from $x \in C^*(B, y)$,
the equality follows from the former assertion of~(2),
and the last inequality follows from~\eqref{eq:B'} with $x' \in F$ and the definition~\eqref{eq:beta_B} of $\beta_{B'}$.
Since $x' = x$, we obtain $B'+y-x \in \B$ by (\ref{eq:B'}), that is, $x \in C(B', y)-y$. 
Therefore,
$x \in C^*(B', y)$ follows from $\alpha_{B'}(x) = \alpha_B(x) < \beta_B(y) = \beta_{B'}(y)$,
where the first and last equalities follow from the assertion~(1) and the former assertion of~(2),
respectively.
\end{proof}

We are now ready to show \Cref{prop:positivity}.
Let $x \in F \cup S$ and take any $x$-including well-formed common bases $B$ and $B'$.
It suffices to prove that
\begin{itemize}
    \item the set of all essential promotion $x$-paths with respect to $B$ is the same as that with respect to $B'$; and
    \item for any essential promotion $x$-path $P$ with respect to $B$ (or equivalently, with respect to $B'$), we have $\profit_B(P) = \profit_{B'}(P)$.
\end{itemize}

Take an arbitrary essential promotion $x$-path
$P = (x = x_1, y_1, x_2, y_2, \dots, x_\ell, y_\ell)$ with respect to $B$,
where $v_k \in V$ denotes the applicant with $x_k \in E_{v_k}$
for $k \in [\ell]$.
Since $x_1 = x \in B'$ and $x_1 \prec_{v_1} y_1$,
we have 
$y_1 \notin B'$.
Thus, the arc $(x_1, y_1)$ exists also in $D(B')$.
In addition, since $(y_1, x_2)$ is an essential arc in $D(B)$,
we obtain $x_2 \in C^*(B, y_1) = C^*(B', y_1)$,
where the equality follows from \Cref{lem:xy}~(2).
This implies that the arc $(y_1, x_2)$ exists also in $D(B')$
and is essential. 
In particular, $x_2 \in B'$. 
By repeating this argument,
we can conclude that $P$ is an essential promotion $x$-path with respect to $B'$. 
Therefore, the set of all essential promotion $x$-paths with respect to $B$ is included in that with respect to $B'$.
By symmetry, the reverse inclusion also holds.
Thus, we obtain the first bullet.

For any essential promotion $x$-path $P$ with respect to $B$ (or equivalently, with respect to $B'$),
we have
\begin{align*}
\profit_B(P) &= \alpha_B(x_1) + \sum_{k = 1}^\ell w(v_k) - \beta_B(y_\ell)\\
&= \alpha_{B'}(x_1) + \sum_{k = 1}^\ell w(v_k) - \beta_{B'}(y_\ell)\\
&= \profit_{B'}(P),
\end{align*}
in which the second equality follows from \Cref{lem:xy}.
Thus, we obtain the second bullet.

This completes the proof of \Cref{prop:positivity}. 

\subsection{Proof of \Cref{prop:algo:well-formed}}
\label{sec:proof:algo:well-formed}
Our goal is to compute a common base $B$ such that
\begin{align}
    &X \subseteq B \subseteq (F \cup S) \setminus Y, \label{eq:inc-exc}\\
    &|B \cap F_{\le i}| = \rank(F_{\le i}) \qquad (i \in [p]). \label{eq:rank}
\end{align}
Indeed, by \Cref{def:wellformed} and \Cref{rem:clequivalent},
a common base $B$ is well-formed if and only if $B \subseteq F \cup S$ and $B$ satisfies the condition~\eqref{eq:rank}.


We define $\theta, \varphi \in \Z^E$ by
\begin{align*}
    \theta(e) :=
    \begin{cases}
        -1 & \text{if $e \in Y \cup (E \setminus (F \cup S))$}, \\
        p+1-i & \text{if $e \in F_{i} \setminus Y$ with $i \in [p]$}, \\
        0 & \text{otherwise},
    \end{cases}
    \qquad
    \varphi(e) :=
    \begin{cases}
    1 & \text{if $e \in X$},\\
    0 & \text{otherwise},
    \end{cases}
    \qquad (e \in E).
\end{align*}
We claim that a common base $B$ satisfies the conditions~\eqref{eq:inc-exc} and~\eqref{eq:rank}
if and only if
\begin{align*}
    \sum_{e \in B} (\theta(e) + \varphi(e)) = \sum_{i=1}^p {\rm rank}(F_{\le i}) + |X|.
\end{align*}
Indeed, by a direct computation, for any common base $B$, we have that
\begin{align*}
    \sum_{e \in B} (\theta(e) + \varphi(e)) &= \left(\sum_{i=1}^p (p+1-i) \cdot |B \cap (F_{i} \setminus Y)| - |B \cap Y| - |B \setminus (F \cup S)|\right) + |B \cap X|\\
&\leq  \sum_{i=1}^p (p+1-i) \cdot |B \cap F_{i}| + |X|\\
&=  \sum_{i=1}^p |B \cap F_{\le i}| + |X|\\
&\leq  \sum_{i=1}^p \rank(F_{\le i}) + |X|. 
\end{align*}
Here, the first inequality is tight if and only if \eqref{eq:inc-exc} holds,
and
the last inequality is tight if and only if \eqref{eq:rank} holds. 

Consequently, if there exists a well-formed common base satisfying \eqref{eq:inc-exc} and \eqref{eq:rank}, then any common base maximizing
$\sum_{e \in B} (\theta(e) + \varphi(e))$ must satisfy these conditions.  
Thus, it suffices to compute a maximum-weight common base with respect to $\theta + \varphi$. 
This can be done in polynomial time using a weighted matroid intersection algorithm; see e.g.,~\cite[Theorem 41.8]{Schrijver2003}, 
which completes the proof of \Cref{prop:algo:well-formed}.

\subsection{Proof of \Cref{prop:positive-algo}}
\label{sec:proof:positive-algo}


Fix an arbitrary $x^* \in F \cup S$.
It suffices to show that we can determine if $x^*$ is positive in polynomial time.

First, we check if there is an $x^*$-including well-formed common base and compute it if one exists;
this can be done in polynomial time by \Cref{prop:algo:well-formed}.
If there is no $x^*$-including one, then we conclude that $x^*$ is not positive.

Suppose that we obtain an $x^*$-including well-formed common base $B$.
Then, we construct the bipartite digraph $D^*(B)$ with arc lengths as follows.
The vertex set is $E$, with bipartition $(B, E \setminus B)$, 
and the arc set consists of
\begin{itemize}
    \item the arcs $(x, y)$ of length $\alpha_B(x) + w(v) - \beta_B(y)$ for $x \in B$ and $y \in E \setminus B$ such that $x \prec_{v} y$ for some $v \in V$; and
    \item the arcs $(y, x)$ of length $\beta_B(y) - \alpha_B(x)$ for $x \in B$ and $y \in E \setminus B$ such that $x \in C(B, y) - y$ and $\alpha_B(x) < \beta_B(y)$.
\end{itemize}
Note that the arcs $(y, x)$ in the second bullet are exactly the essential arcs.
Hence, $D^*(B)$ (without arc length) is the subgraph of $D(B)$ consisting of the arcs that can appear in some essential promotion path with respect to $B$.

It is clear from the construction of $D^*(B)$ that any dipath $P = (x = x_1, y_1, x_2, y_2, \dots, x_\ell, y_\ell = y)$ in $D^*(B)$ from $x \in B$ to $y \in E \setminus B$ is an essential promotion path with respect to $B$,
and that the length of $P$ coincides with $\profit_B(P)$; see \Cref{obs:potential}.
Thus, by \Cref{prop:positivity}, $x^*$ is positive if and only if
there is a dipath of positive length in $D^*(B)$ that starts at $x^*$ and ends with some $y \in E \setminus B$.

Furthermore, we can see that $D^*(B)$ forms a directed acyclic graph (DAG).
Indeed, suppose to the contrary that $D^*(B)$ contains a directed cycle $(x_1, y_1, x_2, y_2, \dots, x_\ell, y_\ell, x_{\ell +1} = x_1)$,
where $x_k \prec_{v_k} y_k$ for $k \in [\ell]$.
Since $x_k \prec_{v_k} y_k$ and $x_{k+1} \in C(B, y_k) - y_k$ for all $k \in [\ell]$,
we have $x_2, \dots, x_{\ell}, x_{\ell + 1} = x_1 \in F$
by \Cref{cor:<} (1) and (2).
Since $x_1, \dots , x_{\ell} \in F$, we obtain 
$w(v_1) < w(v_2) < \cdots < w(v_\ell) < w(v_{\ell+1}) = w(v_1)$ by \Cref{cor:<}~(2),
a contradiction.

Clearly, we can construct $D^*(B)$ in polynomial time.
Since $D^*(B)$ is a DAG, we can solve the longest path problem for $D^*(B)$ in polynomial time.
Thus, we can check if $x^*$ is positive in polynomial time,
which completes the proof of \Cref{prop:positive-algo}.

\section{Concluding Remarks}
\label{sec:conclusion}

In this paper, we presented the first polynomial-time algorithm 
for the weighted popular common independent set problem. 
As special cases, our result yields the first polynomial-time algorithms for the weighted popular matching problem 
with matroid constraints and the weighted popular branching problem. 
The main technical contribution of this work is 
the introduction of the notion of an essential promotion path 
and the proof that its existence is independent of the choice of the common base $B$. 
This solution-independent characterization serves as the foundation of our algorithmic framework.

We remark that, given costs $c(e) \in \mathbb{R}$ for each $e \in E$, 
a popular common base $B$ minimizing $\sum_{e \in B} c(e)$ can also be computed in polynomial time using the above characterization. 
Indeed, by a slight modification of \Cref{prop:algo:well-formed}, 
we can compute, in polynomial time, an $X$-including and $Y$-excluding well-formed common base $B$ 
that minimizes $\sum_{e \in B} c(e)$; 
this can be achieved by computing a common base maximizing
$\sum_{e \in B} (K\theta(e) + K\varphi(e) - c(e))$,
where $\theta$ and $\varphi$ are as defined in the proof of \Cref{prop:algo:well-formed}, 
and $K$ is a sufficiently large number. 
Therefore, we can compute an $X^+$-excluding well-formed common base $B$ that minimizes $\sum_{e \in B} c(e)$, 
which is the desired popular common base.

This paper considers only strict preference orders. 
A natural direction for future research is to extend the model to more general preference structures, 
such as preferences with ties or arbitrary partial orders. 
Such extensions have been studied extensively in previous work. 
For example, Mestre \cite{Mes06,Mes14} gave a polynomial-time algorithm 
for the weighted popular matching problem in the presence of ties, 
and Kavitha, Makino, Schlotter, and Yokoi~\cite{KMSY25} gave a polynomial-time algorithm 
for the (unweighted) popular common independent set problem under arbitrary partial-order preferences. 
The question of whether these results can be extended to 
the weighted popular common independent set problem remains an interesting open problem.
Investigating weighted generalizations of 
popular assignments \cite{KKMSS22a} and popular common bases \cite{KMSY25}, 
and a matroidal generalization of 
weighted popular matchings with two-sided preferences \cite{HC24} is also of interest.

\section*{Acknowledgments}
This work was partially supported by the joint project of Kyoto University
and Toyota Motor Corporation, titled “Advanced Mathematical Science for Mobility Society”, 
by JST ERATO Grant Number JPMJER2301, 
and by JSPS KAKENHI Grant Numbers
JP22K17854, 
JP24K02901, 
JP24K14828, 
JP24K21315, 
JP26K21777, 
JP26K21171, 
Japan.

\bibliographystyle{plain}
\bibliography{ref}

\appendix

\section{Basic Lemma for Matroids}


\begin{lemma}\label{lem:remark42}
 Let $B \in \I$ and let $X \subseteq E$. Then the following are equivalent: 
 \begin{enumerate}[label={\textup{(\roman*)}}]
     \item $X \subseteq \cl(B\cap X)$, 
     \item $\cl(X) = \cl(B\cap X)$, 
     \item $|B \cap X| = {\rm rank}(X)$.      
 \end{enumerate} 
\end{lemma}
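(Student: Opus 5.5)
We prove the cycle (i)$\Rightarrow$(ii)$\Rightarrow$(iii)$\Rightarrow$(i), using only the standard monotonicity and idempotence of $\cl$ and the fact that $\rank(\cl(Z)) = \rank(Z)$. Throughout, $B \cap X$ is independent because $B \in \I$, so $\rank(B\cap X) = |B \cap X|$.

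For (i)$\Rightarrow$(ii): from $X \subseteq \cl(B\cap X)$, monotonicity and idempotence give $\cl(X) \subseteq \cl(\cl(B\cap X)) = \cl(B\cap X)$. The reverse inclusion $\cl(B\cap X) \subseteq \cl(X)$ follows from $B\cap X \subseteq X$ and monotonicity.

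For (ii)$\Rightarrow$(iii): we compute
\[
\rank(X) = \rank(\cl(X)) = \rank(\cl(B\cap X)) = \rank(B\cap X) = |B\cap X|.
\]

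For (iii)$\Rightarrow$(i): take any $e \in X$. The set $(B\cap X)+e$ is contained in $X$, so its rank is at most $\rank(X) = |B\cap X| = \rank(B\cap X)$. Since rank is monotone, equality holds, so $\rank((B\cap X)+e) = \rank(B\cap X)$. By the definition of $\cl$, this means $e \in \cl(B\cap X)$. As $e \in X$ was arbitrary, $X \subseteq \cl(B\cap X)$.

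The only step needing care is (ii)$\Rightarrow$(iii), which rests on the rank-preservation property $\rank(\cl(Z)) = \rank(Z)$. If this is not taken as known, it can be checked directly: let $J \subseteq Z$ be a maximal independent subset, so every element of $Z$ lies in $\cl(J)$. If some independent $K \subseteq \cl(Z)$ had $|K| > |J|$, augmentation would give $k \in K \setminus J$ with $J+k \in \I$, so $k \notin \cl(J)$. The argument then reduces to showing $\cl(Z) \subseteq \cl(J)$, i.e., that adjoining $Z$ to $J$ does not increase the closure. This is the submodularity fact $\rank(J+k) = \rank(J)$ whenever $\rank(Z+k) = \rank(Z)$ and $\rank(Z) = \rank(J)$, which follows from rank submodularity applied to $J+k$ and $Z$.
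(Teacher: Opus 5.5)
Your proof is correct and follows the paper's cycle (i)$\Rightarrow$(ii)$\Rightarrow$(iii)$\Rightarrow$(i) essentially verbatim; the only difference is that you spell out (iii)$\Rightarrow$(i) as a direct rank computation, where the paper appeals to maximality of $B\cap X$ among independent subsets of $X$. One slip in your optional last paragraph: submodularity applied to $J+k$ and $Z$ gives at best $\rank(J+k)\ge\rank(J)$, which is the wrong direction, whereas the inequality you need follows from monotonicity alone, $\rank(J+k)\le\rank(Z+k)=\rank(Z)=\rank(J)$.
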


\begin{proof}
If $X \subseteq \cl(B\cap X)$, then 
\[
\cl(X) \subseteq \cl(\cl(B\cap X)) = \cl(B\cap X) \subseteq \cl(X), 
\]
and hence $\cl(X) = \cl(B\cap X)$. Therefore, (i) implies (ii). 
Next, suppose that $\cl(X) = \cl(B\cap X)$.
Since taking closure does not change rank, we have
\[
{\rm rank}(X)={\rm rank}(\cl(X))={\rm rank}(\cl(B\cap X))={\rm rank}(B\cap X).
\]
Since $B\cap X$ is independent, ${\rm rank}(B\cap X)=|B\cap X|$. Hence
(ii) implies (iii). 
Finally, suppose that $|B\cap X|={\rm rank}(X)$.
Since $B \cap X$ is an independent set contained in $X$, condition (iii) is equivalent to that
$B \cap X$ is an inclusionwise maximal independent set contained in $X$. 
This is equivalent to the condition (i). 

Therefore, all the three conditions are equivalent. 
\end{proof}

\end{document}